\newtheorem{theorem}{Theorem}[section]
\newtheorem{lemma}[theorem]{Lemma}
\newtheorem{corollary}[theorem]{Corollary}
\newtheorem{definition}[theorem]{Definition}
\newcommand{\partdiff}[2]{\frac{\partial {#1}}{\partial {#2}}}
\newcommand{\mixdiff}[3]{\frac{\partial^2 {#1}}{{\partial {#2}}{\partial {#3}}}}
\def\E{{\bf E}}
\def\b1{{\bf 1}}
\def\bc{{\bf c}}
\def\be{{\bf e}}
\def\bx{{\bf x}}
\def\by{{\bf y}}
\def\bz{{\bf z}}
\def\bq{{\bf q}}
\def\bu{{\bf u}}
\def\bv{{\bf v}}
\def\RR{{\boldmath R}}
\def\ZZ{{\boldmath Z}}
\def\cI{{\cal I}}
\def\cM{{\cal M}}
\def\cB{{\cal B}}
\def\cF{{\cal F}}
\def\cG{{\cal G}}
\def\DD{{D}}
\title{Symmetry and Approximability of Submodular Maximization Problems\thanks{
An extended abstract \cite{Vondrak09} appeared in IEEE FOCS 2009.
This work was done while the author was at Princeton University.}}
\author{Jan Vondr\'ak \thanks{IBM Almaden Research Center, San Jose, CA 95120 ({\tt jvondrak@us.ibm.com}).}}
\begin{document}

\maketitle

\begin{abstract}
A number of recent results on optimization problems involving submodular
functions have made use of the {\em multilinear relaxation} of the problem.
% \cite{CCPV07,FMV07,Vondrak08,LMNS09,KTS09}.
These results hold typically in the {\em value oracle} model, where the objective
function is accessible via a black box returning $f(S)$ for a given $S$.
We present a general approach to deriving inapproximability results in the
value oracle model, based on the notion of {\em symmetry gap}. Our main result
is that for any fixed instance that exhibits a certain symmetry gap in
its multilinear relaxation, there is a naturally related class of instances
for which a better approximation factor than the symmetry gap would
require exponentially many oracle queries.
This unifies several known hardness results for submodular maximization,
e.g. the optimality of $(1-1/e)$-approximation for
monotone submodular maximization under a cardinality constraint,
%\cite{NW78,Feige98},
and the impossibility of $(\frac12+\epsilon)$-approximation
for unconstrained (nonmonotone) submodular maximization.
% \cite{FMV07}.

As a new application, we consider the problem of maximizing a
nonmonotone submodular function over the bases of a matroid.
A $(\frac16-o(1))$-approximation has been developed for this problem, assuming
that the matroid contains two disjoint bases. %\cite{LMNS09}.
We show that the best approximation one can achieve is indeed
related to packings of bases in the matroid. Specifically, for any $k \geq 2$,
there is a class of matroids of fractional base packing number $\nu = \frac{k}{k-1}$,
such that any algorithm achieving a better than $(1-\frac{1}{\nu}) = \frac{1}{k}$-approximation
for this class would require exponentially many value queries.
In particular, there is no constant-factor approximation for maximizing
a nonmonotone submodular function over the bases of a general matroid.
On the positive side, we present
a $\frac12 (1-\frac{1}{\nu}-o(1))$-approximation algorithm assuming
fractional base packing number at least $\nu$ where $\nu \in (1,2]$.
We also present an improved $0.309$-approximation for maximization of
a nonmonotone submodular function subject to a matroid independence constraint,
improving the previously known factor of $\frac14-\epsilon$. % \cite{LMNS09}.
For this problem, we obtain a hardness of $(\frac12 + \epsilon)$-approximation
for any fixed $\epsilon>0$.
\end{abstract}

%\begin{keywords} 
%Submodular functions, approximation algorithms, query complexity.
%\end{keywords}

%\begin{AMS}
%68W05, 68W20, 68W25.
%\end{AMS}

%\pagestyle{myheadings}
%\thispagestyle{plain}
%\markboth{JAN VONDR\'AK}{SYMMETRY AND APPROXIMABILITY OF SUBMODULAR MAXIMIZATION}

\section{Introduction}

Submodular set functions are defined by the following condition for all pairs of sets $S,T$:
$$ f(S \cup T) + f(S \cap T) \leq f(S) + f(T),$$
or equivalently by the property that the {\em marginal value} of any element,
$f_S(j) = f(S+j)-f(S)$, satisfies $f_T(j) \leq f_S(j)$, whenever
$j \notin T \supset S$. In addition, a set function is called monotone
if $f(S) \leq f(T)$ whenever $S \subseteq T$.
Throughout this paper, we assume that $f(S)$ is nonnegative.

Submodular functions have been studied in the context of combinatorial
optimization since the 1970's, especially in connection with matroids
\cite{E70,E71,NWF78,NWF78II,NW78,W82a,W82b,L83,F97}.
Submodular functions appear mostly for the following two reasons:
(i) submodularity arises naturally in various combinatorial settings,
and many algorithmic applications use it either explicitly or implicitly;
(ii) submodularity has a natural interpretation as the property
of {\em diminishing returns}, which defines an important class of
utility/valuation functions.
Submodularity as an abstract concept is both general enough to be useful for applications
and it carries enough structure to allow strong positive results.
A fundamental algorithmic result is that any submodular function
can be {\em minimized} in strongly polynomial time \cite{FFI01,Lex00}.
%This generalizes the fact that minimum cuts in graphs can be found efficiently.

In contrast to submodular minimization, submodular maximization problems are
typically hard to solve exactly. Consider the classical problem
of maximizing a monotone submodular function subject to a cardinality constraint,
$\max \{f(S): |S|\leq k\}$. It is known that this problem admits a $(1-1/e)$-approximation
\cite{NWF78} and this is optimal for two different reasons:
(i) Given only black-box access to $f(S)$, we cannot achieve a better approximation,
unless we ask exponentially many value queries \cite{NW78}.
This holds even if we allow unlimited computational power.
(ii) In certain special cases where $f(S)$ has a compact representation on the input,
it is NP-hard to achieve an approximation better than $1-1/e$ \cite{Feige98}.
The reason why the hardness threshold is the same in both cases
is apparently not well understood.

An optimal $(1-1/e)$-approximation for the problem $\max \{f(S): |S| \leq k\}$ where $f$ is
monotone sumodular is achieved
by a simple greedy algorithm \cite{NWF78}. This seems to be rather coincidental; for
other variants of submodular maximization, such as unconstrained (nonmonotone)
submodular maximization \cite{FMV07}, monotone submodular maximization subject
to a matroid constraint \cite{NWF78II,CCPV07,Vondrak08}, or submodular
maximization subject to linear constraints \cite{KTS09,LMNS09},
greedy algorithms achieve suboptimal results. A tool which has proven useful
in approaching these problems is the {\em multilinear relaxation}.

\

\noindent{\bf Multilinear relaxation.}
Let us consider a discrete optimization problem $\max \{f(S): S \in \cF\}$, where
$f:2^X \rightarrow \RR$ is the objective function and $\cF \subset 2^X$
is the collection of feasible solutions. In case $f$ is a linear function,
$f(S) = \sum_{j \in S} w_j$, it is natural to replace this problem by a linear
programming problem. For a general set function $f(S)$, however, a linear
relaxation is not readily available. Instead, the following relaxation
has been proposed \cite{CCPV07,Vondrak08,CCPV09}:
For $\bx \in [0,1]^X$, let $\hat{\bx}$ denote a random vector in $\{0,1\}^X$
where each coordinate of $x_i$ is rounded independently to $1$ with probability $x_i$
and $0$ otherwise.\footnote{We denote vectors consistently in boldface $(\bx)$ and their coordinates in italics $(x_i)$.
We also identify vectors in $\{0,1\}^n$ with subsets of $[n]$ in a natural way.
}
We define
$$ F(\bx) = \E[f(\hat{\bx})] = \sum_{S \subseteq X} f(S) \prod_{i \in S} x_i
 \prod_{j \notin S} (1-x_j).$$
This is the unique {\em multilinear polynomial} which coincides with $f$ on $\{0,1\}$-vectors.
We remark that although we cannot compute the exact value of $F(\bx)$ for a given
$\bx \in [0,1]^X$ (which would require querying $2^n$ possible values of $f(S)$),
we can compute $F(\bx)$ approximately, by random sampling. Sometimes this causes
technical issues, which we also deal with in this paper.

Instead of the discrete problem $\max \{f(S): S \in \cF\}$,
we consider a continuous optimization problem $\max \{F(\bx): \bx \in P(\cF)\}$,
where $P(\cF)$ is the convex hull of characteristic vectors corresponding to $\cF$,
$$ P(\cF) % = \mbox{conv} \{\b1_S: S \in \cF \}
 = \left\{ \sum_{S \in \cF} \alpha_S \b1_S:
   \sum_{S \in \cF} \alpha_S = 1, \alpha_S \geq 0 \right\}.$$
The reason why the extension $F(\bx) = \E[f(\hat{\bx})]$ is useful
for submodular maximization problems is that $F(\bx)$ has convexity properties
that allow one to solve the continuous problem $\max \{F(\bx): \bx \in P\}$
(within a constant factor) in a number of interesting cases.
Moreover, fractional solutions can be often
rounded to discrete ones without losing {\em anything} in terms of the objective
value. Then, our ability to solve the multilinear relaxation
translates directly into an algorithm for the original problem.
In particular, this is true when the collection of feasible solutions
forms a matroid.

{\em Pipage rounding} was originally developed by Ageev and Sviridenko
for rounding solutions in the bipartite matching polytope \cite{AS04}.
The technique was adapted to matroid polytopes by Calinescu et al.
 \cite{CCPV07}, who proved that for any submodular function $f(S)$
and any $\bx$ in the matroid base polytope $B(\cM)$,
the fractional solution $\bx$ can be rounded to a base
$B \in \cB$ such that $f(B) \geq F(\bx)$.
This approach leads to an optimal $(1-1/e)$-approximation for
the Submodular Welfare Problem,
and more generally for monotone submodular maximization subject
to a matroid constraint \cite{CCPV07,Vondrak08}.
It is also known that
the factor of $1-1/e$ is optimal for the Submodular Welfare Problem
both in the NP framework \cite{KLMM05} and in the value oracle model
\cite{MSV08}. Under the assumption that the submodular function $f(S)$
has {\em curvature} $c$, there is a $\frac{1}{c}(1-e^{-c})$-approximation
and this is also optimal in the value oracle model \cite{VKyoto08}.
The framework of pipage rounding can be also extended to nonmonotone submodular functions;
this presents some additional issues which we discuss in this paper.

For the problem of
unconstrained (nonmonotone) submodular maximization, a $2/5$-approximation
was developed in \cite{FMV07}. This algorithm implicitly uses
the multilinear relaxation $\max \{F(\bx): \bx \in [0,1]^X\}$.
For {\em symmetric} submodular functions, it is shown in \cite{FMV07}
that a uniformly random solution $\bx = (1/2,\ldots,1/2)$ gives
$F(\bx) \geq \frac12 OPT$, and there is no better approximation algorithm
in the value oracle model. Recently, a $1/2$-approximation was found
for unconstrained maximization of a general nonnegative submodular function \cite{BFNS12}.
This algorithm can be formulated in the multilinear relaxation framework, but also
as a randomized combinatorial algorithm.

Using additional techniques, the multilinear relaxation can be also applied
to submodular maximization with knapsack constraints
($\sum_{j \in S} c_{ij} \leq 1$). For the problem of maximizing a monotone
submodular function subject to a constant number of knapsack constraints,
there is a $(1-1/e-\epsilon)$-approximation algorithm for any $\epsilon > 0$ \cite{KTS09}.
For maximizing a nonmonotone submodular function
subject to a constant number of knapsack constraints,
a $(1/5-\epsilon)$-approximation was designed in \cite{LMNS09}.

One should mention that not all the best known results for submodular maximization
have been achieved using the multilinear relaxation. The greedy algorithm
yields a $1/(k+1)$-approximation for monotone submodular maximization
subject to $k$ matroid constraints \cite{NWF78II}. Local search methods
have been used to improve this to a $1/(k+\epsilon)$-approximation,
and to obtain a $1/(k+1+1/(k-1)+\epsilon)$-approximation
for the same problem with a nonmonotone submodular function,
for any $\epsilon>0$ \cite{LMNS09,LSV09}.
For the problem of maximizing a nonmonotone submodular function
over the {\em bases} of a given matroid, local search yields
a $(1/6-\epsilon)$-approximation, assuming that the matroid contains two disjoint bases
\cite{LMNS09}.

\subsection{Our results}

Our main contribution (Theorem~\ref{thm:general-hardness}) 
is a general hardness construction that yields
inapproximability results in the value oracle model in an automated
way, based on what we call the {\em symmetry gap} for some fixed instance.
In this generic fashion, we are able to replicate a number of previously
known hardness results (such as the optimality of the factors
$1-1/e$ and $1/2$ mentioned above),
and we also produce new hardness results using this construction
(Theorem~\ref{thm:matroid-bases}).
Our construction helps explain the particular hardness thresholds obtained under
various constraints, by exhibiting a small instance where the threshold can be
seen as the gap between the optimal solution and the best symmetric solution.
The query complexity results in \cite{FMV07,MSV08,VKyoto08}
can be seen in hindsight as special cases of Theorem~\ref{thm:general-hardness},
but the construction in this paper is somewhat different and
technically more involved than the previous proofs for particular cases.

\paragraph{Concrete results}
Before we proceed to describe our general hardness result, we present its implications
for two more concrete problems.
We also provide closely matching approximation algorithms for these two problems,
based on the multilinear relaxation. In the following, we assume that the objective function is given by a value oracle
and the feasibility constraint is given by a membership oracle: a value oracle for $f$ returns the value $f(S)$ for a given set $S$,
and a membership oracle for $\cF$ answers whether $S \in \cF$ for a given set $S$.

First, we consider the problem of maximizing a nonnegative (possibly nonmonotone) submodular function
subject to a \emph{matroid base} constraint. (This generalizes for example the maximum bisection
problem in graphs.) We show that the approximability of this problem is related to base
packings in the matroid. We use the following definition.

\begin{definition}
\label{def:fractional-base}
For a matroid $\cM$ with a collection of bases $\cB$, the fractional base
packing number is the maximum possible value of $\sum_{B \in \cB} \alpha_B$
for $\alpha_B \geq 0$ such that $\sum_{B \in \cB: j \in B} \alpha_B \leq 1$ for
every element $j$.
\end{definition}

\noindent{\bf Example.}
Consider a uniform matroid with bases $\cB = \{ B \subset [n]: |B| = k \}$. ($[n]$ denotes the set of integers
$\{1,2,\cdots,n\}$.)
Here, we can take each base with a coefficient $\alpha_B = 1 / {n-1 \choose k-1}$, which satisfies
the condition $\sum_{B \in \cB: j \in B} \alpha_B \leq 1$ for every element $j$ since every element
is contained in ${n-1 \choose k-1}$ bases. We obtain that the fractional packing number is at least
${n \choose k} / {n-1 \choose k-1} = \frac{n}{k}$. It is also easy to check that the fractional packing number
cannot be larger than $\frac{n}{k}$.

\begin{theorem}
\label{thm:matroid-bases}
For any $\nu$ in the form $\nu = \frac{k}{k-1}, k \geq 2$,
and any fixed $\epsilon>0$, a $(1-\frac{1}{\nu}+\epsilon) = \frac{1}{k}$-approximation for the problem
$\max \{f(S): S \in \cB\}$, where $f(S)$ is a nonnegative submodular function,
and $\cB$ is a collection of bases in a matroid with fractional packing number at least $\nu$,
would require exponentially many value queries.

On the other hand, for any $\nu \in (1,2]$,
there is a randomized $\frac{1}{2}(1-\frac{1}{\nu}-o(1))$-approximation
for the same problem.
\end{theorem}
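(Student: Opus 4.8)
\noindent\emph{Proof plan.} I would treat the two halves of the statement separately: the inapproximability bound follows from the general symmetry-gap machinery of Theorem~\ref{thm:general-hardness}, and the algorithm from the multilinear relaxation combined with the fractional base packing.

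For the hardness direction I would construct, for each integer $k\ge 2$, a single fixed ``gadget'' instance of $\max\{f(S):S\in\cB\}$: a matroid $\cM_0$ whose fractional base packing number is exactly $\nu=\frac{k}{k-1}$, a nonnegative submodular function $f_0$ on its ground set, and a group $G$ of symmetries of the ground set fixing both $\cM_0$ and $f_0$, chosen so that the \emph{symmetry gap} --- the ratio between the best $G$-invariant point of the multilinear relaxation over $B(\cM_0)$ and the true optimum $\max_{B}f_0(B)$ --- equals $\frac1k=1-\frac1\nu$. Feeding this gadget into Theorem~\ref{thm:general-hardness} then produces the desired class of instances on which any $(1-\frac1\nu+\epsilon)$-approximation requires exponentially many value queries. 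Two points have to be verified once the gadget is fixed. First, the blow-up operation of Theorem~\ref{thm:general-hardness} must send a matroid-base constraint to a matroid-base constraint: it replaces each ground element by a cloud of $N$ parallel copies, so one checks that the induced family of feasible sets is again the collection of bases of a matroid (essentially the $N$-fold parallel extension of $\cM_0$, truncated appropriately). Second, one checks that this operation does not decrease the fractional base packing number, so every matroid in the resulting class still has packing number $\ge\nu$. I expect the genuinely hard part to be the design of the gadget: one must reconcile the requirement that $G$ act on $\cM_0$ richly enough to ``move'' the optimal base (otherwise that base is itself $G$-invariant and the gap is $1$) with the requirement that the $G$-invariant face of $B(\cM_0)$ consist only of fractional points of tiny value --- which is in tension with the fact that submodular functions tend to favour symmetric, averaged solutions. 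Resolving this presumably forces $\cM_0$ to be a matroid whose symmetric face is a lower-dimensional family rather than a single point, engineered together with a carefully chosen $f_0$.

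For the algorithmic direction I would work throughout with the multilinear extension $F$ and the base polytope $B(\cM)$. The structural fact I would exploit is that a fractional base packing of value $\nu$ yields a convex combination of bases $\by=\sum_{B}\alpha_B\b1_B\in B(\cM)$ with $\by\le\frac1\nu\b1$; since $\nu\in(1,2]$ we have $\frac1\nu\ge\frac12$, so every coordinate of $\by$ is bounded away from $1$ and $\by$ can be used to seed a continuous-greedy-type process that stays inside $B(\cM)$. I would run a variant of the continuous greedy adapted to nonmonotone objectives (in the spirit of the $\frac12$-approximation for unconstrained maximization), using $\by$ to keep the trajectory away from the boundary so that every coordinate retains room to increase. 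The analysis should deliver a fractional point $\bx^*\in B(\cM)$ with $F(\bx^*)\ge\frac12\bigl(1-\frac1\nu-o(1)\bigr)\max_{B\in\cB}f(B)$, where the factor $\frac12$ comes from the nonmonotone part of the analysis, the factor $1-\frac1\nu$ from the cost of being forced to terminate on a base rather than on a small-coordinate independent set, and the $o(1)$ absorbs sampling error in evaluating $F$ and the discretization of the continuous process. Finally I would apply pipage rounding for the matroid base polytope --- in the form valid for nonmonotone $f$ discussed in this paper --- which rounds $\bx^*$ to a base $B$ with $f(B)\ge F(\bx^*)$ and hence loses nothing. On this side the main obstacle is the first step: standard continuous greedy is tailored to monotone functions or to down-closed polytopes, and $B(\cM)$ is neither, so both the update rule and its potential-function analysis must be reworked so that the $\frac1\nu$-scaled safe point $\by$ simultaneously guarantees enough room to grow in every coordinate and accounts for exactly the $(1-\frac1\nu)$ loss.
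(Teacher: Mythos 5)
Your overall architecture is right---symmetry gap for the lower bound, multilinear relaxation plus pipage rounding for the upper bound---but both halves have a genuine gap. On the hardness side, you correctly reduce the problem to designing a gadget, but you never produce it, and the gadget \emph{is} the proof. It is also much simpler than you anticipate: the paper takes $X=A\cup B$ with $A=\{a_1,\dots,a_k\}$, $B=\{b_1,\dots,b_k\}$, the partition matroid base constraint $\cB=\{S: |S\cap A|=1,\ |S\cap B|=k-1\}$ (which has fractional base packing number exactly $\frac{k}{k-1}$), and the directed-cut function $f(S)=|\{i: a_i\in S,\ b_i\notin S\}|$ on $k$ disjoint arcs. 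The symmetry group permutes $A$ and $B$ by the same permutation; the unique symmetric point is $(\frac1k,\dots,\frac1k,1-\frac1k,\dots,1-\frac1k)$, giving $\overline{OPT}=k\cdot\frac1{k^2}=\frac1k$ against $OPT=1$. The ``tension'' you worry about (submodular functions favouring averaged solutions) does not arise because the objective is a directed cut, which is destroyed by averaging. One further point you gloss over: invariance of $\cF$ under the group is \emph{not} enough for Theorem~\ref{thm:general-hardness}; you need the \emph{strong} symmetry of Definition~\ref{def:total-sym} (feasibility depends only on the symmetrized indicator vector), which this partition constraint satisfies but which must be checked explicitly.

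On the algorithmic side you propose a nonmonotone continuous-greedy process on $B(\cM)$, and you yourself identify the obstruction---continuous greedy is built for monotone objectives or down-closed polytopes, and $B(\cM)$ is neither---without resolving it. The paper does something different and more elementary: \emph{fractional local search} inside the truncated base polytope $B_t(\cM)=B(\cM)\cap[0,t]^X$ with $t=1/\nu$, using swap directions $\be_j-\be_i$ of step size $\delta=1/q$. The truncation (made nonempty precisely by the fractional base packing) is what replaces your ``safe point'' $\by$: it forces the local optimum $\bx$ to keep every coordinate at most $t$, so that the exchange inequality $2F(\bx)\ge F(\bx\vee\bc)+F(\bx\wedge\bc)$ (Lemma~\ref{lemma:fractional-search2}), combined with the threshold-set bounds of Lemmas~\ref{lemma:rnd-threshold} and~\ref{lemma:submod-split}, yields $F(\bx)\ge\frac12(1-t)\,OPT$. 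Your accounting of where the factors $\frac12$ and $1-\frac1\nu$ come from does not match any analysis I know of for a greedy trajectory on a base polytope, so as written this half of the argument would not go through; you would need to switch to the local-search route (or supply a genuinely new continuous-greedy analysis for non-down-closed polytopes, which is a separate research problem).
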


In case the matroid contains two disjoint bases ($\nu=2$), we obtain a
$(\frac14-o(1))$-approximation, improving the previously known factor of $\frac16-o(1)$ \cite{LMNS09}.
In the range of $\nu \in (1,2]$, our positive and negative results are within a factor of $2$.
For maximizing a submodular function over the bases of a general matroid, 
we obtain the following.

\begin{corollary}
\label{coro:matroid-bases}
For the problem $\max \{f(S): S \in \cB\}$,  where $f(S)$ is a nonnegative submodular function,
and $\cB$ is a collection of bases in a matroid, any constant-factor approximation requires
an exponential number of value queries.
\end{corollary}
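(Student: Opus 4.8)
The plan is to obtain Corollary~\ref{coro:matroid-bases} as an immediate consequence of the negative half of Theorem~\ref{thm:matroid-bases}, by pushing the base-packing parameter to its degenerate limit. Suppose, for contradiction, that for some absolute constant $c>0$ there were an algorithm that, using a number of value queries polynomial in the size $n$ of the ground set, guarantees a $c$-approximation to $\max\{f(S):S\in\cB\}$ for \emph{every} matroid and every nonnegative submodular $f$.

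First I would fix an integer $k\ge 2$ large enough that $\frac{1}{k}<c$, and then set $\nu=\frac{k}{k-1}\in(1,2]$ and $\epsilon=c-\frac{1}{k}>0$; crucially $k,\nu,\epsilon$ are constants not depending on $n$. Theorem~\ref{thm:matroid-bases} provides, for this $k$, a class of matroids of fractional base packing number at least $\nu$ --- realized as a family indexed by $n$ --- on which no algorithm asking subexponentially many (in $n$) value queries can achieve a $(1-\frac{1}{\nu}+\epsilon)=(\frac{1}{k}+\epsilon)=c$-approximation. The hypothesized algorithm would do precisely that, using only polynomially many queries, which is a contradiction. Hence no such constant-factor algorithm exists.

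The only point requiring a little care is the order of quantifiers: Theorem~\ref{thm:matroid-bases} yields, separately for each $k$, a hard family, whereas the corollary speaks of a single algorithm that would have to be constant-factor across all matroids simultaneously. This is handled exactly as above --- from a claimed uniform factor $c$ one extracts the appropriate $k$ and $\epsilon$ and invokes the theorem for that $k$ --- so there is no real obstacle here; all of the substance is already in Theorem~\ref{thm:matroid-bases}, whose lower bound is in turn an application of the general symmetry-gap construction (Theorem~\ref{thm:general-hardness}) to a small matroid instance whose symmetry gap tends to $0$ as the packing parameter $\nu$ tends to $1$.
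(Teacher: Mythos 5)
Your proposal is correct and is exactly the deduction the paper intends: the corollary is stated as an immediate consequence of the hardness half of Theorem~\ref{thm:matroid-bases}, obtained by choosing $k$ with $\frac{1}{k}<c$ and applying the theorem for that fixed $k$. The quantifier handling you flag is the only subtlety, and you resolve it correctly.
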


We also consider the problem of maximizing a nonnegative submodular function subject to a matroid independence
constraint. 

\begin{theorem}
\label{thm:matroid-indep}
For any $\epsilon>0$, a $(\frac12 + \epsilon)$-approximation for the problem $\max \{f(S): S \in \cI \}$,
where $f(S)$ is a nonnegative submodular function, and $\cI$ is a collection of independent sets in a matroid,
would require exponentially many value queries.

On the other hand, there is a randomized $\frac14 (-1+\sqrt{5}-o(1)) \simeq 0.309$-approximation for the same problem.
\end{theorem}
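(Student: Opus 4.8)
\noindent\emph{Proof plan.} We establish the two halves of Theorem~\ref{thm:matroid-indep} separately: the lower bound as an instance of the construction of Theorem~\ref{thm:general-hardness}, and the algorithm through the multilinear relaxation.

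For the hardness, the plan is to exhibit a single small instance of submodular maximization under a matroid independence constraint whose multilinear relaxation has symmetry gap exactly $\frac{1}{2}$, and then apply Theorem~\ref{thm:general-hardness}. Take the rank-$1$ uniform matroid on the ground set $\{1,2\}$, so that $\cI=\{\emptyset,\{1\},\{2\}\}$ and $P(\cI)=\{\bx\ge 0:\ x_1+x_2\le 1\}$, together with the nonnegative submodular function $f$ given by $f(\{1\})=f(\{2\})=1$ and $f(\emptyset)=f(\{1,2\})=0$. The transposition of the two elements is an automorphism of both $f$ and $\cI$. The multilinear extension is $F(x_1,x_2)=x_1+x_2-2x_1x_2$, whose maximum over $P(\cI)$ equals $1$ (attained at $(1,0)$ and $(0,1)$), while the best symmetric point $(\frac{1}{2},\frac{1}{2})$ gives $F=\frac{1}{2}$; hence the symmetry gap equals $\frac{1}{2}$. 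Under the construction of Theorem~\ref{thm:general-hardness} this instance blows up to a family whose feasibility constraint is again a uniform matroid (of rank $N$ on $2N$ elements, $N$ the blow-up parameter), hence again a matroid independence constraint, and Theorem~\ref{thm:general-hardness} gives that a $(\frac{1}{2}+\epsilon)$-approximation for this family requires exponentially many value queries.

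For the algorithm, I would first reduce to the multilinear relaxation. Pipage (or swap) rounding on a matroid polytope uses only that $F$ is convex along the directions $\be_i-\be_j$, hence never decreases $F$, so it applies verbatim to a non-monotone $f$: any $\bx\in P(\cI)$ can be rounded to an independent set $I\in\cI$ with $\E[f(I)]\ge F(\bx)$ (after extending $\cI$ to a base polytope with dummy elements, if needed), and in particular $\max_{S\in\cI}f(S)=\max_{\bx\in P(\cI)}F(\bx)=:OPT$. It therefore suffices to produce, up to an additive $o(OPT)$, a point $\bx\in P(\cI)$ with $F(\bx)\ge\frac{1}{4}(\sqrt 5-1)\,OPT$. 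To this end I would run a non-monotone variant of the continuous greedy process on $P(\cI)$, and analyze it as follows. Let $O\in\cI$ be an optimal set, $f(O)=OPT$. After time $\tau$ every coordinate of the current point $\bx(\tau)$ is at most $\tau$; since $(\b1_O-\bx(\tau))\vee 0$ lies in $P(\cI)$ (because $O\in\cI$ and $P(\cI)$ is down-closed) and $\big((\b1_O-\bx(\tau))\vee 0\big)\cdot\nabla F(\bx(\tau))\ge F(\bx(\tau)\vee\b1_O)-F(\bx(\tau))\ge(1-\tau)\,OPT-F(\bx(\tau))$ (using $F(\bx\vee\b1_O)\ge(1-\max_i x_i)\,f(O)$), the function $g(\tau)=F(\bx(\tau))$ satisfies $g'(\tau)\ge(1-\tau)\,OPT-g(\tau)$ with $g(0)=0$, so stopping at time $t$ gives $F(\bx(t))\ge(2-t-2e^{-t})\,OPT$. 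Stopping at the optimal $t=\ln 2$ already yields $(1-\ln 2)\,OPT\approx 0.307\,OPT$; pushing the constant up to $\frac{1}{4}(\sqrt 5-1)$ requires a complementary second phase started from $\bx(t)$ — there is room $(1-t)P(\cI)$ left to add, since $\bx(t)=t\bu$ for some $\bu\in P(\cI)$ — whose guarantee, again expressed through $F(\bx(t)\vee\b1_O)$ and the slack $1-t$, is balanced against the phase-one bound; choosing $t$ so that the two bounds coincide produces the value $\frac{1}{4}(\sqrt 5-1)$, which is the positive root of $4x^2+2x-1=0$. Finally, we round the resulting fractional point by the rounding above.

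The main obstacle is the positive part: designing the two-phase schedule and proving that the combined fractional value really reaches $\frac{1}{4}(\sqrt 5-1)\,OPT$ rather than the naive $(1-\ln 2)\,OPT$, while at the same time controlling the two inevitable error sources — estimating $F$ and $\nabla F$ by random sampling, and discretizing the continuous process into finitely many steps — so that their accumulated contribution stays $o(OPT)$, which is where bounded-marginal arguments, concentration bounds, and a sufficiently fine step size enter.
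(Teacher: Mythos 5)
Your hardness argument is essentially the paper's: the paper uses the same Max Cut instance on $K_2$ with the constraint $|S|\le 1$ (strongly symmetric, symmetry gap $\tfrac12$), whose refinements are cardinality constraints $|S|\le n/2$, and invokes Theorem~\ref{thm:general-hardness}. That half is fine.

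The algorithmic half has a genuine gap. Your phase-one analysis of continuous greedy, $g'(\tau)\ge(1-\tau)OPT-g(\tau)$, correctly yields $F(\bx(t))\ge(2-t-2e^{-t})OPT$, which peaks at $1-\ln 2\approx 0.307<\tfrac14(\sqrt5-1)$. Everything beyond that is asserted, not proved: the ``complementary second phase'' is never defined (what objective does it optimize, what local/greedy rule does it follow, and crucially how do you lower-bound the value it adds on top of $F(\bx(t))$ for a \emph{nonmonotone} $f$, where continuing to add mass can decrease $F$?), and the claim that balancing two unspecified bounds produces the root of $4x^2+2x-1=0$ is numerology rather than an argument. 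The paper's route is different and is where the real work lies: it runs fractional local search inside the truncated polytope $P_t(\cM)=P(\cM)\cap[0,t]^X$ with $t$ close to $\tfrac12(3-\sqrt5)$, proves the exchange inequality $2F(\bx)\ge F(\bx\vee\b1_{C'})+F(\bx\wedge\b1_C)$ for a fractional local optimum (Lemma~\ref{lemma:fractional-search}, where $C'=\{i\in C:x_i<t\}$ — the truncation is exactly what makes the partial-derivative information available), and then bounds the two terms by $t(f(C)-f(C'))$ and $t(1-t)f(C)+(1-t)^2f(C')$ via threshold-set lemmas; the condition $t\le(1-t)^2$ is what produces $\tfrac12(3-\sqrt5)$ and hence $t-\tfrac12t^2=\tfrac14(\sqrt5-1)$. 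If you want to pursue a continuous-greedy proof you would need to supply and analyze the missing second phase from scratch; as written, your argument only establishes a $(1-\ln 2-o(1))$-approximation. (Your rounding step is acceptable in spirit, though for nonmonotone $f$ the paper is more careful: it first runs a randomized \textbf{Adjust} procedure to reach the base polytope of a restricted matroid without losing expected value, rather than appealing to dummy elements.)
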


Our algorithmic result improves a previously known $(\frac14 - o(1))$-approximation \cite{LMNS09}.
The hardness threshold follows from our general result, but also quite easily from \cite{FMV07}.

\medskip
\noindent{\bf Hardness from the symmetry gap.}
Now we describe our general hardness result.
Consider an instance $\max \{f(S): S \in \cF\}$ which exhibits a certain
degree of symmetry. This is formalized by the notion of a {\em symmetry group}
$\cG$. We consider permutations $\sigma \in {\bf S}(X)$ where ${\bf S}(X)$
is the symmetric group (of all permutations) on the ground set $X$.
We also use $\sigma$ for the naturally induced mapping of subsets of $X$:
$\sigma(S) = \{ \sigma(i): i \in S \}$.
We say that the instance is invariant
under $\cG \subset {\bf S}(X) $, if for any $\sigma \in \cG$
and any $S \subseteq X$,
$f(S) = f(\sigma(S))$ and $S \in \cF \Leftrightarrow \sigma(S) \in \cF$.
We emphasize that even though we apply $\sigma$ to sets, it must be derived
from a permutation on $X$.
For $\bx \in [0,1]^X$, we define the ``symmetrization of $\bx$'' as
$$\bar{\bx} = \E_{\sigma \in {\cal G}}[\sigma(\bx)],$$
where $\sigma \in \cG$ is uniformly random and $\sigma(\bx)$ denotes $\bx$ with coordinates permuted by $\sigma$.

\medskip
{\bf Erratum:}
The main hardness result in the conference version of this paper \cite{Vondrak09} was formulated
for an arbitrary feasibility constraint $\cF$, invariant under $\cG$.
Unfortunately, this was an error and the theorem does not hold in that form
 --- an algorithm could gather some information from querying $\cF$, and combining this
with information obtained by querying the objective function $f$ it could possibly determine the hidden optimal solution.
The possibility of gathering information from the membership oracle for $\cF$ was neglected in the proof. (The reason for this
was probably that the feasibility constraints used in concrete applications of the theorem were very simple and indeed
did not provide any information about the optimum.)
Nevertheless, to correct this issue, one needs to impose a stronger symmetry constraint on $\cF$, namely
the condition that $S \in \cF$ depends only on the symmetrized version of $S$, $\overline{\b1_S} = \E_{\sigma \in \cG}[\b1_{\sigma(S)}]$.
This is the case in all the applications of the hardness theorem in \cite{Vondrak09} and \cite{OV11} and hence
these applications are not affected.

\begin{definition}
\label{def:total-sym}
We call an instance $\max \{f(S): S \in \cF\}$ on a ground set $X$ strongly symmetric with respect to a group of permutations $\cG$ on $X$, if $f(S) = f(\sigma(S))$ for all $S \subseteq X$ and $\sigma \in \cG$, and $S \in \cF \Leftrightarrow S' \in \cF$ whenever $\E_{\sigma \in \cG}[\b1_{\sigma(S)}] = \E_{\sigma \in \cG}[\b1_{\sigma(S')}] $.
\end{definition}

\noindent{\bf Example.} A cardinality constraint, $\cF = \{S \subseteq [n]: |S| \leq k \}$, is strongly symmetric with respect to all permutations, because the condition $S \in \cF$ depends only on the symmetrized vector $\overline{\b1_S} = \frac{|S|}{n} \b1$.
Similarly, a partition matroid constraint, $\cF = \{S \subseteq [n]: |S \cap X_i| \leq k \}$ for disjoint sets $X_i$, is also strongly symmetric.
On the other hand, consider a family of feasible solution $\cF = \{ \{1,2\}, \{2,3\}, \{3,4\}, \{4,1\} \}$. This family is invariant under a group generated by the cyclic rotation $1 \rightarrow 2 \rightarrow 3 \rightarrow 4 \rightarrow 1$. It is not strongly symmetric, because the condition $S \in \cF$ does not depend only the symmetrized vector $\overline{\b1_S} = (\frac14 |S|,\frac14 |S|,\frac14 |S|,\frac14 |S|)$;
some pairs are feasible and others are not.

\

Next, we define the symmetry gap as the ratio between the optimal solution
of $\max \{F(\bx): \bx \in P(\cF)\}$ and the best {\em symmetric} solution
of this problem.

\begin{definition}[Symmetry gap]
Let $\max \{ f(S): S \in {\cal F} \}$ be an instance on a ground set $X$,
which is strongly symmetric with respect to ${\cal G} \subset {\bf S}(X)$.
%I.e., for any $\sigma \in {\cal G}$ and any $S \subseteq X$, $f(S) = f(\sigma(S))$.
%Also, $S \in {\cal F}$ if and only if $\sigma(S) \in {\cal F}$.
Let $F(\bx) = \E[f(\hat{\bx})]$ be the multilinear extension of $f(S)$
and $P({\cal F}) = \mbox{conv}(\{\b1_I: I \in {\cal F} \})$ the polytope
associated with $\cal F$. Let $\bar{\bx} = \E_{\sigma \in {\cal G}}[\sigma(\bx)]$.
The symmetry gap of $\max \{ f(S): S \in {\cal F} \}$ is defined as
$\gamma = \overline{OPT} / OPT$ where
$$OPT = \max \{F(\bx): \bx \in P({\cal F})\},$$
$$\overline{OPT} = \max \{F(\bar{\bx}): \bx \in P({\cal F}) \}.$$
\end{definition}

\noindent
We give examples of computing the symmetry gap in Section~\ref{section:hardness-applications}.
Next, we need to define the notion of a {\em refinement} of an instance.
This is a natural way to extend a family of feasible sets to a larger ground set.
In particular, this operation preserves the types of constraints that we care about,
such as cardinality constraints, matroid independence, and matroid base constraints.

\begin{definition}[Refinement]
Let $\cF \subseteq 2^X$, $|X|=k$ and $|N| = n$. We say that
$\tilde{\cF}\subseteq 2^{N \times X}$ is a refinement of $\cal F$, if
%\begin{itemize}
$$ \tilde{\cF} = \left\{ \tilde{S} \subseteq N \times X \ \big| \ (x_1,\ldots,x_k) \in P({\cF})
 \mbox{ where } x_j = \frac{1}{n} |\tilde{S} \cap (N \times \{j\})| \right\}. $$
%\end{itemize}
\end{definition}

In other words, in the refined instance, each element $j \in X$ is replaced by a set $N \times \{j\}$.
We call this set the {\em cluster} of elements corresponding to $j$.
A set $\tilde{S}$ is in $\tilde{\cF}$ if and only if the fractions $x_j$ of the respective clusters that are intersected by $\tilde{S}$ form a vector $\bx \in P(\cF)$, i.e. a convex combination of sets in $\cF$.

Our main result is that the symmetry gap for any strongly symmetric instance
translates automatically into hardness of approximation for refined instances.
(See Definition~\ref{def:total-sym} for the notion of being ``strongly symmetric".)
We emphasize that this is a query-complexity lower bound,
and hence independent of assumptions such as $P \neq NP$.

\begin{theorem}
\label{thm:general-hardness}
Let $\max \{ f(S): S \in {\cal F} \}$ be an instance of nonnegative
(optionally monotone) submodular maximization, strongly symmetric with respect to $\cG$,
with symmetry gap $\gamma = \overline{OPT} / OPT$.
Let $\cal C$ be the class of instances $\max \{\tilde{f}(S): S \in \tilde{\cF}\}$
where $\tilde{f}$ is nonnegative (optionally monotone) submodular
and $\tilde{\cF}$ is a refinement of $\cF$.
Then for every $\epsilon > 0$,
any (even randomized) $(1+\epsilon) \gamma$-approximation algorithm for the class $\cal C$ would require
exponentially many value queries to $\tilde{f}(S)$.
\end{theorem}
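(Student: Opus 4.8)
\emph{Plan.} I would prove this by the standard ``two indistinguishable oracles'' technique, with the two objectives obtained from the fixed symmetric instance by gluing copies along the symmetry group $\cG$. Write $F$ for the multilinear extension of the fixed $f$ on $X$, with $|X|=k$, and fix $\bx^*\in P(\cF)$ attaining $F(\bx^*)=OPT$. On a refined ground set $N\times X$ with $n=|N|$ large I would construct one ``honest'' objective $\tilde f_0$ and a family of ``planted'' objectives $\tilde f_\pi$, all of which are nonnegative (monotone, if $f$ is) submodular functions on $N\times X$, such that: (i) every $\tilde S\in\tilde\cF$ has $\tilde f_0(\tilde S)\le\overline{OPT}$; (ii) every planted instance has a feasible solution of value $(1-o(1))OPT$; and (iii) for a uniformly random $\pi$ and any \emph{fixed} set $\tilde S$, $|\tilde f_\pi(\tilde S)-\tilde f_0(\tilde S)|\le\delta$ except with probability $e^{-\Omega(n)}$, where $\delta>0$ is a constant we fix in advance. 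Granting these, an algorithm making subexponentially many queries, run against a uniformly random planted instance, sees by a union bound (over its queries and its output set) exactly the transcript it would see against $\tilde f_0$, with probability $1-e^{-\Omega(n)}$; it then returns a feasible set of value at most $\overline{OPT}+\delta=\gamma\,OPT+\delta$, which for $\delta$ small and $n$ large is less than $(1+\epsilon)\gamma(1-o(1))OPT$, so it is not a $(1+\epsilon)\gamma$-approximation. A Yao-type averaging promotes this to a lower bound against randomized algorithms. So everything reduces to the construction of $\tilde f_0$ and the $\tilde f_\pi$.

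\emph{The construction.} For $\tilde\bx\in[0,1]^{N\times X}$ let $\bq=\bq(\tilde\bx)$ be the cluster-fraction vector, $q_j=\frac1n\sum_i\tilde x_{ij}$. Since each $q_j$ is a sum of distinct coordinates and distinct clusters are coordinate-disjoint, $\tilde\bx\mapsto F(\bq(\tilde\bx))$ is multilinear, and its mixed second derivative in $\tilde x_{ij},\tilde x_{i'j'}$ equals $\frac1{n^2}\,\partial^2F/\partial q_j\partial q_{j'}$, which is $0$ when $j=j'$ and $\le 0$ when $j\ne j'$ by submodularity of $f$; hence this map --- and likewise $F$ precomposed with the linear symmetrization $\bq\mapsto\bar\bq$, which preserves this Hessian sign pattern --- is the multilinear extension of a nonnegative, optionally monotone, submodular function on $N\times X$. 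I would set $\tilde f_0(\tilde S)=F\big(\overline{\bq(\b1_{\tilde S})}\big)$, and index the planted family by tuples $\pi=(\sigma_i)_{i\in N}$ of independent uniform $\sigma_i\in\cG$, acting by $\pi(i,j)=(i,\sigma_i(j))$, with $\tilde f_\pi(\tilde S)=F\big(\bq(\b1_{\pi(\tilde S)})\big)$. For (i): $\tilde S\in\tilde\cF$ means $\bq(\b1_{\tilde S})\in P(\cF)$, so $\tilde f_0(\tilde S)=F(\bar\bx)\le\overline{OPT}$ for the corresponding $\bx\in P(\cF)$. For (ii): write $\bx^*=\sum_\ell\lambda_\ell\b1_{I_\ell}$ with $I_\ell\in\cF$, split $N$ into parts of sizes $\approx\lambda_\ell n$, and place $\sigma_i^{-1}(I_\ell)$ in row $i$ of part $\ell$; by $\cG$-invariance of $\cF$ the apparent cluster fractions are then a convex combination of members of $\cF$, so this $\tilde S$ lies in $\tilde\cF$, while its planted cluster fractions equal $\bx^*$ up to $O(1/n)$, giving value $(1-o(1))OPT$.

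\emph{Indistinguishability, and the role of strong symmetry.} Fix $\tilde S$ and let $T_i=\{j:(i,j)\in\tilde S\}$. Then $\bq(\b1_{\pi(\tilde S)})_j=\frac1n\sum_i\b1[j\in\sigma_i(T_i)]$ is a sum of $n$ independent terms in $\{0,1/n\}$, with expectation $\frac1n\sum_i\E_{\sigma\in\cG}[\b1_{\sigma(T_i)}]_j=\overline{\bq(\b1_{\tilde S})}_j$, using that symmetrization commutes with the average over rows. A Hoeffding bound for each of the $k$ coordinates makes $\|\bq(\b1_{\pi(\tilde S)})-\overline{\bq(\b1_{\tilde S})}\|_\infty$ small except with probability $e^{-\Omega(n)}$, and since the fixed polynomial $F$ is Lipschitz on the cube this yields (iii). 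Finally, the membership oracle for $\tilde\cF$ at any $\tilde S$ depends only on the apparent cluster fractions $\bq(\b1_{\tilde S})$, so it returns the same answer in the honest and in every planted instance --- feasibility queries reveal nothing about $\pi$. This is precisely the point at which the strong-symmetry hypothesis on $\cF$ (Definition~\ref{def:total-sym}) is used, and is the gap that the erratum repairs: without it the membership oracle for a merely $\cG$-invariant $\cF$ could leak the planted structure.

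\emph{Main obstacle.} The delicate step is arranging (i)--(iii) simultaneously. The \emph{independent} per-row choice of $\sigma_i$ is dictated by the concentration in (iii) --- a single global random $\sigma$ applied to all rows would not concentrate --- while (ii) requires the clusters to remain ``mergeable'' back to a near-optimal solution, which rests on $\cG$-invariance of $\cF$ and on $\bq\mapsto\bar\bq$ being exactly the symmetrization appearing in the definition of the symmetry gap. Checking that the constructed oracles are genuinely nonnegative, (optionally) monotone and submodular --- the Hessian computation above --- and that the feasibility oracle is uninformative are the remaining substantive points; choosing $n$, approximating $\bx^*$ by rationals with denominator $n$, and the Yao-type reduction for randomized algorithms are routine.
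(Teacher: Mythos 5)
Your setup (the refinement, the per-row independent permutations $\sigma_i\in\cG$, the concentration of $\bq(\b1_{\pi(\tilde S)})$ around $\overline{\bq(\b1_{\tilde S})}$, the role of strong symmetry in neutralizing the membership oracle, and the submodularity of the discretizations of $F\circ\bq$ and $F\circ\overline{\bq(\cdot)}$) matches the paper's. But there is a genuine gap at the heart of the indistinguishability step. Your item (iii) gives only $|\tilde f_\pi(\tilde S)-\tilde f_0(\tilde S)|\le\delta$, i.e.\ \emph{approximate} equality of oracle answers, and from this you conclude that the algorithm ``sees exactly the transcript it would see against $\tilde f_0$.'' That inference is false: an adaptive algorithm branches on the exact numerical value returned, so even an exponentially small discrepancy in the first answer can send it down a different computation path. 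Once the path depends on $\pi$, the queries are no longer fixed sets, the per-query Hoeffding bound no longer applies to them (they are correlated with $\pi$), and the union bound over ``its queries'' is not over a fixed subexponential family. More fundamentally, each answer $F(\bq(\b1_{\pi(\tilde S)}))$ genuinely leaks a little information about $\pi$, and nothing in your argument rules out aggregating these leaks over subexponentially many queries to locate the planted optimum.

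This is precisely the difficulty that the paper's Lemma~\ref{lemma:final-fix} is built to overcome, and it is the main technical content of the proof: one must replace $F$ and $G(\bx)=F(\bar\bx)$ by perturbations $\hat F,\hat G$ that are \emph{exactly} equal, with value depending only on $\bar\bx$, whenever $\|\bx-\bar\bx\|^2\le\delta$, while remaining nonnegative, (optionally) monotone, and smooth submodular. Then every typical query is answered identically by both oracles and independently of $\pi$, so the computation path really is a fixed sequence and the union bound is legitimate. Achieving exact agreement without destroying submodularity is nontrivial: the paper interpolates via $\tilde F=(1-\phi(D(\bx)))F+\phi(D(\bx))G$ with a transition function $\phi$ controlled so that $|t\phi'(t)|$ and $|t^2\phi''(t)|$ are small (Lemmas~\ref{lemma:H-bounds}--\ref{lemma:phi-construction}), and then adds a concave correction term $J(\bx)$ to absorb the residual positive curvature. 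Your Hessian computation applies only to the unperturbed $F\circ\bq$ and its symmetrization, which are individually submodular but not exactly indistinguishable; the naive fix of overwriting the planted oracle by the symmetrized value near the diagonal would break submodularity at the boundary. Without some version of Lemma~\ref{lemma:final-fix}, the proof does not go through.
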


We remark that the result holds even if the class $\cal C$ is restricted
to instances which are themselves symmetric under a group related to $\cG$
(see the discussion in Section~\ref{section:hardness-proof}, after the proofs of Theorem~\ref{thm:general-hardness}
and \ref{thm:multilinear-hardness}).
On the algorithmic side, submodular maximization seems easier for symmetric instances
and in this case we obtain optimal approximation factors, up to lower-order terms
(see Section~\ref{section:symmetric}).

Our hardness construction yields impossibility results also for solving
the continuous problem $\max \{F(\bx): \bx \in P(\cF) \}$. In the case of matroid constraints,
this is easy to see, because an approximation to the continuous problem gives the same
approximation factor for the discrete problem (by pipage rounding, see Appendix~\ref{app:pipage}).
However, this phenomenon
is more general and we can show that the value of a symmetry gap translates into
an inapproximability result for the multilinear optimization problem under any constraint
satisfying a symmetry condition. 

\begin{theorem}
\label{thm:multilinear-hardness}
Let $\max \{ f(S): S \in {\cal F} \}$ be an instance of nonnegative
(optionally monotone) submodular maximization, strongly symmetric with respect to $\cG$,
with symmetry gap $\gamma = \overline{OPT} / OPT$.
Let $\cal C$ be the class of instances $\max \{\tilde{f}(S): S \in \tilde{\cF}\}$
where $\tilde{f}$ is nonnegative (optionally monotone) submodular
and $\tilde{\cF}$ is a refinement of $\cF$.
Then for every $\epsilon > 0$,
any (even randomized) $(1+\epsilon) \gamma$-approximation algorithm for the multilinear relaxation
$\max \{\tilde{F}(\bx): \bx \in P(\tilde{\cF})\}$ of problems in $\cal C$ would require
exponentially many value queries to $\tilde{f}(S)$.
\end{theorem}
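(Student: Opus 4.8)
\textbf{Proof plan for Theorem~\ref{thm:multilinear-hardness}.}
The plan is to reduce this statement to (the proof of) Theorem~\ref{thm:general-hardness} by observing that the hardness construction there already produces, for every $\epsilon>0$, two submodular functions $\tilde f_{+}$ and $\tilde f_{-}$ on a common refined ground set $N\times X$ and a common feasibility constraint $\tilde\cF$, such that: (i) $\tilde f_{+}$ and $\tilde f_{-}$ are indistinguishable by a subexponential number of value queries with high probability over a random relabeling of the clusters; (ii) on the instance with $\tilde f_{+}$ the continuous optimum $\max\{\tilde F_{+}(\bx):\bx\in P(\tilde\cF)\}$ is at least (close to) $OPT$ after appropriate normalization; and (iii) on the instance with $\tilde f_{-}$ \emph{every} $\bx\in P(\tilde\cF)$ satisfies $\tilde F_{-}(\bx)\le(1+o(1))\overline{OPT}$. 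Properties (i)--(iii) are exactly what is established inside the proof of Theorem~\ref{thm:general-hardness}; the only difference is that there the conclusion was stated for the discrete optima $\max\{\tilde f(S):S\in\tilde\cF\}$, whereas here we want it for the continuous relaxation. So the first step is to isolate these three properties from the construction and restate them as a lemma about the pair $(\tilde f_{+},\tilde f_{-})$.

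The key point making the continuous version no harder is property (iii): the bound $\tilde F_{-}(\bx)\le(1+o(1))\overline{OPT}$ holds for the \emph{fractional} polytope $P(\tilde\cF)$, not merely for integral points of $\tilde\cF$. This is because in the construction $\tilde f_{-}$ is built so that its multilinear extension $\tilde F_{-}$ depends (up to a negligible error) only on the ``cluster averages'' $x_j=\frac1n\sum_{i}x_{(i,j)}$, i.e. on the projection $\bx\mapsto(x_1,\dots,x_k)$, and on that projection it is dominated by $F(\overline{\by})$ for a suitable $\by\in P(\cF)$; taking the max over $\bx\in P(\tilde\cF)$ corresponds to taking the max over $\by\in P(\cF)$ of $F(\overline{\by})$, which is exactly $\overline{OPT}$. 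Thus I would: (a) recall from the construction that $\tilde F_{-}(\bx)$ and $\tilde F_{+}(\bx)$ agree with cluster-symmetrized versions up to an error that is exponentially small in the cluster size $n$; (b) conclude that for the ``$-$'' instance the optimum of the multilinear relaxation is at most $(1+o(1))\overline{OPT}$; and (c) conclude that for the ``$+$'' instance it is at least $(1-o(1))OPT$ since $P(\tilde\cF)$ contains the blow-up of the optimal $\bx^{*}\in P(\cF)$.

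Given the lemma, the reduction is immediate: suppose $\mathcal A$ is a (randomized) $(1+\epsilon)\gamma$-approximation for the multilinear relaxation over $\mathcal C$, using only subexponentially many value queries. Run $\mathcal A$ on the randomly relabeled hidden instance, which is $\tilde f_{+}$ with probability $\frac12$ and $\tilde f_{-}$ with probability $\frac12$. By (i) the transcript of queries, and hence the (distribution of the) value returned, is with high probability the same in the two cases. In the ``$+$'' case $\mathcal A$ must return at least $(1+\epsilon)\gamma\cdot(1-o(1))OPT=(1+\epsilon-o(1))\,\overline{OPT}$, while in the ``$-$'' case no value exceeding $(1+o(1))\overline{OPT}$ is achievable, so $\mathcal A$ cannot output a feasible point of value that large; choosing $n$ large enough that the $o(1)$ terms are below $\epsilon/2$ yields a contradiction. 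I expect the main obstacle to be purely expository rather than mathematical: namely, extracting properties (i)--(iii) cleanly from the proof of Theorem~\ref{thm:general-hardness} in a form that refers to the multilinear extension rather than to rounded solutions, and making sure the $o(1)$ error terms in the approximation of $\tilde F_{\pm}$ by their cluster-symmetrizations are genuinely uniform over all of $P(\tilde\cF)$ (this uses submodularity and the concentration of the rounding $\hat\bx$ within each cluster of size $n$). One should also note — as the theorem statement hints — that the same argument gives hardness even within the subclass of instances that are themselves symmetric, since the hard instances $\tilde f_{\pm}$ inherit a symmetry group related to $\cG$ acting within clusters.
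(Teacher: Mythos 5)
Your proposal is correct and follows essentially the same route as the paper: the paper likewise reuses the pair $\hat f,\hat g$ and Lemma~\ref{lemma:indistinguish} from the proof of Theorem~\ref{thm:general-hardness}, lower-bounds the fractional optimum of the $\hat f$-instance by its discrete optimum, and upper-bounds the fractional optimum of the $\hat g$-instance by observing that its multilinear extension $\check G(\bx)=\E[\hat g(\hat\bx)]$ concentrates (as the cluster size $n\to\infty$) on $\hat G$ evaluated at the cluster-average projection of $\bx$, which lies in $P(\cF)$ and is therefore at most $\overline{OPT}+\epsilon$. The only quibble is quantitative: the error in approximating $\check G(\bx)$ by $\hat G(\bar\bx)$ is $o(1)$ via concentration plus continuity of $\hat G$, not literally exponentially small, but this does not affect the argument.
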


\

\noindent{\bf Additions to the conference version and follow-up work.}
An extended abstract of this work appeared in IEEE FOCS 2009 \cite{Vondrak09}.
As mentioned above, the main theorem in \cite{Vondrak09} suffers from a technical flaw.
This does not affect the applications, but the general theorem in \cite{Vondrak09} is not correct.
We provide a corrected version of the main theorem with a complete proof
(Theorem~\ref{thm:general-hardness}) and we extend this hardness result to the problem of solving
 the multilinear relaxation (Theorem~\ref{thm:multilinear-hardness}).

Subsequently, further work has been done which exploits the symmetry gap concept. 
In \cite{OV11}, it has been proved using Theorem~\ref{thm:general-hardness} that
maximizing a nonnegative submodular function subject to a matroid independence constraint
with a factor better than $0.478$ would require exponentially many queries. Even in the case of a cardinality constraint,
$\max \{f(S): |S| \leq k\}$ cannot be approximated within a factor better than $0.491$ using subexponentially many
queries \cite{OV11}.
In the case of a matroid base constraint, assuming that the fractional base packing number
is $\nu = \frac{k}{k-1}$ for some $k \geq 2$, there is no $(1-e^{-1/k}+\epsilon)$-approximation in the value oracle model \cite{OV11}, improving the hardness of $(1-\frac{1}{\nu}+\epsilon) = (\frac{1}{k}+\epsilon)$-approximation from this paper.
These applications are not affected by the flaw in \cite{Vondrak09},
and they are implied by the corrected version of Theorem~\ref{thm:general-hardness} here.

Recently \cite{DV11}, it has been proved using the symmetry gap technique that combinatorial auctions with submodular bidders do not admit any truthful-in-expectation $1/m^\gamma$-approximation, where $m$ is the number of items and $\gamma>0$ some absolute constant.
This is the first nontrivial hardness result for truthful-in-expectation mechanisms for combinatorial auctions;
it separates the classes of monotone submodular functions and coverage functions,
where a truthful-in-expectation $(1-1/e)$-approximation is possible \cite{DRY11}.
The proof is self-contained and does not formally refer to \cite{Vondrak09}.

Moreover, this hardness result for truthful-in-expectation mechanisms as well as the main hardness result in this paper
have been converted from the oracle setting to a computational complexity setting \cite{DV12a,DV12b}.  This recent work
shows that the hardness of approximation arising from symmetry gap is not limited to instances given by an oracle, but holds
also for instances encoded explicitly on the input, under a suitable complexity-theoretic assumption.

\

\noindent{\bf Organization.}
The rest of the paper is organized as follows.
In Section~\ref{section:hardness-applications}, we present applications
of our main hardness result (Theorem~\ref{thm:general-hardness}) to concrete cases,
in particular we show how it implies the hardness statements in Theorem~\ref{thm:matroid-indep}
and \ref{thm:matroid-bases}.
In Section~\ref{section:hardness-proof}, we present the proofs of Theorem~\ref{thm:general-hardness}
 and Theorem~\ref{thm:multilinear-hardness}.
In Section~\ref{section:algorithms}, we prove the algorithmic results
in Theorem~\ref{thm:matroid-indep} and \ref{thm:matroid-bases}.
In Section~\ref{section:symmetric}, we discuss the special case of symmetric instances.
In the Appendix,
we present a few basic facts concerning submodular functions,
an extension of pipage rounding to matroid independence polytopes (rather than matroid base polytopes),
and other technicalities that would hinder the main exposition.

\section{From symmetry to inapproximability: applications}
\label{section:hardness-applications}

Before we get into the proof of Theorem~\ref{thm:general-hardness},
let us show how it can be applied to a number of specific problems.
Some of these are hardness results that were proved previously by
an ad-hoc method. The last application is a new one
(Theorem~\ref{thm:matroid-bases}).

\

\noindent{\bf Nonmonotone submodular maximization.}
Let $X = \{1,2\}$ and for any $S \subseteq X$,
$f(S) = 1$ if $|S|=1$, and $0$ otherwise.
Consider the instance $\max \{ f(S): S \subseteq X \}$.
In other words, this is the Max Cut problem on the graph $K_2$.
This instance exhibits a simple symmetry, the group of all (two) permutations
on $\{1,2\}$. We get $OPT = F(1,0) = F(0,1) = 1$, while $\overline{OPT} = F(1/2,1/2)
 = 1/2$. Hence, the symmetry gap is $1/2$.

\begin{figure}[here]
\begin{tikzpicture}[scale=.50]

\draw (-10,0) node {};

\filldraw [fill=gray,line width=1mm] (0,0) rectangle (4,2);

\filldraw [fill=white] (0.5,1) .. controls +(0,1) and +(0,1) .. (1.5,1)
 .. controls +(0,-1) and +(0,-1) .. (0.5,1);

\fill (1,1) circle (5pt);
\fill (3,1) circle (5pt);
\draw (1,1) -- (3,1);

\draw (-1,1) node {$X$};

\draw (7,1) node {$\bar{x}_1 = \bar{x}_2 = \frac{1}{2}$};

\end{tikzpicture}

\caption{Symmetric instance for nonmonotone submodular maximization: Max Cut on the graph $K_2$.
The white set denotes the optimal solution, while $\bar{\bx}$ is the (unique) symmetric solution.}

\end{figure}
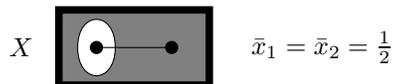

Since $f(S)$ is nonnegative submodular and there is no constraint on $S \subseteq X$,
this will be the case for any refinement of the instance as well.
Theorem~\ref{thm:general-hardness} implies immediately the following: any algorithm achieving
a $(\frac12 + \epsilon)$-approximation for nonnegative (nonmonotone) submodular
maximization requires exponentially many value queries
(which was previously known \cite{FMV07}). Note that a ``trivial instance" implies
a nontrivial hardness result. This is typically the case in applications of Theorem~\ref{thm:general-hardness}.

The same symmetry gap holds if we impose some simple constraints:
the problems $\max \{ f(S): |S| \leq 1 \}$ and $\max \{ f(S): |S| = 1 \}$
have the same symmetry gap as above. Hence, the hardness threshold of $1/2$
also holds for nonmonotone submodular maximization under cardinality constraints
of the type $|S| \leq n/2$, or $|S| = n/2$. This proves the hardness part of
Theorem~\ref{thm:matroid-indep}. This can be derived quite easily
from the construction of \cite{FMV07} as well.

\

\noindent{\bf Monotone submodular maximization.}
Let $X = [k]$ and $f(S) = \min \{|S|, 1\}$.
Consider the instance $\max \{f(S): |S| \leq 1 \}$.
This instance is invariant under all permutations on $[k]$, the symmetric group ${\bf S}_k$.
Note that the instance is {\em strongly symmetric} (Def.~\ref{def:total-sym}) with respect to ${\bf S}_k$,
since the feasibility constraint $|S| \leq 1$ depends only on the symmetrized vector
$\overline{\b1_S} = (\frac{1}{k}|S|,
\ldots, \frac{1}{k}|S|)$. We get $OPT = F(1,0,\ldots,0) = 1$, while
$\overline{OPT} = F(1/k,1/k,\ldots,1/k) = 1 - (1-1/k)^k$.

\iffalse
\begin{figure}[here]
\begin{tikzpicture}[scale=.50]

\draw (-8,0) node {};

\filldraw [fill=gray,line width=1mm] (0,0) rectangle (9,2);

\filldraw [fill=white] (0.5,1) .. controls +(0,1) and +(0,1) .. (1.5,1)
 .. controls +(0,-1) and +(0,-1) .. (0.5,1);

\fill (1,1) circle (5pt);
\fill (2,1) circle (5pt);
\fill (3,1) circle (5pt);
\fill (4,1) circle (5pt);
\fill (5,1) circle (5pt);
\fill (6,1) circle (5pt);
\fill (7,1) circle (5pt);
\fill (8,1) circle (5pt);

\draw [dotted] (1,1) -- (8,1);

\draw (-1,1) node {$X$};

\draw (11,1) node {$\bar{x}_{i} = \frac{1}{k}$};

\end{tikzpicture}

\caption{Symmetric instance for monotone submodular maximization.}

\end{figure}
\fi

Here, $f(S)$ is monotone submodular and any refinement of $\cal F$ is
a set system of the type $\tilde{\cF} = \{S: |S| \leq \ell \}$.
Based on our theorem, this implies that any approximation better than $1 - (1-1/k)^k$
for monotone submodular maximization subject
to a cardinality constraint would require exponentially many value queries.
Since this holds for any fixed $k$, we get the same hardness
result for any $\beta > \lim_{k \rightarrow \infty} (1 - (1-1/k)^k) = 1-1/e$
(which was previously known \cite{NW78}).

\

\noindent{\bf Submodular maximization over matroid bases.}
Let $X = A \cup B$, $A = \{a_1, \ldots, a_k\}$, $B = \{b_1, \ldots, b_k\}$
and ${\cal F} = \{ S: |S \cap A| = 1 \ \& \ |S \cap B| = k-1 \}$.
We define $f(S) = \sum_{i=1}^{k} f_i(S)$ where
$f_i(S) = 1$ if $a_i \in S \ \& \ b_i \notin S$, and $0$ otherwise.
This instance can be viewed as a Maximum Directed Cut problem on a graph
of $k$ disjoint arcs, under the constraint that exactly one arc tail
and $k-1$ arc heads should be on the left-hand side ($S$).
An optimal solution is for example $S = \{a_1, b_2, b_3, \ldots, b_k \}$,
which gives $OPT = 1$.
The symmetry here is that we can apply the same permutation to $A$ and $B$
simultaneously. Again, the feasibility of a set $S$ depends only on the symmetrized vector $\overline{\b1_S}$:
in fact $S \in \cF$ if and only if $\overline{\b1_S} = (\frac1k,\ldots,\frac1k,1-\frac1k,\ldots,1-\frac1k)$.
There is a unique symmetric solution
$\bar{\bx} = (\frac1k,\ldots,\frac1k,1-\frac1k,\ldots,1-\frac1k)$, and
$\overline{OPT} = F(\bar{\bx}) = \E[f(\hat{\bar{\bx}})] = \sum_{i=1}^{k}
 \E[f_i(\hat{\bar{\bx}})] = \frac1k$
(since each arc appears in the directed cut induced by $\hat{\bar{\bx}}$ with probability $\frac{1}{k^2}$).
%hence $\E[f_i(\hat{\bar{\bx}})] = \frac{1}{k^2}$).

\begin{figure}[here]
\begin{tikzpicture}[scale=.60]

\draw (-6,0) node {};

\filldraw [fill=gray,line width=1mm] (0,0) rectangle (9,2);
\filldraw [fill=gray,line width=1mm] (0,2) rectangle (9,4);

\filldraw [fill=white] (0.5,3) .. controls +(0,1) and +(0,1) .. (1.5,3)
 .. controls +(0,-1) and +(0,-1) .. (0.5,3);
\filldraw [fill=white] (1.5,1) .. controls +(0,1) and +(0,1) .. (8.5,1)
 .. controls +(0,-1) and +(0,-1) .. (1.5,1);

\fill (1,1) circle (5pt);
\fill (2,1) circle (5pt);
\fill (3,1) circle (5pt);
\fill (4,1) circle (5pt);
\fill (5,1) circle (5pt);
\fill (6,1) circle (5pt);
\fill (7,1) circle (5pt);
\fill (8,1) circle (5pt);
\fill (1,3) circle (5pt);
\fill (2,3) circle (5pt);
\fill (3,3) circle (5pt);
\fill (4,3) circle (5pt);
\fill (5,3) circle (5pt);
\fill (6,3) circle (5pt);
\fill (7,3) circle (5pt);
\fill (8,3) circle (5pt);
\draw[-latex] [line width=0.5mm] (1,3) -- (1,1);
\draw[-latex] [line width=0.5mm] (2,3) -- (2,1);
\draw[-latex] [line width=0.5mm] (3,3) -- (3,1);
\draw[-latex] [line width=0.5mm] (4,3) -- (4,1);
\draw[-latex] [line width=0.5mm] (5,3) -- (5,1);
\draw[-latex] [line width=0.5mm] (6,3) -- (6,1);
\draw[-latex] [line width=0.5mm] (7,3) -- (7,1);
\draw[-latex] [line width=0.5mm] (8,3) -- (8,1);

\draw (-1,1) node {$B$};
\draw (-1,3) node {$A$};

\draw (11,3) node {$\bar{x}_{a_i} = \frac{1}{k}$};
\draw (11,1) node {$\bar{x}_{b_i} = 1 - \frac{1}{k}$};

\end{tikzpicture}

\caption{Symmetric instance for submodular maximization over matroid bases.}

\end{figure}
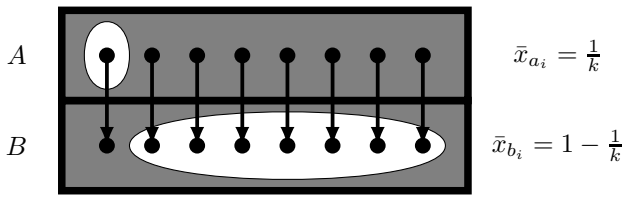
The refined instances are instances of (nonmonotone) submodular maximization
over the bases of a matroid, where the ground set is partitioned into $A \cup B$
and we should take a $\frac1k$-fraction of $A$ and a $(1-\frac1k)$-fraction of $B$.
(This means that the fractional packing number of bases is $\nu = \frac{k}{k-1}$.)
Our theorem implies that for this class of instances, an approximation better
than $1/k$ is impossible - this proves the hardness part of Theorem~\ref{thm:matroid-bases}.

\

%\noindent{\bf Hardness of solving the multilinear relaxation.}
Observe that in all the cases mentioned above, the multilinear relaxation is equivalent
to the original problem, in the sense that any fractional solution can be rounded %to an integer one
without any loss in the objective value. This implies that the same hardness factors apply to
solving the multilinear relaxation of the respective problems. In particular, using the last result
(for matroid bases), we obtain that the multilinear optimization problem $\max \{ F(\bx): \bx \in P \}$
does not admit a constant factor for nonnegative submodular functions and matroid base polytopes.
(We remark that a $(1-1/e)$-approximation can be achieved for any {\em monotone} submodular function
and any solvable polytope, i.e.~polytope over which we can optimize linear functions \cite{Vondrak08}.)

As Theorem~\ref{thm:multilinear-hardness} shows, this holds more generally - any symmetry gap
construction gives an inapproximability result for solving the multilinear optimization problem
$\max \{F(\bx): \bx \in P\}$. This in fact implies limits on what hardness results we can
possibly hope for using this technique. For instance, we cannot prove using the symmetry gap
that the monotone submodular maximization problem subject to the intersection of $k$ matroid constraints
does not admit a constant factor  - because we would also prove that the respective multilinear relaxation
does not admit such an approximation. But we know from \cite{Vondrak08} that a $(1-1/e)$-approximation
is possible for the multilinear problem in this case.

Hence, the hardness arising from the symmetry gap is related to the difficulty of solving
the multilinear optimization problem rather than the difficulty of rounding a fractional solution.
Thus this technique is primarily suited to optimization problems where the multilinear optimization problem
captures closely the original discrete problem.

\section{From symmetry to inapproximability: proof}
\label{section:hardness-proof}

\paragraph{The roadmap}
At a high level,
our proof resembles the constructions of \cite{FMV07,MSV08}.
We construct instances based on continuous functions $F(\bx)$, $G(\bx)$,
whose optima differ by a gap % $(1+\epsilon) \gamma$
for which we want to prove hardness. Then we show that after a certain perturbation,
the two instances are very hard to distinguish.
This paper generalizes the ideas of \cite{FMV07,MSV08} and brings
two new ingredients. First, we show that the functions
$F(\bx), G(\bx)$, which are ``pulled out of the hat'' in \cite{FMV07,MSV08},
can be produced in a natural way from the multilinear relaxation
of the respective problem, using the notion of a {\em symmetry gap}.
Secondly, the functions $F(\bx), G(\bx)$ are perturbed in a way that makes
them indistinguishable and this forms the main technical part
of the proof. In \cite{FMV07}, this step is quite simple.
In \cite{MSV08}, the perturbation is more complicated, but still relies
on properties of the functions $F(\bx), G(\bx)$ specific to that application.
The construction that we present here (Lemma~\ref{lemma:final-fix})
uses the symmetry properties of a fixed instance in a generic fashion.
%, and technically it is more involved than the ones in \cite{FMV07,MSV08}.

\

First, let us present an outline of our construction. Given an instance
$\max \{f(S): S \in {\cal F}\}$ exhibiting a symmetry gap $\gamma$,
we consider two smooth submodular\footnote{"Smooth submodularity"
means the condition $\mixdiff{F}{x_i}{x_j} \leq 0$ for all $i,j$.}
functions, $F(\bx)$ and $G(\bx)$.
The first one is the multilinear extension $F(\bx) = \E[f(\hat{\bx})]$,
while the second one is its symmetrized version $G(\bx) = F(\bar{\bx})$.
We modify these functions slightly so that we obtain
functions $\hat{F}(\bx)$ and $\hat{G}(\bx)$ with the following property:
For any vector $\bx$ which is close to its symmetrized version
$\bar{\bx}$, $\hat{F}(\bx) = \hat{G}(\bx)$.
The functions $\hat{F}(\bx), \hat{G}(\bx)$ induce instances of
submodular maximization on the refined ground sets. The way we define
discrete instances based on $\hat{F}(\bx), \hat{G}(\bx)$ is natural,
using the following lemma. % which has been used in \cite{MSV08,LMNS09}.
Essentially, we interpret the fractional variables as fractions of clusters
in the refined instance.

\begin{lemma}
\label{lemma:smooth-submodular}
Let $F:[0,1]^X \rightarrow \RR$,
 $N = [n]$, $n \geq 1$, and define $f:2^{N \times X} \rightarrow \RR$
so that $f(S) = F(\bx)$ where $x_i = \frac{1}{n} |S \cap (N \times \{i\})|$. Then
\begin{enumerate}
\item If $\partdiff{F}{x_i} \geq 0$ everywhere for each $i$, then
$f$ is monotone.
\item If  the first partial derivatives of $F$ are absolutely continuous\footnote{
A function $F:[0,1]^X \rightarrow \RR$ is absolutely continuous,
if $\forall \epsilon>0; \exists \delta>0; \sum_{i=1}^{t} ||\bx_i-\by_i|| < \delta \Rightarrow
\sum_{i=1}^{t} |F(\bx_i) - F(\by_i)| < \epsilon$.} and $\mixdiff{F}{x_i}{x_j} \leq 0$ almost everywhere for all $i,j$,
then $f$ is submodular.
\end{enumerate}
\end{lemma}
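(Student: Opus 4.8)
The plan is to pass from discrete differences of $f$ on the refined ground set $N \times X$ to integrals of partial derivatives of $F$ on the continuous cube $[0,1]^X$, so that the sign conditions on the derivatives of $F$ translate directly into monotonicity and submodularity of $f$. Fix $S \subseteq N \times X$ and write $\bx = \bx(S) \in [0,1]^X$ for the associated vector of cluster fractions, $x_i = \frac1n |S \cap (N \times \{i\})|$. First I would handle part~(1): if $e = (a,i) \notin S$ is an element in cluster $i$, then $\bx(S+e) = \bx(S) + \frac1n \be_i$, so $f(S+e) - f(S) = F(\bx + \frac1n \be_i) - F(\bx) = \int_0^{1/n} \partdiff{F}{x_i}(\bx + t\be_i)\, dt \geq 0$ whenever $\partdiff{F}{x_i} \geq 0$ everywhere. (Absolute continuity of $F$ along the segment, which in part~(1) we may assume or obtain from continuity plus a.e.\ differentiability as in part~(2), justifies the fundamental theorem of calculus; if $F$ is merely assumed to have a nonnegative partial derivative pointwise I would note that the hypothesis should be read as $F$ being nondecreasing in each coordinate, which is what is actually used.) Hence $f$ is monotone.

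For part~(2), I would verify submodularity through the marginal-value characterization stated in the introduction: $f_T(j) \leq f_S(j)$ whenever $S \subseteq T$ and $j \notin T$. It suffices to check this for $T = S + e'$ with $e' = (b,\ell) \notin S+e$, $e = (a,i) \notin T$, i.e.\ to show
$$ \big(f(S+e'+e) - f(S+e')\big) - \big(f(S+e) - f(S)\big) \leq 0. $$
Translating to $F$ with $\bx = \bx(S)$: the left-hand side equals
$$ \Big(F(\bx + \tfrac1n\be_\ell + \tfrac1n\be_i) - F(\bx + \tfrac1n\be_\ell)\Big) - \Big(F(\bx + \tfrac1n\be_i) - F(\bx)\Big) . $$
If $i \neq \ell$, writing each bracket as $\int_0^{1/n} \partdiff{F}{x_i}(\cdot + t\be_i)\,dt$ and subtracting, this becomes a double integral $\int_0^{1/n}\!\!\int_0^{1/n} \mixdiff{F}{x_i}{x_\ell}(\bx + s\be_\ell + t\be_i)\, ds\, dt \leq 0$ by the hypothesis $\mixdiff{F}{x_i}{x_\ell} \leq 0$ a.e. If $i = \ell$ (the two new elements lie in the same cluster), the expression is $F(\bx + \tfrac2n\be_i) - 2F(\bx + \tfrac1n\be_i) + F(\bx)$, a second difference of the one-variable function $t \mapsto F(\bx + t\be_i)$, which equals $\int_0^{1/n}\!\!\int_0^{1/n} \secdiff{F}{x_i}(\bx + (s+t)\be_i)\, ds\, dt \leq 0$ for the same reason. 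In both cases the marginal values are ordered correctly, so $f$ is submodular.

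The main technical obstacle is analytic rather than combinatorial: justifying the use of the fundamental theorem of calculus and Fubini's theorem under the stated regularity, namely that the first partial derivatives of $F$ are absolutely continuous and the second derivatives exist only almost everywhere. The point of assuming absolute continuity of $\partdiff{F}{x_i}$ (along lines, which follows from the stated definition) is exactly that $\partdiff{F}{x_i}(\bx + s\be_\ell)$ is, as a function of $s$, the integral of its a.e.-defined derivative $\mixdiff{F}{x_i}{x_\ell}$; one then needs the iterated integral to be well-defined and the order of integration interchangeable, which holds because the integrand is measurable and the region of integration is a bounded rectangle on which $\mixdiff{F}{x_i}{x_\ell}$ is integrable. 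I would state these applications of the fundamental theorem of calculus for absolutely continuous functions and of Fubini explicitly, but not belabor the measure-theoretic details, since the combinatorial content — reducing monotonicity and the diminishing-returns inequality to sign conditions on first and second derivatives — is entirely transparent.
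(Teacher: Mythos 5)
Your reduction of monotonicity and submodularity to sign conditions on the first and second derivatives of $F$ is the same as the paper's, and part (1) is fine (indeed, once $\partdiff{F}{x_i}\ge 0$ everywhere you do not even need the integral representation: a function with an everywhere-nonnegative partial derivative is nondecreasing in that coordinate, which is all the paper uses). The gap is in part (2), at the step where you bound the double integral $\int_0^{1/n}\!\!\int_0^{1/n}\mixdiff{F}{x_i}{x_\ell}(\bx+s\be_\ell+t\be_i)\,ds\,dt$ by $0$ ``by the hypothesis $\mixdiff{F}{x_i}{x_\ell}\le 0$ a.e.'' The hypothesis is that the mixed partials are nonpositive almost everywhere with respect to Lebesgue measure on $[0,1]^X$; your domain of integration is a two-dimensional slice of the cube, hence itself a null set when $|X|>2$, and the exceptional set where the mixed partial is positive or undefined could in principle meet that particular slice in a set of positive two-dimensional measure. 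Your closing remark attributes the remaining work to measurability and Fubini on the rectangle, but that is not where the difficulty lies --- the difficulty is that the sign hypothesis simply does not apply on a fixed lower-dimensional slice.

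This is exactly the issue the paper's proof is organized around: it perturbs the two comparison points to nearby $\bx',\by'$ chosen so that the segment $[\bx',\by']$ meets the bad set in a one-dimensional null set (such a choice must exist, since otherwise a Fubini-type argument would produce a positive-measure bad set in the cube), integrates the absolutely continuous function $\partdiff{F}{x_i}$ along that segment to get $\partdiff{F}{x_i}\big|_{\bx'}\ge\partdiff{F}{x_i}\big|_{\by'}$, and then lets $\bx'\to\bx$, $\by'\to\by$ using continuity of $\partdiff{F}{x_i}$. Your argument can be repaired the same way: perturb the base point in the coordinates other than $i,\ell$ so that the translated rectangle is a ``good'' slice, conclude the second difference of $F$ is nonpositive there, and pass to the limit using continuity of $F$. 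Without some such step the proof is incomplete.
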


\begin{proof}
First, assume $\partdiff{F}{x_i} \geq 0$ everywhere for all $i$. This implies that $F$ is nondecreasing in every coordinate,
i.e. $F(\bx) \leq F(\by)$ whenever $\bx \leq \by$. This means that $f(S) \leq f(T)$ whenever $S \subseteq T$.

Next, assume $\partdiff{F}{x_i}$ is absolutely continuous for each $i$ and $\mixdiff{F}{x_j}{x_i} \leq 0$ almost everywhere for all $i,j$.
We want to prove that $\partdiff{F}{x_i} |_\bx \geq \partdiff{F}{x_i} |_\by$ whenever $\bx \leq \by$, which implies that the marginal
values of $f$ are nonincreasing.

Let $\bx \leq \by$,
fix $\delta>0$ arbitrarily small, and pick $\bx',\by'$ such that $||\bx'-\bx||<\delta, ||\by'-\by||<\delta, \bx' \leq \by'$ and
on the line segment $[\bx', \by']$, we have $\mixdiff{F}{x_j}{x_i} \leq 0$ except for a set of (1-dimensional) measure zero. If such a pair of points
$\bx', \by'$ does not exist, it means that there are sets $A,B$ of positive measure such that
$\bx \in A, \by \in B$ and for any
$\bx' \in A, \by' \in B$, the line segment $[\bx',\by']$ contains a subset of positive (1-dimensional) measure where $\mixdiff{F}{x_j}{x_i}$ 
for some $j$ is positive or undefined. This would imply that $[0,1]^X$ contains a subset of positive measure where
$\mixdiff{F}{x_j}{x_i}$ for some $j$ is positive or undefined, which we assume is not the case.

Therefore, there is a pair of points $\bx', \by'$ as described above.
We compare $\partdiff{F}{x_i} |_{\bx'}$ and $\partdiff{F}{x_i} |_{\by'}$
by integrating along the line segment $[\bx', \by']$. Since $\partdiff{F}{x_i}$ is absolutely continuous and $\partdiff{}{x_j} \partdiff{F}{x_i} = \mixdiff{F}{x_j}{x_i} \leq 0$ along this line segment for all $j$ except for a set of measure zero, we obtain $\partdiff{F}{x_i} |_{\bx'} \geq \partdiff{F}{x_i} |_{\by'}$. This is true for $\bx', \by'$ arbitrarily
close to $\bx, \by$, and hence by continuity of $\partdiff{F}{x_i}$, we get $\partdiff{F}{x_i} |_{\bx} \geq \partdiff{F}{x_i} |_{\by}$.
This implies that the marginal values of $f$ are nonincreasing.
\end{proof}

The way we construct $\hat{F}(\bx), \hat{G}(\bx)$ is such that,
given a large enough refinement of the ground set,
it is impossible to distinguish the instances corresponding to
$\hat{F}(\bx)$ and $\hat{G}(\bx)$. As we argue more precisely later,
this holds because if the ground set is large and labeled in a random way
(considering the symmetry group of the instance), a query about a vector $\bx$
effectively becomes a query about the symmetrized vector $\bar{\bx}$.
We would like this property to imply that all queries with high probability fall
in the region where $\hat{F}(\bx) = \hat{G}(\bx)$ and the inability to distinguish between $\hat{F}$
and $\hat{G}$ gives the hardness result that we want.
The following lemma gives the precise properties of $\hat{F}(\bx)$
and $\hat{G}(\bx)$ that we need.

\begin{lemma}
\label{lemma:final-fix}
Consider a function $f:2^X \rightarrow \RR_+$ invariant under a group
of permutations $\cal G$ on the ground set $X$.
Let $F(\bx) = \E[f(\hat{\bx})]$, $\bar{\bx} = \E_{\sigma \in \cG}[\sigma(\bx)]$,
%$OPT = \max \{F(\bx): \bx \in P(\cF)\}$ and
%$\overline{OPT} = \max \{F(\bar{\bx}): \bx \in P(\cF)\}$.
and fix any $\epsilon > 0$.
Then there is $\delta > 0$ and functions $\hat{F}, \hat{G}:[0,1]^X \rightarrow \RR_+$
(which are also symmetric with respect to $\cG$), satisfying:
\begin{enumerate}
\item For all $\bx \in [0,1]^X$, $\hat{G}(\bx) = \hat{F}(\bar{\bx})$.
\item For all $\bx \in [0,1]^X$, $|\hat{F}(\bx) - F(\bx)| \leq \epsilon$.
\item Whenever $||\bx - \bar{\bx}||^2 \leq \delta$, $\hat{F}(\bx) = \hat{G}(\bx)$ % = \hat{F}(\bar{\bx})$.
and the value depends only on $\bar{\bx}$.
%\item $\max \{\hat{F}(\bx): \bx \in P({\cF})\} \geq OPT$.
%\item $\max \{\hat{G}(\bx): \bx \in P({\cF})\} \leq (1+\epsilon) \overline{OPT}$.
\item The first partial derivatives of $\hat{F}, \hat{G}$ are absolutely continuous. 
\item If $f$ is monotone, then $\partdiff{\hat{F}}{x_i} \geq 0$ and
   $\partdiff{\hat{G}}{x_i} \geq 0$ everywhere.
\item If $f$ is submodular, then $\mixdiff{\hat{F}}{x_i}{x_j} \leq 0$ and
   $\mixdiff{\hat{G}}{x_i}{x_j} \leq 0$ almost everywhere.
\end{enumerate}
\end{lemma}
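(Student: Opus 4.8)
The plan is to construct $\hat F$ and $\hat G$ by interpolating between $F$ and its symmetrization $G(\bx) = F(\bar\bx)$, using a smooth cutoff function of the ``distance to the symmetric subspace'' $d(\bx)^2 = \|\bx - \bar\bx\|^2$. Concretely, I would first fix a small parameter $\delta>0$ (to be chosen at the end) and pick a $C^1$ (or smoother) nondecreasing function $\phi:[0,\infty)\to[0,1]$ with $\phi(t) = 0$ for $t \le \delta$ and $\phi(t) = 1$ for $t \ge 2\delta$, with bounded derivative. Then set $\hat F(\bx) = F(\bar\bx) + \phi(d(\bx)^2)\,(F(\bx) - F(\bar\bx))$ and $\hat G(\bx) = \hat F(\bar\bx) = F(\bar\bx)$ (note $\overline{\bar\bx} = \bar\bx$ since $\bar{\cdot}$ is a projection onto the $\cG$-fixed subspace, so $d(\bar\bx)=0$ and the cutoff kills the correction term). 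Property~1 is then immediate from this definition; Property~3 holds because for $\|\bx-\bar\bx\|^2\le\delta$ we have $\phi(d(\bx)^2)=0$, hence $\hat F(\bx) = F(\bar\bx) = \hat G(\bx)$, which depends only on $\bar\bx$. Symmetry of $\hat F,\hat G$ under $\cG$ follows because $d(\sigma\bx) = d(\bx)$ and $\overline{\sigma\bx} = \bar\bx$ for $\sigma\in\cG$, and $F$ is $\cG$-invariant (being the multilinear extension of a $\cG$-invariant $f$).

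For Property~2, the correction term has magnitude $\phi(d(\bx)^2)\,|F(\bx)-F(\bar\bx)| \le |F(\bx)-F(\bar\bx)|$, and $\phi$ is supported on $d(\bx)^2 \ge \delta$. But this is not yet small — $|F(\bx)-F(\bar\bx)|$ can be a constant. The fix is the standard one: before interpolating, rescale the whole construction by shrinking toward the symmetric subspace. That is, I would work instead with $\hat F(\bx) = F(\bar\bx) + \psi_\delta(\bx)\,(F(\bx)-F(\bar\bx))$ where $\psi_\delta$ is designed so that wherever $\psi_\delta(\bx)\ne 0$ we also have $d(\bx)$ large enough that... no — rather, the cleaner route (the one I expect the paper to take) is to first replace $\bx$ by a contracted point. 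Define an auxiliary map that moves $\bx$ a bounded distance toward $\bar\bx$, controlled so the displacement is $O(\sqrt\delta)$, and use Lipschitzness of $F$ (which follows from $F$ being a multilinear polynomial on a compact domain, hence $C^\infty$ with bounded derivatives) to bound $|F(\bx)-F(\text{contracted }\bx)| \le L\cdot O(\sqrt\delta) \le \epsilon$ once $\delta$ is small. So the order of operations is: (a) choose $\delta$ small enough that $L\sqrt{\delta}\cdot(\text{const}) < \epsilon$; (b) build the contraction-plus-cutoff; (c) verify Properties 1–3; (d) verify smoothness and the derivative sign conditions.

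Properties~4 is routine: $F$ is a polynomial, $\bar\bx$ is a linear function of $\bx$, and $\phi$ is chosen $C^1$ with Lipschitz derivative, so $\hat F$ is $C^1$ with absolutely continuous (indeed Lipschitz) first partials; $\hat G(\bx)=F(\bar\bx)$ is even smoother. Properties~5 and~6 — preservation of monotonicity and of the smooth-submodularity inequality $\mixdiff{F}{x_i}{x_j}\le 0$ — are \textbf{the main obstacle}, because the cutoff and contraction introduce cross terms in the Hessian that need not have the right sign. The resolution I would pursue is to exploit that both ``endpoints'' of the interpolation are smooth-submodular: $F(\bx)$ is (its Hessian is the Hessian of a multilinear extension of a submodular $f$, hence $\mixdiff{F}{x_i}{x_j}\le 0$), and $G(\bx) = F(\bar\bx)$ is as well (averaging preserves the sign of mixed partials — a short computation via the chain rule, since $\bar\bx = \E_\sigma[\sigma\bx]$ and each $\sigma$ just permutes coordinates). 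If the interpolation is taken to be \emph{convex} in the right sense — e.g.\ if $\hat F$ is a convex combination $\lambda(\bx)F(\bx) + (1-\lambda(\bx))G(\bx)$ where $\lambda$ depends only on a single scalar ``radial'' coordinate chosen so that it too is smooth-submodular / affine in a compatible way — then the mixed partials of $\hat F$ decompose into a nonpositive combination of the (nonpositive) mixed partials of $F$ and $G$ plus genuinely problematic terms involving $\partial\lambda$. I expect the actual argument to push the problematic terms to be negligible by making $\phi$ vary slowly (its derivative bounded by something like $1/\delta$ but multiplied by the already-$O(\epsilon)$-small correction), or, more likely, to choose the cutoff to act in a direction orthogonal to the relevant partials so that the extra Hessian terms vanish identically. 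Getting this sign bookkeeping to work out cleanly, uniformly over the domain and ``almost everywhere,'' while keeping all of Properties 1–3 intact, is where the real care is needed; everything else is bounded-derivative estimates on polynomials.
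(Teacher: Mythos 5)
Your skeleton is the paper's: interpolate between $F$ and its symmetrization $G(\bx)=F(\bar{\bx})$ via a scalar cutoff applied to $D(\bx)=\|\bx-\bar{\bx}\|^2$, and then fight with the Hessian. But two things go wrong. First, your Property~2 worry is an algebra slip and the ``contraction'' fix is a detour: with $\hat{F}=G+\phi(D)(F-G)$ the deviation from $F$ is $(1-\phi(D))\,H$ where $H=F-G$, which is supported where $D$ is \emph{small}; and the fact you are missing -- which is also indispensable later -- is that $H$ and $\nabla H$ vanish on the symmetric subspace (the gradient identity $\nabla G|_{\bx}=\nabla F|_{\bar{\bx}}$), whence $|H(\bx)|\le 8M|X|\,D(\bx)$ and $\|\nabla H\|\le 8M|X|\sqrt{D(\bx)}$ by the mean value theorem. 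No rescaling toward the symmetric subspace is needed.

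Second, and this is the genuine gap, your cutoff cannot work for Properties~5--6, and the avenues you sketch do not close it. The second derivative of $\phi(D)H$ produces terms $\phi'(D)\bigl(\partdiff{D}{x_i}\partdiff{H}{x_j}+\partdiff{D}{x_j}\partdiff{H}{x_i}+\mixdiff{D}{x_i}{x_j}H\bigr)$ and $\phi''(D)\partdiff{D}{x_i}\partdiff{D}{x_j}H$; using the bounds above and $\|\nabla D\|=2\sqrt{D}$, these are of order $D\,\phi'(D)$ and $D^2\phi''(D)$. A transition over $[\delta,2\delta]$ forces $t\phi'(t)=\Theta(1)$ there, so the mixed partials are perturbed by $\Theta(M|X|)$ -- a constant, not something you can wave away -- and there is no direction-orthogonality to exploit since $\nabla D=2(\bx-\bar{\bx})$ points exactly where $\nabla H$ lives. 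The paper instead requires $|t\phi'(t)|\le 4\alpha$ and $|t^2\phi''(t)|\le 10\alpha$ \emph{uniformly} for a tiny $\alpha$, which forces $\phi$ to decay like $t^{-2\alpha}$ over a multiplicatively huge range $[\delta,e^{1/(2\alpha^2)}]$ and never reach $0$; Property~2 then holds by a case split (either $D\le\beta$ and $|H|\le\epsilon/2$, or $D\ge\beta$ and $\phi(D)\le e^{-1/\alpha}$). Even then the sign conditions fail by $O(\alpha M|X|)$ and must be restored by adding $256M|X|\alpha\,J(\bx)$ with $J(\bx)=|X|^2+3|X|\sum_i x_i-(\sum_i x_i)^2$, which is strictly monotone, strictly submodular, depends only on $\sum_i x_i$ (so Properties~1 and~3 survive), and costs only $O(\alpha M|X|^3)\le\epsilon$ in Property~2. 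Without the quadratic vanishing of $H$, the slow-decay cutoff, and this final additive correction, the proof does not go through.
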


The proof of this lemma is the main technical part of this paper
and we defer it to the end of this section. Assuming this lemma,
we first finish the proof of the main theorem. We prove the following.

\begin{lemma}
\label{lemma:indistinguish}
Let $\hat{F}, \hat{G}$ be the two functions provided by Lemma~\ref{lemma:final-fix}.
For a parameter $n \in \ZZ_+$ and $N = [n]$, define two discrete functions
$\hat{f}, \hat{g}: 2^{N \times X} \rightarrow \RR_+$ as follows:
Let $\sigma^{(i)}$ be an arbitrary permutation in $\cG$ for each $i \in N$.
For every set $S \subseteq N \times X$, we define a vector $\xi(S) \in [0,1]^X$ by
$$ \xi_j(S) = \frac{1}{n} \left|\{i \in N: (i,\sigma^{(i)}(j)) \in S \}\right|.$$
Let us define:
$$ \hat{f}(S) = \hat{F}(\xi(S)), \ \ \ \ \ \hat{g}(S) = \hat{G}(\xi(S)).$$
In addition, let $\tilde{\cF} = \{\tilde{S}: \xi(\tilde{S}) \in P(\cF)\}$
be a feasibility constraint such that the condition $S \in \cF$ depends
only on the symmetrized vector $\overline{\b1_S}$.
Then deciding whether an instance given by value/membership oracles is
$\max \{\hat{f}(S): S \in \tilde{\cF}\}$ or $\max \{\hat{g}(S): S \in \tilde{\cF} \}$
(even by a randomized algorithm, with any constant probability of success)
requires an exponential number of queries.
\end{lemma}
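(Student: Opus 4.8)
The plan is to set up a two-instance ``hider'' argument: an adversary secretly picks one of the two instances, and we show that any algorithm making only subexponentially many value/membership queries cannot tell which instance it received. The key point is that the labeling of the refined ground set $N \times X$ is randomized through the permutations $\sigma^{(i)} \in \cG$, so from the algorithm's perspective a query about a set $S$ behaves essentially like a query about the symmetrized fractional vector $\overline{\xi(S)}$, and on symmetrized vectors $\hat f$ and $\hat g$ agree (by Property~1 and 3 of Lemma~\ref{lemma:final-fix}). First I would formalize the distribution: the adversary draws the tuple $(\sigma^{(1)},\ldots,\sigma^{(n)})$ uniformly from $\cG^n$ and a bit $b \in \{f,g\}$; the algorithm then queries sets $S_1, S_2, \ldots$ adaptively. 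I would also note that the feasibility oracle gives the algorithm nothing: since $\tilde\cF$ is defined so that membership depends only on the symmetrized vector $\overline{\b1_S}$, the membership oracle's answer on any $S$ is a deterministic function of $S$ that does not depend on the random permutations or on $b$, hence can be simulated by the algorithm itself. This is exactly the point that the erratum addresses, and it is why we need the strongly-symmetric hypothesis.

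Next I would prove the core concentration claim: for any fixed set $S \subseteq N \times X$ of the algorithm's choosing, with probability $1 - e^{-\Omega(n)}$ over the random choice of $(\sigma^{(i)})_{i \in N}$, the vector $\xi(S)$ satisfies $\|\xi(S) - \overline{\xi(S)}\|^2 \leq \delta$, where $\delta$ is the constant from Lemma~\ref{lemma:final-fix}. The idea is that $\xi_j(S) = \frac1n \sum_{i \in N} \b1[(i,\sigma^{(i)}(j)) \in S]$ is an average of $n$ independent bounded random variables (one per $i \in N$, since the $\sigma^{(i)}$ are independent), and its expectation over the random $\sigma^{(i)}$ is precisely a $\cG$-symmetrized quantity. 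A Chernoff/Hoeffding bound gives that each coordinate $\xi_j(S)$ is within $\delta'$ of its mean with probability $1 - e^{-\Omega(n)}$; a union bound over the $|X| = k$ coordinates (a constant) keeps the failure probability $e^{-\Omega(n)}$, and one checks that the mean vector is $\cG$-invariant so the deviation from $\overline{\xi(S)}$ is controlled. On the good event, Property~3 of Lemma~\ref{lemma:final-fix} gives $\hat f(S) = \hat F(\xi(S)) = \hat G(\xi(S)) = \hat g(S)$, so the query returns the same value regardless of $b$.

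To finish, I would run a union bound over the queries of a putative algorithm. Suppose the algorithm makes at most $M = 2^{o(n)}$ queries. A subtlety here is adaptivity: the set queried at step $t$ may depend on previous answers. The standard resolution is to argue by induction that, conditioned on all previous queries having landed in the ``good'' region, the transcript so far is independent of $b$ and is a deterministic function of the realized permutations; hence the next query $S_t$ is a fixed set given that history, and the concentration bound applies to it with failure probability $e^{-\Omega(n)}$. Taking a union bound, with probability $1 - M e^{-\Omega(n)} = 1 - 2^{o(n)} e^{-\Omega(n)} \to 1$ every query lands in the good region, so the entire transcript — and therefore the algorithm's output — is independent of $b$. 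Consequently the algorithm cannot decide $b$ with probability bounded away from $1/2$ unless $M = 2^{\Omega(n)}$, which after choosing $|N \times X| = nk$ as the size of the ground set is exponential in the input size. The main obstacle is handling adaptivity cleanly — making precise that ``conditioned on good history, the next query is a fixed set'' — together with verifying that the Chernoff bound really yields deviation from the \emph{$\cG$-symmetrized} vector (not merely from the coordinatewise mean); both are routine once the independence structure across $i \in N$ is isolated, which is why the permutations are chosen independently per copy.
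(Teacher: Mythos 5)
Your proposal is correct and follows essentially the same route as the paper's proof: the membership oracle is neutralized via the strong-symmetry hypothesis, each coordinate $\xi_j(Q)$ is an average of independent indicators whose mean is exactly $\bar q_j$, Chernoff plus a union bound over the constantly many coordinates puts every query in the region where $\hat F$ and $\hat G$ agree and depend only on $\bar{\bq}$, and adaptivity is handled by observing that on the good event the whole query sequence is a fixed sequence determined in advance (the paper phrases this via Yao's principle and a fixed computation path rather than your explicit induction, but it is the same argument). No gaps.
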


\begin{proof}
Let $\sigma^{(i)} \in \cG$ be chosen independently at random for each $i \in N$ and consider the instances
$\max\{\hat{f}(S): S \in \tilde{\cF} \}$, $\max \{\hat{g}(S): S \in \tilde{\cF} \}$ as described in the lemma. We show that
every deterministic algorithm will follow the same computation path and return the same answer on both instances,
with high probability. By Yao's principle, this means that every randomized algorithm returns
the same answer for the two instances with high probability, for some particular $\sigma^{(i)} \in \cG$.

The feasible sets in the refined instance, $S \in \tilde{\cF}$, are such that
the respective vector $\xi(S)$ is in the polytope $P({\cF})$.
Since the instance is strongly symmetric, the condition $S \in \cF$ depends
only on the symmetrized vector $\overline{\b1_S}$. Hence, the condition $\xi(S) \in P(\cF)$
depends only on the symmetrized vector $\overline{\xi(S)}$. Therefore, $S \in \tilde{\cF} \Leftrightarrow
\xi(S) \in P(\cF) \Leftrightarrow \overline{\xi(S)} \in P(\cF)$.
We have $\overline{\xi(S)}_j = \E_{\sigma \in \cG}[\frac{1}{n} \left|\{i \in N: (i,\sigma^{(i)}(\sigma(j))) \in S \}\right|].$ The distribution of $\sigma^{(i)} \circ \sigma$ is the same as that of $\sigma$, i.e.~uniform over $\cG$. Hence, $\overline{\xi(S)}_j$ and consequently the condition $S \in \tilde{\cF}$ does not depend on $\sigma^{(i)}$ in any way.
Intuitively, an algorithm cannot learn any information about the permutations $\sigma^{(i)}$ by querying the feasibility oracle,
since the feasibility condition $S \in \tilde{\cF}$ does not depend on $\sigma^{(i)}$ for any $i \in N$.

The main part of the proof is to show that even queries to the objective function are unlikely to reveal
any information about the permutations $\sigma^{(i)}$.
The key observation is that for any fixed query $Q$ to the objective function,
the associated vector $\bq = \xi(Q)$ is very likely to be close
to its symmetrized version $\bar{\bq}$.
To see this, consider a query $Q$. The associated vector $\bq = \xi(Q)$
is determined by
$$ q_j = \frac{1}{n} |\{i \in N: (i,\sigma^{(i)}(j)) \in Q\}|
 = \frac{1}{n} \sum_{i=1}^{n} Q_{ij} $$
where $Q_{ij}$ is an indicator variable of the event $(i,\sigma^{(i)}(j)) \in Q$.
This is a random event due to the randomness in $\sigma^{(i)}$.
We have
$$ \E[Q_{ij}] = \Pr[Q_{ij}=1] = \Pr_{\sigma^{(i)} \in \cG}[(i,\sigma^{(i)}(j)) \in Q].$$
Adding up these expectations over $i \in N$, we get
$$ \sum_{i \in N} \E[Q_{ij}]
 = \sum_{i \in N} \Pr_{\sigma^{(i)} \in \cG}[(i,\sigma^{(i)}(j)) \in Q] 
 = \E_{\sigma \in \cG} [|\{i \in N: (i,\sigma(j)) \in Q \}|].$$
For the purposes of expectation, the independence of $\sigma^{(1)},
 \ldots, \sigma^{(n)}$ is irrelevant and that is why we replace them by one random permutation $\sigma$.
On the other hand, consider the symmetrized vector $\bar{\bq}$, with coordinates
$$ \bar{q}_j = \E_{\sigma \in \cG}[q_{\sigma(j)}]
 = \frac{1}{n} \E_{\sigma \in \cG}[|\{i \in N: (i,\sigma^{(i)}(\sigma(j))) \in Q\}|]
 = \frac{1}{n} \E_{\sigma \in \cG}[|\{i \in N: (i,\sigma(j)) \in Q \}|] $$
using the fact that the distribution of $\sigma^{(i)} \circ \sigma$
is the same as the distribution of $\sigma$ - uniformly random over $\cG$.
Note that the vector $\bq$ depends on the random permutations $\sigma^{(i)}$
but the symmetrized vector $\bar{\bq}$ does not; this will be also useful
in the following. For now, we summarize that
$$ \bar{q}_j = \frac{1}{n} \sum_{i=1}^{n} \E[Q_{ij}] = \E[q_j].$$
Since each permutation $\sigma^{(i)}$ is chosen independently, the random variables
$\{Q_{ij}: 1 \leq i \leq n\}$ are independent (for a fixed $j$).
We can apply Chernoff's bound (see e.g. \cite{AlonSpencer}, Corollary A.1.7):
$$ \Pr\left[\left|\sum_{i=1}^{n} Q_{ij}- \sum_{i=1}^{n} \E[Q_{ij}] \right| > a \right]
 < 2e^{-2a^2 / n}.$$
Using $q_j = \frac{1}{n} \sum_{i=1}^{n} Q_{ij}$,
 $\bar{q}_j = \frac{1}{n} \sum_{i=1}^{n} \E[Q_{ij}]$
and setting $a = n \sqrt{\delta/|X|}$, we obtain
$$ \Pr\left[|q_j - \bar{q}_j| > \sqrt{{\delta}/{|X|}}\right]
% = \Pr\left[\left| \sum_{i=1}^{n} Q_{ij} - \sum_{i=1}^{n} \E[Q_{ij}] \right| > n \sqrt{\frac{\delta}{|X|}} \right]
 < 2e^{-2 n \delta / |X|}.$$
By the union bound,
\begin{equation}
\label{eq:D(q)}
  \Pr[||\bq-\bar{\bq}||^2 > \delta] \leq \sum_{j \in X} \Pr[|q_j-\bar{q}_j|^2 > {\delta}/{|X|}] < 2|X| e^{-2n \delta/|X|}.
\end{equation}
Note that while $\delta$ and $|X|$ are constants, $n$ grows as the size
of the refinement and hence the probability is exponentially small in the
size of the ground set $N \times X$.

Define $D(\bq) = ||\bq-\bar{\bq}||^2$. As long as $D(\bq) \leq \delta$ for every query issued by the algorithm,
the answers do not depend on the randomness of the input. This is because
then the values of $\hat{F}(\bq)$ and $\hat{G}(\bq)$ depend only on $\bar{\bq}$,
which is independent of the random permutations $\sigma^{(i)}$,
as we argued above.
Therefore, assuming that $D(\bq) \leq \delta$ for each query,
the algorithm will always follow the same path
of computation and issue the same sequence of queries $\cal S$.
(Note that this is just a fixed sequence which can be written down
before we started running the algorithm.)
Assume that $|{\cal S}| = e^{o(n)}$, i.e.,~the number of queries is subexponential in $n$.
By (\ref{eq:D(q)}), using a union bound over all $Q \in {\cal S}$,
it happens with probability $1 - e^{-\Omega(n)}$ that $\DD(\bq) = ||\bq-\bar{\bq}||^2 \leq \delta$ for all $Q \in {\cal S}$.
(Note that $\delta$ and $|X|$ are constants here.)
Then, the algorithm indeed follows this path of computation and gives the same answer.
In particular, the answer does not depend on whether the instance is $\max \{\hat{f}(S): S \in \tilde{\cF}\}$
or $\max \{\hat{g}(S): S \in \tilde{\cF}\}$.
\end{proof}

\begin{proof}[Proof of Theorem~\ref{thm:general-hardness}]
Fix an $\epsilon > 0$.
Given an instance $\max \{f(S): S \in \cF\}$
strongly symmetric under $\cG$, let $\hat{F}, \hat{G}: [0,1]^X \rightarrow \RR$
be the two functions provided by Lemma~\ref{lemma:final-fix}.
We choose a large number $n$ and consider a refinement $\tilde{\cF}$
on the ground set $N \times X$, where $N = [n]$.
We define discrete instances of submodular maximization
$\max \{ \hat{f}(S): S \in \tilde{\cF} \}$ and $\max \{ \hat{g}(S): S \in \tilde{\cF} \}$.
As in Lemma~\ref{lemma:indistinguish}, for each $i \in N$
we choose a random permutation $\sigma^{(i)} \in \cG$.
This can be viewed as a random shuffle of the labeling of the ground set
before we present it to an algorithm.
For every set $S \subseteq N \times X$, we define a vector $\xi(S) \in [0,1]^X$ by
$$ \xi_j(S) = \frac{1}{n} \left|\{i \in N: (i,\sigma^{(i)}(j)) \in S \}\right|.$$
In other words, $\xi_j(S)$ measures the fraction of copies of element $j$
contained in $S$; however, for each $i$ the $i$-copies of all elements
are shuffled by $\sigma^{(i)}$. Next, we define
$$ \hat{f}(S) = \hat{F}(\xi(S)), \ \ \ \ \ \hat{g}(S) = \hat{G}(\xi(S)).$$
We claim that $\hat{f}$ and $\hat{g}$ are submodular (for any fixed $\xi$).
Note that the effect
of $\sigma^{(i)}$ is just a renaming (or shuffling) of the elements
of $N \times X$, and hence for the purpose of proving submodularity we can assume
that $\sigma^{(i)}$ is the identity for all $i$. Then, $\xi_j(S) = \frac{1}{n}
 |S \cap (N \times \{j\})|$. Due to Lemma~\ref{lemma:smooth-submodular},
the property $\mixdiff{\hat{F}}{x_i}{x_j} \leq 0$ (almost everywhere) implies that $\hat{f}$ is submodular.
In addition, if the original instance was monotone, then $\partdiff{\hat{F}}{x_j} \geq 0$
and $\hat{f}$ is monotone. The same holds for $\hat{g}$.

The value of $\hat{g}(S)$ for any feasible solution $S \in \tilde{\cF}$
is bounded by $\hat{g}(S) = \hat{G}(\xi(S)) = \hat{F}(\overline{\xi(S)})
\leq \overline{OPT} + \epsilon$.
On the other hand, let $\bx^*$ denote a point where the optimum of the continuous problem
$\max \{\hat{F}(\bx): \bx \in P({\cF}) \}$ is attained, i.e. $\hat{F}(\bx^*) \geq OPT - \epsilon$.
For a large enough $n$, we can approximate the point $\bx^*$ arbitrarily closely
by a rational vector with $n$ in the denominator,
which corresponds to a discrete solution $S^* \in \tilde{\cF}$ whose
value $\hat{f}(S^*)$ is at least, say, $OPT - 2 \epsilon$.
Hence, the ratio between the optima of the {\em discrete} optimization
problems $\max \{\hat{f}(S): S \in \tilde{\cF}\}$ and $\max \{\hat{g}(S): S \in \tilde{\cF} \}$
can be made at most $\frac{\overline{OPT} + \epsilon}{OPT - 2 \epsilon}$, i.e. arbitrarily close
to the symmetry gap $\gamma = \frac{\overline{OPT}}{OPT}$.

By Lemma~\ref{lemma:indistinguish}, distinguishing the two instances
$\max \{\hat{f}(S): S \in \tilde{\cF} \}$  and $\max \{\hat{g}(S): S \in \tilde{\cF} \}$,
even by a randomized algorithm, requires an exponential number of value queries.
Therefore, we cannot estimate the optimum within a factor better than $\gamma$.
\end{proof}

Next, we prove Theorem~\ref{thm:multilinear-hardness} (again assuming Lemma~\ref{lemma:final-fix}),
i.e.~an analogous hardness result for solving the multilinear optimization problem.

\begin{proof}[Proof of Theorem~\ref{thm:multilinear-hardness}]
Given a symmetric instance $\max \{f(S): S \in \cF\}$ and $\epsilon>0$, we construct
refined and modified instances $\max \{\hat{f}(S): S \in \tilde{\cF} \}$,
 $\max \{\hat{g}(S): S \in \tilde{\cF} \}$,
derived from the continuous functions $\hat{F}, \hat{G}$ provided by Lemma~\ref{lemma:final-fix},
exactly as we did in the proof of Theorem~\ref{thm:general-hardness}.
Lemma~\ref{lemma:indistinguish} states that these two instances cannot be distinguished using a subexponential number of value queries. Furthermore, the gap between the two modified instances corresponds to the symmetry gap $\gamma$ of the original instance: $\max \{\hat{f}(S): S \in \tilde{\cF} \} \geq OPT - 2 \epsilon$ and $\max \{\hat{g}(S): S \in \tilde{\cF} \} \leq \overline{OPT} + \epsilon = \gamma OPT + \epsilon$.

Now we consider the multilinear relaxations of the two refined instances, $\max \{\check{F}(\bx): \bx \in P(\tilde{\cF})\}$
and $\max \{\check{G}(\bx): \bx \in P(\tilde{\cF})\}$. Note that $\check{F}, \check{G}$, (although related to $\hat{F},
\hat{G}$) are not exactly the same as the functions $\hat{F}, \hat{G}$; in particular, they are defined
on a larger (refined) domain. However, we show that the gap between the optima of the two instances remains
the same.

First, the value of $\max \{\check{F}(\bx): \bx \in P(\cF)\}$ is at least the optimum of the discrete problem,
$\max \{\hat{f}(S): S \in \tilde{\cF} \}$, which is at least $OPT - 2\epsilon$ as in the proof of
Theorem~\ref{thm:general-hardness}. The value of $\max \{\check{G}(\bx): \bx \in P(\cF)\}$ can be analyzed as follows. For any fractional solution $\bx \in P(\cF)$, the value of $\check{G}(\bx)$
is the expectation $\E[\hat{g}(\hat{\bx})]$, where $\hat{\bx}$ is obtained by independently rounding the
coordinates of $\bx$ to $\{0,1\}$. Recall that $\hat{g}$ is obtained by discretizing the continuous function
$\hat{G}$ (using Lemma~\ref{lemma:smooth-submodular}). In particular, $\hat{g}(S) = \hat{G}(\tilde{\bx})$ where
$\tilde{x}_i = \frac{1}{n} |S \cap (N \times \{i\})|$ is the fraction of the respective cluster contained in $S$, and $|N| = n$ is the size of each cluster (the refinement parameter). If $\b1_S = \hat{\bx}$, i.e. $S$ is chosen by independent sampling with probabilities according to $\bx$, then for large $n$ the fractions $\frac{1}{n}|S \cap (N \times \{i\})|$ will be strongly concentrated around their expectation. As $\hat{G}$ is continuous, we get
$\lim_{n \rightarrow \infty} \E[\hat{g}(\hat{\bx})] = \lim_{n \rightarrow \infty} \E[\hat{G}(\tilde{\bx})]
 = \hat{G}(\E[\tilde{\bx}]) = \hat{G}(\bar{\bx})$.
Here, $\bar{\bx}$ is the vector $\bx$ projected back to the original ground set $X$, where the coordinates
of each cluster have been averaged. By construction of the refinement, if $\bx \in P(\tilde{\cF})$ then
$\bar{\bx}$ is in the polytope corresponding to the original instance, $P(\cF)$.
Therefore, $\hat{G}(\bar{\bx}) \leq
\max \{\hat{G}(\bx): \bx \in P(\cF) \} \leq \gamma OPT + \epsilon$. For large enough $n$, this means that
$\max \{\check{G}(\bx): \bx \in P(\tilde{\cF}) \} \leq \gamma OPT + 2 \epsilon$. This holds for
an arbitrarily small fixed $\epsilon>0$, and hence the gap between the instances 
$\max \{\check{F}(\bx): \bx \in P(\tilde{\cF})\}$ and $\max \{\check{G}(\bx): \bx \in P(\tilde{\cF})\}$
(which cannot be distinguished) can be made arbitrarily close to $\gamma$.
\end{proof}

\paragraph{Hardness for symmetric instances}
We remark that since Lemma~\ref{lemma:final-fix} provides functions $\hat{F}$ and $\hat{G}$ symmetric
under $\cG$, the refined instances that we define are invariant with respect to
the following symmetries: permute the copies of each element in an arbitrary
way, and permute the classes of copies according to any permutation
$\sigma \in \cG$. This means that our hardness results also hold
for instances satisfying such symmetry properties.

\

It remains to prove Lemma~\ref{lemma:final-fix}.
Before we move to the final construction of $\hat{F}(\bx)$ and $\hat{G}(\bx)$,
we construct as an intermediate step a function $\tilde{F}(\bx)$ which is helpful
in the analysis.

\

\paragraph{Construction}
Let us construct a function $\tilde{F}(\bx)$ which satisfies the following:
\begin{itemize}
\item For $\bx$ ``sufficiently close'' to $\bar{\bx}$, $\tilde{F}(\bx) = G(\bx)$.
\item For $\bx$ ``sufficiently far away'' from $\bar{\bx}$, $\tilde{F}(\bx) \simeq F(\bx)$.
\item The function $\tilde{F}(\bx)$ is ``approximately" smooth submodular.
\end{itemize}
Once we have $\tilde{F}(\bx)$, we can fix it to obtain a smooth submodular
function $\hat{F}(\bx)$, which is still close to the original function $F(\bx)$.
We also fix $G(\bx)$ in the same way, to obtain a function $\hat{G}(\bx)$
which is equal to $\hat{F}(\bx)$ whenever $\bx$ is close to $\bar{\bx}$. We defer
this step until the end.

We define $\tilde{F}(\bx)$ as a convex linear combination
of $F(\bx)$ and $G(\bx)$, guided by a ``smooth transition'' function, depending
on the distance of $\bx$ from $\bar{\bx}$. The form that we use is the following:\footnote{We remark
that a construction analogous to \cite{MSV08}
would be $\tilde{F}(\bx) = F(\bx) - \phi(H(\bx))$ where $H(\bx) = F(\bx) - G(\bx)$.
While this makes the analysis easier in \cite{MSV08}, it cannot be used in general.
Roughly speaking, the problem is that in general the partial derivatives of $H(\bx)$
are not bounded in any way by the value of $H(\bx)$.}
$$ \tilde{F}(\bx) = (1-\phi(D(\bx))) F(\bx) + \phi(D(\bx)) G(\bx) $$
where $\phi:\RR_+ \rightarrow [0,1]$ is a suitable smooth function,
and
$$ D(\bx) = ||\bx - \bar{\bx}||^2 = \sum_i (x_i - \bar{x}_i)^2.$$
The idea is that when $\bx$ is close to $\bar{\bx}$, $\phi(D(\bx))$ should be close to $1$,
i.e. the convex linear combination should give most of the weight to $G(\bx)$.
The weight should shift gradually to $F(\bx)$ as $\bx$ gets further away from $\bar{\bx}$.
Therefore, we define $\phi(t) = 1$ in a small interval $t \in [0,\delta]$,
and $\phi(t)$ tends to $0$ as $t$ increases.
This guarantees that $\tilde{F}(\bx) = G(\bx)$ whenever
$D(\bx) = ||\bx - \bar{\bx}||^2 \leq \delta$.
We defer the precise construction of $\phi(t)$ to Lemma~\ref{lemma:phi-construction},
after we determine what properties we need from $\phi(t)$.
Note that regardless of the definition of $\phi(t)$, $\tilde{F}(\bx)$
is symmetric with respect to $\cG$, since $F(\bx), G(\bx)$ and $D(\bx)$ are.

\

\paragraph{Analysis of the construction}
Due to the construction of $\tilde{F}(\bx)$, it is clear that when
$D(\bx) = ||\bx-\bar{\bx}||^2$ is small, $\tilde{F}(\bx) = G(\bx)$.
When $D(\bx)$ is large, $\tilde{F}(\bx) \simeq F(\bx)$.
The main issue, however, is whether we can say something about the first
and second partial derivatives of $\tilde{F}$. This is crucial for
the properties of monotonicity and submodularity, which we would like to preserve.
Let us write $\tilde{F}(\bx)$ as
$$ \tilde{F}(\bx) = F(\bx) - \phi(D(\bx)) H(\bx) $$
where $H(\bx) = F(\bx) - G(\bx)$. By differentiating once, we get
\begin{equation}
\label{eq:partdiff}
\partdiff{\tilde{F}}{x_i} = \partdiff{F}{x_i} - \phi(D(\bx)) \partdiff{H}{x_i}
 - \phi'(D(\bx)) \partdiff{D}{x_i} H(\bx)
\end{equation}
and by differentiating twice,
\begin{eqnarray}
\label{eq:mixdiff}
 \mixdiff{\tilde{F}}{x_i}{x_j} & = & \mixdiff{F}{x_i}{x_j}
- \phi(\DD(\bx)) \mixdiff{H}{x_i}{x_j}
 - \phi''(\DD(\bx)) \partdiff{\DD}{x_i} \partdiff{\DD}{x_j} H(\bx) \\
 & & - \phi'(\DD(\bx)) \left( \partdiff{\DD}{x_j} \partdiff{H}{x_i}
  + \mixdiff{\DD}{x_i}{x_j} H(\bx) + \partdiff{\DD}{x_i} \partdiff{H}{x_j} \right) \nonumber.
\end{eqnarray}

The first two terms on the right-hand sides of (\ref{eq:partdiff}) and
(\ref{eq:mixdiff}) are not bothering us, because they form
convex linear combinations of the derivatives of $F(\bx)$ and $G(\bx)$,
which have the properties that we need. The remaining terms might cause
problems, however, and we need to estimate them.

Our strategy is to define $\phi(t)$ in such a way that it eliminates
the influence of partial derivatives of $D$ and $H$ where they become too large.
Roughly speaking, $D$ and $H$ have negligible partial derivatives when $\bx$
is very close to $\bar{\bx}$. As $\bx$ moves away from $\bar{\bx}$, the partial
derivatives grow but then the behavior of $\phi(t)$ must be such that
their influence is supressed.

We start with the following important claim.\footnote{
We remind the reader that $\nabla F$, the gradient of $F$, is a vector
whose coordinates are the first partial derivatives $\partdiff{F}{x_i}$.
We denote by $\nabla F |_{\bx}$ the gradient evaluated at $\bx$.}

\begin{lemma}
\label{lemma:grad-symmetry}
Assume that $F:[0,1]^X \rightarrow \RR$ is differentiable and invariant under a group of
permutations of coordinates ${\cal G}$. Let $G(\bx) = F(\bar{\bx})$,
where $\bar{\bx} = \E_{\sigma \in {\cal G}}[\sigma(\bx)]$.
Then for any $\bx \in [0,1]^X$,
 $$ \nabla{G}|_\bx = \nabla{F}|_{\bar{\bx}}.$$
\end{lemma}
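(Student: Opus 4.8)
The plan is to prove the gradient identity $\nabla G|_{\bx} = \nabla F|_{\bar\bx}$ by a direct application of the chain rule, using the fact that the symmetrization map $\bx \mapsto \bar\bx$ is linear and the fact that $F$ is $\cG$-invariant (hence its gradient transforms equivariantly). First I would write the symmetrization map explicitly as a linear operator: $\bar{x}_i = \frac{1}{|\cG|}\sum_{\sigma \in \cG} x_{\sigma^{-1}(i)}$, so $\bar\bx = M\bx$ for the doubly stochastic matrix $M$ with $M_{ij} = \frac{1}{|\cG|}|\{\sigma \in \cG : \sigma(j) = i\}|$. Then $G(\bx) = F(M\bx)$, and the chain rule gives $\nabla G|_{\bx} = M^T (\nabla F|_{M\bx}) = M^T(\nabla F|_{\bar\bx})$.

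The remaining point is to show $M^T(\nabla F|_{\bar\bx}) = \nabla F|_{\bar\bx}$, i.e.\ that the gradient of $F$ evaluated at a symmetric point is itself a fixed point of $M^T$. For this I would use two observations. First, differentiating the invariance relation $F(\sigma(\bx)) = F(\bx)$ shows that $\nabla F|_{\sigma(\bx)} = \sigma(\nabla F|_{\bx})$ for every $\sigma \in \cG$ (the gradient is $\cG$-equivariant); averaging over $\sigma \in \cG$ and using that $\bar\bx$ is $\cG$-invariant ($\sigma(\bar\bx) = \bar\bx$ for all $\sigma \in \cG$) gives $\nabla F|_{\bar\bx} = \E_{\sigma \in \cG}[\sigma(\nabla F|_{\bar\bx})] = \overline{\nabla F|_{\bar\bx}}$. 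Second, $M^T$ implements exactly this same averaging operation $\bv \mapsto \bar\bv = \E_{\sigma \in \cG}[\sigma(\bv)]$ (one checks $(M^T)_{ij} = M_{ji} = \frac{1}{|\cG|}|\{\sigma : \sigma(i) = j\}| = \frac{1}{|\cG|}|\{\sigma: \sigma^{-1}(j) = i\}|$, which is the matrix of the symmetrization map). Hence $M^T(\nabla F|_{\bar\bx}) = \overline{\nabla F|_{\bar\bx}} = \nabla F|_{\bar\bx}$, and combining with the chain rule step finishes the proof.

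I would keep the write-up coordinate-light where possible: the cleanest phrasing is "symmetrization is an orthogonal (in fact self-adjoint idempotent) projection onto the $\cG$-invariant subspace, $G = F \circ (\text{symmetrization})$, and the gradient of $F$ at a fixed point of this projection is already invariant, so the adjoint of the projection acts trivially on it." The one place that needs a careful word is the equivariance of the gradient: from $F(\sigma \bx) = F(\bx)$, differentiating with respect to $\bx$ via the chain rule gives $P_\sigma^T \nabla F|_{\sigma \bx} = \nabla F|_{\bx}$ where $P_\sigma$ is the permutation matrix of $\sigma$; since $P_\sigma^T = P_{\sigma^{-1}} = P_\sigma^{-1}$, this rearranges to $\nabla F|_{\sigma\bx} = P_\sigma \nabla F|_{\bx}$, i.e.\ $\nabla F|_{\sigma\bx} = \sigma(\nabla F|_{\bx})$, as claimed.

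The main obstacle is essentially bookkeeping rather than conceptual: making sure the indices and the direction of the permutation are handled consistently so that $M^T$ is genuinely the symmetrization operator and not some twisted variant, and being careful that $F$ is only assumed differentiable (not $C^1$) so the argument uses only the chain rule for differentiable functions. There is no real analytic difficulty here; the identity is a soft consequence of linearity of symmetrization plus equivariance of the gradient under a symmetry group, and I expect the proof to be only a few lines once the operator $M$ and its transpose are pinned down.
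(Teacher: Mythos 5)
Your proposal is correct and follows essentially the same route as the paper: the paper also applies the chain rule to $G = F\circ(\text{symmetrization})$ and then uses the identity $\partdiff{F}{x_i}\big|_{\bar{\bu}} = \partdiff{F}{x_{\sigma^{-1}(i)}}\big|_{\bar{\bu}}$ (obtained by differentiating the invariance relation and evaluating at the fixed point $\bar{\bu}$), which is exactly your statement that $\nabla F|_{\bar\bx}$ is $\cG$-invariant so the adjoint of the averaging operator acts trivially on it. The only difference is notational (matrix/operator language versus coordinates with Kronecker deltas), and your observation that averaging over $\sigma$ versus $\sigma^{-1}$ is immaterial because $\cG$ is a group is the same bookkeeping step the paper performs.
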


\begin{proof}
To avoid confusion, we use $\bx$ for the arguments of the functions $F$ and $G$,
and $\bu$, $\bar{\bu}$, etc. for points where their partial derivatives are evaluated.
To rephrase, we want to prove that for any point $\bu$ and any coordinate $i$,
the partial derivatives of $F$ and $G$ evaluated at $\bar{u}$ are equal:
$\partdiff{G}{x_i} \Big|_{\bx=\bu} = \partdiff{F}{x_i} \Big|_{\bx=\bar{\bu}}$.

First, consider $F(\bx)$. We assume that $F(\bx)$ is invariant under
a group of permutations of coordinates $\cal G$,
i.e. $F(\bx) = F(\sigma(\bx))$ for any $\sigma \in {\cal G}$.
Differentiating both sides at $\bx=\bu$, we get by the chain rule:
$$ \partdiff{F}{x_i} \Big|_{\bx=\bu} =
 \sum_j \partdiff{F}{x_j} \Big|_{\bx=\sigma(\bu)} \partdiff{}{x_i} (\sigma(\bx))_j
 = \sum_j \partdiff{F}{x_j} \Big|_{\bx=\sigma(\bu)} \partdiff{x_{\sigma(j)}}{x_i}. $$
Here, $\partdiff{x_{\sigma(j)}}{x_i} = 1$ if $\sigma(j) = i$, and $0$ otherwise.
Therefore,
$$ \partdiff{F}{x_i} \Big|_{\bx=\bu} =
 \partdiff{F}{x_{\sigma^{-1}(i)}} \Big|_{\bx=\sigma(\bu)}. $$
Now, if we evaluate the left-hand side at $\bar{\bu}$, the right-hand side is evaluated
at $\sigma(\bar{\bu}) = \bar{\bu}$, and hence for any $i$ and any $\sigma \in {\cal G}$,
\begin{equation}
\label{eq:F-inv}
\partdiff{F}{x_i} \Big|_{\bx=\bar{\bu}} = \partdiff{F}{x_{\sigma^{-1}(i)}} \Big|_{\bx=\bar{\bu}}.
\end{equation}
Turning to $G(\bx) = F(\bar{\bx})$, let us write $\partdiff{G}{x_i}$
using the chain rule:
$$ \partdiff{G}{x_i} \Big|_{\bx=\bu} = \partdiff{}{x_i} F(\bar{\bx}) \Big|_{\bx=\bu}
 = \sum_j \partdiff{F}{x_j} \Big|_{\bx=\bar{\bu}} \cdot \partdiff{\bar{x}_j}{x_i}. $$
We have $\bar{x}_j = \E_{\sigma \in {\cal G}}[\bx_{\sigma(j)}]$, and
so
$$ \partdiff{G}{x_i} \Big|_{\bx=\bu} = \sum_j \partdiff{F}{x_j} \Big|_{\bx=\bar{\bu}}
 \cdot \partdiff{}{x_i} \E_{\sigma \in {\cal G}}[x_{\sigma(j)}] 
 = \E_{\sigma \in {\cal G}} \left[\sum_j \partdiff{F}{x_j} \Big|_{x=\bar{u}}
 \cdot \partdiff{x_{\sigma(j)}}{x_i} \right]. $$
Again, $\partdiff{x_{\sigma(j)}}{x_i} = 1$ if $\sigma(j) = i$ and $0$ otherwise.
Consequently, we obtain
$$ \partdiff{G}{x_i} \Big|_{\bx=\bu} = \E_{\sigma \in {\cal G}}
 \left[ \partdiff{F}{x_{\sigma^{-1}(i)}} \Big|_{\bx=\bar{\bu}} \right]
 = \partdiff{F}{x_i} \Big|_{\bx=\bar{\bu}} $$
where we used Eq. (\ref{eq:F-inv}) to remove the dependence on $\sigma \in {\cal G}$.
\end{proof}

Observe that the symmetrization operation $\bar{\bx}$ is idempotent, i.e.
$\bar{\bar{\bx}} = \bar{\bx}$.
Because of this, we also get $\nabla{G}|_{\bar{\bx}} = \nabla{F}|_{\bar{\bx}}$.
Note that $G(\bar{\bx}) = F(\bar{\bx})$ follows from the definition,
but it is not obvious that the same holds for gradients, since their
definition involves points where $G(\bx) \neq F(\bx)$. For second partial
derivatives, the equality no longer holds, as can be seen from
a simple example such as $F(x_1,x_2) = 1 - (1-x_1)(1-x_2)$,
$G(x_1,x_2) = 1 - (1-\frac{x_1+x_2}{2})^2$.

Next, we show that the functions $F(\bx)$ and $G(\bx)$ are very similar
in the close vicinity of the region where $\bar{\bx} = \bx$. Recall our definitions:
$H(\bx) = F(\bx) - G(\bx)$, $D(\bx) = ||\bx - \bar{\bx}||^2$.
Based on Lemma~\ref{lemma:grad-symmetry}, we know that $H(\bar{\bx}) = 0$
and $\nabla H|_{\bar{\bx}} = 0$. In the following lemmas,
we present bounds on $H(\bx)$, $D(\bx)$ and their partial derivatives.

\begin{lemma}
\label{lemma:H-bounds}
Let $f:2^X \rightarrow [0,M]$ be invariant under a permutation group $\cal G$.
Let $\bar{\bx} = \E_{\sigma \in {\cal G}}[\sigma(\bx)]$,
$D(\bx) = ||\bx - \bar{\bx}||^2$ and $H(\bx) = F(\bx) - G(\bx)$ where $F(\bx) = \E[f(\hat{\bx})]$
and $G(\bx) = F(\bar{\bx})$. Then
\begin{enumerate}
\item  $ |\mixdiff{H}{x_i}{x_j}| \leq 8M $ everywhere, for all $i,j$;
\item  $ ||\nabla H(\bx)|| \leq 8M|X| \sqrt{D(\bx)}$;
\item  $ |H(\bx)| \leq  8M|X| \cdot D(\bx). $
\end{enumerate}
\end{lemma}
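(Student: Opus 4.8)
The plan is to derive all three estimates from a single bound on the entries of the Hessian of $H$, by integrating twice along the segment joining $\bar{\bx}$ to $\bx$ --- a segment at whose left endpoint both $H$ and $\nabla H$ vanish. First I would record the preliminaries. The multilinear extension $F$ is a polynomial, and the symmetrization $\bx \mapsto \bar{\bx}$ is linear and idempotent ($\overline{\bar{\bx}} = \bar{\bx}$), so $G(\bx) = F(\bar{\bx})$ is also a polynomial and $H = F-G$ is $C^\infty$; in particular all the differentiation and integration below is legitimate. Applying Lemma~\ref{lemma:grad-symmetry} and then idempotence gives $\nabla G|_{\bar{\bx}} = \nabla F|_{\overline{\bar{\bx}}} = \nabla F|_{\bar{\bx}}$, while $G(\bar{\bx}) = F(\bar{\bx})$ by definition, so $H(\bar{\bx}) = 0$ and $\nabla H|_{\bar{\bx}} = 0$. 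Finally, for $\bx \in [0,1]^X$ the segment $\bx_t = (1-t)\bar{\bx} + t\bx$, $t \in [0,1]$, stays in $[0,1]^X$ and (again by linearity and idempotence) satisfies $\overline{\bx_t} = \bar{\bx}$, hence $\bx_t - \overline{\bx_t} = t(\bx - \bar{\bx})$ and $D(\bx_t) = t^2 D(\bx)$; this last identity is the one genuinely structural fact used.

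For part (1) I would bound the Hessians of $F$ and of $G$ separately. Since $F$ is multilinear, $\secdiff{F}{x_i} = 0$, and for $i \neq j$ the quantity $\mixdiff{F}{x_i}{x_j}$ is the expectation over a random set $S$ (obtained by independent rounding on the coordinates other than $i$ and $j$) of the second difference $f(S+i+j) - f(S+i) - f(S+j) + f(S)$, which lies in $[-2M, 2M]$ since $f$ takes values in $[0,M]$. For $G$, write $\bar{x}_j = \sum_\ell p_{j\ell}\, x_\ell$, where $p_{j\ell} = \Pr_{\sigma \in \cG}[\sigma(j) = \ell]$ is a doubly stochastic matrix; the chain rule then gives $\mixdiff{G}{x_i}{x_k} = \sum_{j,\ell} p_{ji}\, p_{\ell k}\, \mixdiff{F}{x_j}{x_\ell}\big|_{\bar{\bx}}$, and summing the nonnegative weights over $j$ and over $\ell$ (each sum being $1$) gives $|\mixdiff{G}{x_i}{x_k}| \leq 2M$. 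Hence $|\mixdiff{H}{x_i}{x_j}| \leq 4M \leq 8M$.

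For part (2), since $\nabla H|_{\bar{\bx}} = 0$, the fundamental theorem of calculus along $\bx_t$ gives $\partdiff{H}{x_i}\big|_{\bx} = \int_0^1 \sum_j \mixdiff{H}{x_i}{x_j}\big|_{\bx_t}\, (x_j - \bar{x}_j)\, dt$; bounding the Hessian entries by $4M$ and using Cauchy--Schwarz ($\sum_j |x_j - \bar{x}_j| \leq \sqrt{|X|\, D(\bx)}$) bounds each coordinate of $\nabla H(\bx)$ by $4M \sqrt{|X|\, D(\bx)}$, and then summing squares over the $|X|$ coordinates gives $\|\nabla H(\bx)\| \leq 4M|X| \sqrt{D(\bx)} \leq 8M|X| \sqrt{D(\bx)}$. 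For part (3), since $H(\bar{\bx}) = 0$, I would integrate the gradient: $H(\bx) = \int_0^1 \nabla H(\bx_t) \cdot (\bx - \bar{\bx})\, dt$; by Cauchy--Schwarz, part (2), and the scaling $D(\bx_t) = t^2 D(\bx)$, the integrand is bounded in absolute value by $\|\nabla H(\bx_t)\|\cdot \sqrt{D(\bx)} \leq 4M|X|\, t\, D(\bx)$, so integrating $t$ over $[0,1]$ gives $|H(\bx)| \leq 2M|X|\, D(\bx) \leq 8M|X|\, D(\bx)$.

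I do not expect a genuine obstacle: the whole proof is two applications of the fundamental theorem of calculus from the base point $\bar{\bx}$. The two places where one must be careful are the quadratic dependence in part (3), which rests entirely on the identity $\overline{\bx_t} = \bar{\bx}$ (equivalently $D(\bx_t) = t^2 D(\bx)$), which is exactly what converts the $\sqrt{D}$-type bound on $\nabla H$ from part (2) into the $D$-type bound on $H$; and the doubly stochastic bookkeeping used to bound the Hessian of $G$ in part (1). The constant $8M$ in the statement is not tight --- $4M$, respectively $2M|X|$, already works.
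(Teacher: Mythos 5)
Your proof is correct and follows essentially the same route as the paper's: bound the Hessian entries of $F$ directly and those of $G$ via the chain rule through the (doubly stochastic) symmetrization, then integrate twice along the segment from $\bar{\bx}$, using $H(\bar{\bx})=0$ and $\nabla H|_{\bar{\bx}}=0$ from Lemma~\ref{lemma:grad-symmetry}. The only differences are cosmetic — you use the fundamental theorem of calculus where the paper invokes the mean value theorem, and your sharper estimate of the second difference ($2M$ rather than $4M$) yields better constants than the stated $8M$ bounds, which are indeed not tight.
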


\begin{proof}
First, let us get a bound on the second partial derivatives.
Assuming without loss of generality $x_i=x_j=0$, we have\footnote{$\bx \vee \by$
denotes the coordinate-wise maximum, $(\bx \vee \by)_i = \max \{x_i,y_i\}$ and
$\bx \wedge \by$ denotes the coordinate-wise minimum, $(\bx \wedge \by)_i = \min \{x_i,y_i\}$.}
$$ \mixdiff{F}{x_i}{x_j} =  \E[f(\hat{\bx} \vee (\be_i+\be_j)) -
 f(\hat{\bx} \vee \be_i) - f(\hat{\bx} \vee \be_j) + f(\hat{\bx})] $$
(see \cite{Vondrak08}). Consequently,
$$ \Big| \mixdiff{F}{x_i}{x_j} \Big| \leq 4 \max |f(S)| = 4 M.$$
It is a little bit more involved to analyze $\mixdiff{G}{x_i}{x_j}$.
Since $G(\bx) = F(\bar{\bx})$ and $\bar{\bx} = \E_{\sigma \in {\cal G}}[\sigma(\bx)]$,
we get by the chain rule:
$$ \mixdiff{G}{x_i}{x_j} = \sum_{k,\ell} \mixdiff{F}{x_k}{x_\ell}
 \partdiff{\bar{x}_k}{x_i} \partdiff{\bar{x}_\ell}{x_j} 
 = \E_{\sigma,\tau \in {\cal G}} \left[ \sum_{k,\ell} \mixdiff{F}{x_k}{x_\ell}
   \partdiff{x_{\sigma(k)}}{x_i} \partdiff{x_{\tau(\ell)}}{x_j} \right].$$
It is useful here to use the Kronecker symbol, $\delta_{i,j}$,
which is $1$ if $i=j$ and $0$ otherwise. Note that
$\partdiff{x_{\sigma(k)}}{x_i} = \delta_{i,\sigma(k)} = \delta_{\sigma^{-1}(i),k}$,
etc. Using this notation, we get
$$ \mixdiff{G}{x_i}{x_j} = \E_{\sigma,\tau \in {\cal G}} \left[  \sum_{k,\ell}
\mixdiff{F}{x_k}{x_\ell} \delta_{\sigma^{-1}(i),k} \delta_{\sigma^{-1}(j),\ell} \right] 
 = \E_{\sigma,\tau \in {\cal G}}
 \left[ \mixdiff{F}{x_{\sigma^{-1}(i)}}{x_{\tau^{-1}(j)}} \right], $$
$$ \Big| \mixdiff{G}{x_i}{x_j} \Big| \leq \E_{\sigma,\tau \in {\cal G}}
\left[ \Big| \mixdiff{F}{x_{\sigma^{-1}(i)}}{x_{\tau^{-1}(j)}} \Big| \right]
 \leq 4 M $$
and therefore
$$ \Big| \mixdiff{H}{x_i}{x_j} \Big|
 = \Big| \mixdiff{F}{x_i}{x_j} - \mixdiff{G}{x_i}{x_j} \Big|
 \leq 8 M.$$

Next, we estimate $\partdiff{H}{x_i}$ at a given point $\bu$, depending
on its distance from $\bar{\bu}$. Consider
the line segment between $\bar{\bu}$ and $\bu$. The function $H(\bx) = F(\bx) - G(\bx)$
is $C_\infty$-differentiable, and hence we can apply the mean value theorem
to $\partdiff{H}{x_i}$: There exists a point $\tilde{\bu}$ on the line segment
$[\bar{\bu}, \bu]$ such that
$$ \partdiff{H}{x_i} \Big|_{\bx=\bu} - \partdiff{H}{x_i} \Big|_{\bx=\bar{\bu}}
 = \sum_j \mixdiff{H}{x_j}{x_i} \Big|_{\bx=\tilde{\bu}} (u_j-\bar{u}_j).$$
Recall that $\partdiff{H}{x_i} \Big|_{\bx=\bar{\bu}} = 0$.
Applying the Cauchy-Schwartz inequality to the right-hand side, we get
$$ \left( \partdiff{H}{x_i} \Big|_{\bx=\bu} \right)^2
 \leq \sum_j \left( \mixdiff{H}{x_j}{x_i} \Big|_{\bx=\tilde{\bu}} \right)^2
  || \bu - \bar{\bu} ||^2  \leq (8M)^2 |X| ||\bu - \bar{\bu}||^2.$$
Adding up over all $i \in X$, we obtain
$$ || \nabla H(\bu) ||^2 = \sum_{i}\left( \partdiff{H}{x_i} \Big|_{\bx=\bu} \right)^2
%\sum_{i,j} \left( \mixdiff{H}{x_j}{x_i} \Big|_{\bx=\tilde{\bu}} \right)^2 || \bu - \bar{\bu} ||^2
 \leq (8M|X|)^2 ||\bu - \bar{\bu}||^2.$$
Finally, we estimate the growth of $H(\bu)$. Again, by the mean value theorem,
there is a point $\tilde{\bu}$ on the line segment $[\bar{\bu},\bu]$, such that
$$ H(\bu) - H(\bar{\bu}) = (\bu - \bar{\bu}) \cdot \nabla H(\tilde{\bu}).$$
%{x_i} \Big|_{\bx=\tilde{\bu}} (u_i-\bar{u}_i).$$
Using $H(\bar{\bu}) = 0$, the Cauchy-Schwartz inequality and the above bound
on $\nabla H$,
$$ (H(\bu))^2 \leq ||\nabla H({\tilde{\bu}})||^2  ||\bu - \bar{\bu}||^2
 \leq (8M|X|)^2 ||\tilde{\bu}-\bar{\bu}||^2 ||\bu-\bar{\bu}||^2. $$
%$$ |H(\bu)| \leq 8M|X| \cdot ||\tilde{\bu}-\bar{\bu}|| \cdot ||\bu-\bar{\bu}||.$$
Clearly, $||\tilde{\bu} - \bar{\bu}|| \leq ||\bu - \bar{\bu}||$, and therefore
$$ |H(\bu)| \leq 8M|X| \cdot ||\bu-\bar{\bu}||^2.$$
\end{proof}

\begin{lemma}
\label{lemma:D-bounds}
For the function $D(\bx) = ||\bx - \bar{\bx}||^2$, we have
\begin{enumerate}
\item $\nabla D = 2(\bx - \bar{\bx})$, and therefore
 $||\nabla D|| = 2 \sqrt{D(\bx)}$.
\item For all $i,j$, $|\mixdiff{D}{x_i}{x_j}| \leq 2$.
\end{enumerate}
\end{lemma}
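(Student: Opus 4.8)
The plan is to use the single structural fact that symmetrization $\bx \mapsto \bar{\bx}$ is a \emph{linear, idempotent} operator. Write $\bar{x}_k = \E_{\sigma \in \cG}[x_{\sigma(k)}] = \sum_\ell p_{k\ell}\, x_\ell$, where $p_{k\ell} = \Pr_{\sigma \in \cG}[\sigma(k) = \ell]$, and record three properties of the matrix $\Pi = (p_{k\ell})$: (i) it is doubly stochastic, since its rows sum to $1$ by definition and $\sum_k p_{k\ell} = \E_\sigma[|\{k : \sigma(k) = \ell\}|] = 1$; (ii) it is symmetric, since replacing $\sigma$ by $\sigma^{-1}$ preserves the uniform measure on the group $\cG$, whence $p_{k\ell} = \Pr[\sigma(k)=\ell] = \Pr[\sigma(\ell)=k] = p_{\ell k}$; and (iii) it is idempotent, $\Pi^2 = \Pi$, which is exactly the statement $\bar{\bar{\bx}} = \bar{\bx}$ already noted in the text. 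In particular $\Pi$ is an orthogonal projection, $I-\Pi$ is too, and $D(\bx) = \|(I-\Pi)\bx\|^2$ is a quadratic form with matrix $I-\Pi$.

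For part 1 I would simply differentiate $D(\bx) = \sum_k (x_k - \bar{x}_k)^2$. Using $\partdiff{\bar{x}_k}{x_i} = p_{ki}$, the chain rule gives $\partdiff{D}{x_i} = \sum_k 2(x_k - \bar{x}_k)(\delta_{ki} - p_{ki}) = 2(x_i - \bar{x}_i) - 2\sum_k (x_k - \bar{x}_k)\,p_{ki}$. The correction term vanishes: by symmetry of $\Pi$, $\sum_k x_k\, p_{ki} = \sum_k p_{ik}\, x_k = \bar{x}_i$, and likewise $\sum_k \bar{x}_k\, p_{ki} = \sum_k p_{ik}\,\bar{x}_k = \bar{x}_i$ using idempotence $\bar{\bar{\bx}} = \bar{\bx}$; the two sums cancel. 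Hence $\nabla D = 2(\bx - \bar{\bx})$, and therefore $\|\nabla D\|^2 = 4\sum_i (x_i - \bar{x}_i)^2 = 4 D(\bx)$, i.e. $\|\nabla D\| = 2\sqrt{D(\bx)}$.

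For part 2 I would differentiate the clean expression once more: $\mixdiff{D}{x_i}{x_j} = \partdiff{}{x_j}\bigl(2(x_i - \bar{x}_i)\bigr) = 2(\delta_{ij} - p_{ij})$, so the Hessian of $D$ is the constant matrix $2(I-\Pi)$. Since each $p_{ij}$ is a probability, $p_{ij} \in [0,1]$, and $\delta_{ij} \in \{0,1\}$, we conclude $|\mixdiff{D}{x_i}{x_j}| = 2\,|\delta_{ij} - p_{ij}| \le 2$, as claimed.

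There is no genuine obstacle here; the only point requiring a moment's care is justifying properties (i)--(iii) of $\Pi$, especially symmetry (which really uses that $\cG$ is closed under inverses) and idempotence. Once these are available, each part is one line of elementary calculus. Equivalently, in coordinate-free form: $\Pi = \frac{1}{|\cG|}\sum_{\sigma \in \cG} P_\sigma$ is an orthogonal projection, so $D(\bx) = \bx^{\top}(I-\Pi)\bx$ has gradient $2(I-\Pi)\bx = 2(\bx - \bar{\bx})$ and constant Hessian $2(I-\Pi)$, whose entries lie in $[-2,2]$.
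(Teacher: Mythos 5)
Your proof is correct, and it reaches the same conclusions as the paper's by a noticeably cleaner route. The paper differentiates $D(\bx)=\sum_i \E_\sigma[x_i-x_{\sigma(i)}]\E_\tau[x_i-x_{\tau(i)}]$ directly, tracking Kronecker deltas and collapsing the resulting expectation $2\,\E_{\sigma,\tau}[x_j - x_{\sigma(j)} - x_{\tau^{-1}(j)} + x_{\sigma(\tau^{-1}(j))}]$ by observing that $\sigma(j)$, $\tau^{-1}(j)$ and $\sigma(\tau^{-1}(j))$ are identically distributed; the second-derivative bound then comes from $2\,\E_\sigma[\delta_{ij}-\delta_{i,\sigma(j)}]$. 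You instead isolate the structural content up front: $\Pi = \frac{1}{|\cG|}\sum_\sigma P_\sigma$ is doubly stochastic, symmetric (closure of $\cG$ under inverses), and idempotent (closure under composition), hence an orthogonal projection, so $D$ is the quadratic form of the projection $I-\Pi$ and both parts of the lemma are immediate from $\nabla(\bx^\top A\bx)=2A\bx$ and the entrywise bound $|\delta_{ij}-p_{ij}|\le 1$. The two arguments use exactly the same group-theoretic facts — your symmetry and idempotence of $\Pi$ are the paper's equidistribution of $\tau^{-1}(j)$ and $\sigma(\tau^{-1}(j))$ in matrix clothing — but your packaging makes the constancy of the Hessian and the identity $\|\nabla D\|^2=4D$ transparent rather than emergent from a computation, and it would shorten the neighboring Lemma~\ref{lemma:H-bounds} as well (e.g.\ the bound on $\mixdiff{G}{x_i}{x_j}$ is just conjugation by a doubly stochastic matrix). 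The only care points, which you correctly flag and justify, are that symmetry of $\Pi$ genuinely needs $\cG$ to be a group and not merely a set of permutations.
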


\begin{proof}
Let us write $D(\bx)$ as
$$ D(\bx) = \sum_i (x_i - \bar{x}_i)^2 = \sum_i \E_{\sigma \in {\cal G}}[x_i - x_{\sigma(i)}]
 \E_{\tau \in {\cal G}}[x_i - x_{\tau(i)}]. $$
Taking the first partial derivative,
$$ \partdiff{D}{x_j} = 2 \sum_i \E_{\sigma \in {\cal G}}[x_i - x_{\sigma(i)}]
 \partdiff{}{x_j} \E_{\tau \in {\cal G}}[x_i - x_{\tau(i)}].$$
As before, we have $\partdiff{x_i}{x_j} = \delta_{ij}$.
Using this notation, we get
\begin{eqnarray*}
\partdiff{D}{x_j} & = & 2 \sum_i \E_{\sigma \in {\cal G}}[x_i - x_{\sigma(i)}]
 \E_{\tau \in {\cal G}}[\delta_{ij} - \delta_{\tau(i),j}] \\
 & = & 2 \sum_i \E_{\sigma,\tau \in {\cal G}}[(x_i - x_{\sigma(i)})
    (\delta_{ij} - \delta_{i,\tau^{-1}(j)})] \\
 & = & 2 \ \E_{\sigma,\tau \in {\cal G}}[x_j - x_{\sigma(j)} - x_{\tau^{-1}(j)}
    + x_{\sigma(\tau^{-1}(j))}].
\end{eqnarray*}
Since the distributions of $\sigma(j)$, $\tau^{-1}(j)$ and $\sigma(\tau^{-1}(j))$
are the same, we obtain
$$ \partdiff{D}{x_j} = 2 \ \E_{\sigma \in {\cal G}}[x_j - x_{\sigma(j)}]
 = 2(x_j - \bar{x}_j) $$
and
$$ ||\nabla D||^2 = \sum_j \Big| \partdiff{D}{x_j} \Big|^2
 = 4 \sum_j (x_j - \bar{x}_j)^2 = 4 D(\bx).$$

Finally, the second partial derivatives are
$$ \mixdiff{D}{x_i}{x_j} = 2 \partdiff{}{x_i} (x_j - \bar{x}_j)
 = 2 \partdiff{}{x_i} \E_{\sigma \in {\cal G}}[x_j - x_{\sigma(j)}]
 = 2 \E_{\sigma \in {\cal G}}[\delta_{ij} - \delta_{i,\sigma(j)}] $$
which is clearly bounded by $2$ in the absolute value.
\end{proof}

Now we come back to $\tilde{F}(\bx)$ and its partial derivatives.
Recall equations (\ref{eq:partdiff}) and (\ref{eq:mixdiff}).
The problematic terms are those involving $\phi'(D(\bx))$ and $\phi''(D(\bx))$.
Using our bounds on $H(\bx)$, $D(\bx)$ and their derivatives, however,
we notice that $\phi'(D(\bx))$ always appears with factors on the order
of $D(\bx)$ and $\phi''(D(\bx))$ appears with factors on the order of $(D(\bx))^2$.
Thus, it is sufficient if $\phi(t)$ is defined so that we have control
over $t \phi'(t)$ and $t^2 \phi''(t)$. The following lemma describes
the function that we need.

\begin{lemma}
\label{lemma:phi-construction}
For any $\alpha, \beta > 0$, there is $\delta \in (0, \beta)$
and a function $\phi:\RR_+ \rightarrow [0,1]$ with an absolutely continuous first derivative
such that
\begin{enumerate}
\item For $t \leq \delta$, $\phi(t) = 1$.
\item For $t \geq \beta$, $\phi(t) < e^{-1/\alpha}$.
\item For all $t \geq 0$, $|t \phi'(t)| \leq 4 \alpha$.
\item For almost all $t \geq 0$, $|t^2 \phi''(t)| \leq 10 \alpha$.
\end{enumerate}
\end{lemma}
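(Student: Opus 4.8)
The plan is to change variables to $s=\log t$, which converts the two awkward quantities $t\phi'(t)$ and $t^2\phi''(t)$ into ordinary derivatives of the auxiliary function $\psi(s)=\phi(e^s)$, and then to build $\psi$ by prescribing a simple trapezoidal profile for its derivative. First I would reduce to the case $\alpha\le 1$: if a pair $(\phi,\delta)$ satisfies all four conditions for a parameter $\alpha_0\le\alpha$ and the same $\beta$, then the same pair works for $\alpha$ too, since condition~1 is unchanged, the bounds $4\alpha_0\le 4\alpha$ and $10\alpha_0\le 10\alpha$ only weaken conditions~3 and~4, and $e^{-1/\alpha_0}\le e^{-1/\alpha}$ preserves condition~2. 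Hence for $\alpha>1$ one simply reuses the function built for $\alpha=1$, and from now on $\alpha\le 1$.

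With $t=e^s$ we have $\psi'(s)=t\phi'(t)$ and $\psi''(s)=t\phi'(t)+t^2\phi''(t)$, so $t^2\phi''(t)=\psi''(s)-\psi'(s)$. It therefore suffices to produce a nonincreasing $\psi:\RR\to[0,1]$ with absolutely continuous first derivative such that $\psi\equiv 1$ on $(-\infty,\ln\delta]$, $\psi<e^{-1/\alpha}$ on $[\ln\beta,\infty)$, $|\psi'|\le 2\alpha$ everywhere, and $|\psi''|\le 8\alpha$ almost everywhere: the third bound gives condition~3 with room to spare, and the last two give $|\psi''-\psi'|\le 10\alpha$, i.e.\ condition~4.

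To build $\psi$, I would prescribe $\psi'$ to be the continuous, piecewise-linear ``trapezoid'' supported on $J=[\,\ln\beta-(\tfrac14+\tfrac{1}{2\alpha}),\ \ln\beta\,]$: starting from $0$ at the left endpoint of $J$, ramp linearly down to $-2\alpha$ over a subinterval of length $w=\tfrac14$, stay at $-2\alpha$ over a subinterval of length $L=\tfrac{1}{2\alpha}-\tfrac14$ (which is $\ge\tfrac14>0$ since $\alpha\le 1$), ramp linearly back up to $0$ over a subinterval of length $w=\tfrac14$, and remain $0$ outside $J$. Let $\psi$ be its antiderivative normalized by $\psi\equiv 1$ to the left of $J$ (explicitly $\psi$ is constant, then a downward parabola, then linear, then an upward parabola, then constant). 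Then $\psi'$ is Lipschitz, hence absolutely continuous, $|\psi'|\le 2\alpha$, and $|\psi''|=2\alpha/w=8\alpha$ wherever defined; the total drop of $\psi$ equals the area under $|\psi'|$, namely $2\cdot\tfrac12(2\alpha)w+(2\alpha)L=2\alpha(w+L)=1$, so $\psi$ falls from $1$ to exactly $0$ over $J$. Taking $\delta=\beta\exp\!\big(-\tfrac14-\tfrac{1}{2\alpha}\big)\in(0,\beta)$ places $\ln\delta$ at the left endpoint of $J$, which gives conditions~1 and~2 (as $\psi\equiv 0<e^{-1/\alpha}$ for $s\ge\ln\beta$). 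Pulling back, $\phi(t)=\psi(\ln t)$ (with $\phi\equiv 1$ on $[0,\delta]$ and $\phi\equiv 0$ on $[\beta,\infty)$) has derivative $\phi'(t)=\psi'(\ln t)/t$, which vanishes outside $[\delta,\beta]$, matches continuously at $t=\delta$ and $t=\beta$ since $\psi'$ vanishes at $\ln\delta$ and $\ln\beta$, and on the compact set $[\delta,\beta]$ (bounded away from $0$) is a product of Lipschitz functions, hence Lipschitz; so $\phi'$ is absolutely continuous on $\RR_+$.

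The only real point of care is the bookkeeping of constants: one must pick the peak slope to be $2\alpha$ rather than the full budget $4\alpha$, and the ramp width $w=\tfrac14$, so that simultaneously $|\psi'|\le 4\alpha$, the combined quantity $|\psi''|+|\psi'|\le 10\alpha$, and the total drop (here exactly $1$) is at least $1-e^{-1/\alpha}$ --- any attempt to exploit the larger slope forces $w$ up and breaks the $10\alpha$ bound. Equally, the ramps must have positive width $w$ so that $\psi'$, and hence $\phi'$, is genuinely continuous at the junction $t=\delta$ rather than merely bounded there, which is what ``absolutely continuous first derivative'' demands. Beyond verifying these explicit choices there is no further obstacle.
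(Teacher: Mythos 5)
Your proof is correct, and it takes a genuinely different route from the paper's. The paper writes down an explicit three-piece formula in the original variable $t$ --- constant on $[0,\delta]$, a downward quadratic $1-\alpha(t-1)^2$ on a short middle interval, and a power-law tail $(1+\alpha)^{-1-\alpha}(t-1)^{-2\alpha}$ thereafter --- then verifies $C^1$ matching at the breakpoints, checks the bounds $|t\phi'(t)|\le 4\alpha$ and $|t^2\phi''(t)|\le 10\alpha$ piece by piece, and handles general $\beta$ by a scaling argument (after first reducing to $\alpha<\tfrac18$ and a convenient canonical $\beta$). Your substitution $s=\log t$ turns the weighted quantities $t\phi'(t)$ and $t^2\phi''(t)$ into $\psi'(s)$ and $\psi''(s)-\psi'(s)$, after which the problem becomes the standard one of interpolating between $1$ and $0$ with $|\psi'|\le 2\alpha$ and $|\psi''|\le 8\alpha$, solved by a trapezoidal derivative profile; your bookkeeping (ramp width $\tfrac14$, plateau $\tfrac1{2\alpha}-\tfrac14$, total drop exactly $1$, $\delta=\beta e^{-1/4-1/(2\alpha)}$) all checks out, as does the reduction to $\alpha\le 1$ and the Lipschitz (hence absolutely continuous) character of $\phi'(t)=\psi'(\ln t)/t$. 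What your approach buys is conceptual clarity: the logarithmic change of variables makes transparent why the transition must occupy a multiplicative window of width roughly $e^{1/(2\alpha)}$ (a drop of $1$ with slope budget $O(\alpha)$ in log-scale), which in the paper appears only implicitly through the choice $\beta=e^{1/(2\alpha^2)}+1$ and the slowly decaying power-law tail; it also yields slightly better constants ($|t\phi'(t)|\le 2\alpha$) and a $\phi$ that is exactly $0$ beyond $\beta$. The paper's version, by contrast, is entirely self-contained arithmetic with no change of variables, at the cost of several breakpoint and tail estimates.
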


\begin{proof}
First, observe that if we prove the lemma for some particular value $\beta>0$, we can also prove prove it
for any other value $\tilde{\beta}>0$, by modifying the function as follows: $\tilde{\phi}(t) = \phi(\beta t / \beta')$.
This corresponds to a scaling of the parameter $t$ by $\beta' / \beta$. Observe that then $|t \tilde{\phi}'(t)| = |\frac{\beta}{\beta'} t \phi'(\beta t / \beta')| \leq 4 \alpha$ and $|t^2 \tilde{\phi}''(t)| = |(\frac{\beta}{\beta'})^2 t^2 \phi''(\beta t / \beta')| \leq 10 \alpha$, so the conditions are still satisfied.

Therefore, we can assume without
loss of generality that $\beta>0$ is a value of our choice,
for example $\beta = e^{1/(2\alpha^2)}+1$.
If we want to prove the result for a different value of $\beta$,
we can just scale the argument $t$ and the constant $\delta$
by $\beta / (e^{1/(2\alpha^2)}+1)$; the bounds on
$t \phi'(t)$ and $t^2 \phi''(t)$ still hold.

We can assume that $\alpha \in (0,\frac18)$ because for larger $\alpha$,
the statement only gets weaker. As we argued, we can assume WLOG that
$\beta = e^{1/(2\alpha^2)}+1$. We set $\delta = 1$ and
$\delta_2 = 1 + (1+\alpha)^{-1/2} \leq 2$.
(We remind the reader that in general these values will be scaled depending
on the actual value of $\beta$.) We define the function as follows:
\begin{enumerate}
\item $\phi(t) = 1$ for $t \in [0,\delta]$.
\item $\phi(t) = 1 - \alpha (t-1)^2$ for $t \in [\delta, \delta_2]$.
\item $\phi(t) = (1+\alpha)^{-1-\alpha} (t - 1)^{-2 \alpha}$
 for $t \in [\delta_2, \infty)$.
\end{enumerate}

\noindent
Let's verify the properties of $\phi(t)$.
For $t \in [0,\delta]$, we have
$\phi'(t) = \phi''(t) = 0$. For $t \in [\delta,\delta_2]$, we have
$$ \phi'(t) = -2\alpha \left(t-1 \right), \ \ \ \ \ \ 
 \phi''(t) = -2\alpha, $$
and for $t \in [\delta_2, \infty)$,
$$ \phi'(t) = -{2\alpha}{(1+\alpha)^{-1-\alpha}} \left( t-1 \right)^{-2\alpha-1}, $$
$$ \phi''(t) = {2\alpha(1+2\alpha)}{(1+\alpha)^{-1-\alpha}} \left( t-1 \right)^{-2\alpha-2}. $$
First, we check that the values and first derivatives agree at the breakpoints.
For $t = \delta = 1$, we get $\phi(1) = 1$ and $\phi'(1) = 0$.
For $t = \delta_2 = 1 + (1+\alpha)^{-1/2}$, we get
$\phi(\delta_2) = (1+\alpha)^{-1}$ and
$\phi'(\delta_2) = -2 \alpha (1+\alpha)^{-1/2}$.
Next, we need to check is that $\phi(t)$ is very small
for $t \geq \beta$. The function is decreasing for $t > \beta$,
therefore it is enough to check $t = \beta = e^{1/(2\alpha^2)}+1$:
$$ \phi\left(\beta \right) = (1+\alpha)^{-1-\alpha} (\beta - 1)^{-2\alpha}
  \leq (\beta-1)^{-2\alpha} = e^{-1/\alpha}. $$
The derivative bounds are satisfied trivially for $t \in [0,\delta]$.
For $t \in [\delta, \delta_2]$, using $t \leq \delta_2 = 1 + (1+\alpha)^{-1/2}$,
$$ |t \phi'(t)| = t \cdot 2 \alpha (t-1)
 \leq 2 \alpha (1 + (1+\alpha)^{-1/2}) (1+\alpha)^{-1/2} \leq 4 \alpha $$
and using $\alpha \in (0,\frac18)$,
$$ |t^2 \phi''(t)| = t^2 \cdot 2 \alpha \leq 2 \alpha (1 + (1+\alpha)^{-1/2})^2
 \leq 8 \alpha.$$
For $t \in [\delta_2,\infty)$, using $t-1 \geq (1+\alpha)^{-1/2}$,
\begin{eqnarray*}
|t \phi'(t)| & = & t \cdot \frac{2 \alpha}{(1+\alpha)^{1+\alpha}} \left(t-1 \right)^{-2\alpha-1} 
 = \frac{2 \alpha}{1+\alpha} \left( (1 + \alpha)
\left( t-1 \right)^2 \right)^{-\alpha} \frac{t}{t-1} \\
& \leq & \frac{2 \alpha}{1 + \alpha} \cdot \frac{t}{t-1}
 \leq \frac{2 \alpha}{1+\alpha} \cdot \frac{1 + (1+\alpha)^{-1/2}}{(1+\alpha)^{-1/2}}
 = 2 \alpha \cdot \frac{1 + (1+\alpha)^{-1/2}}{(1+\alpha)^{1/2}} \leq 4 \alpha
\end{eqnarray*}
and finally, using $\alpha \in (0,\frac18)$,
\begin{eqnarray*}
 |t^2 \phi''(t)| & = & t^2 \cdot \frac{2\alpha(1+2\alpha)}{(1+\alpha)^{1+\alpha}}
 \left( t-1 \right)^{-2\alpha-2} 
 = \frac{2\alpha(1+2\alpha)}{1+\alpha}
 \left( (1 + \alpha) \left( t-1 \right)^2 \right)^{-\alpha}
 \left( \frac{t}{t-1} \right)^2 \\
& \leq & \frac{2\alpha(1+2\alpha)}{1+\alpha} \cdot
 \left( \frac{1 + (1+\alpha)^{-1/2}}{(1+\alpha)^{-1/2}} \right)^2
 \leq 8 \alpha (1 + 2 \alpha) \leq 10 \alpha.
\end{eqnarray*}
\end{proof}

Using Lemmas~\ref{lemma:H-bounds}, \ref{lemma:D-bounds} and \ref{lemma:phi-construction},
we now prove bounds on the derivatives of $\tilde{F}(\bx)$.

\begin{lemma}
\label{lemma:F-bounds}
Let $ \tilde{F}(\bx) = (1-\phi(D(\bx))) F(\bx) + \phi(D(\bx)) G(\bx) $
where $F(\bx) = \E[f(\hat{\bx})]$, $f:2^X \rightarrow [0,M]$,
$G(\bx) = F(\bar{\bx})$, $D(\bx)=||\bx-\bar{\bx}||^2$
are as above and $\phi(t)$ is provided by Lemma~\ref{lemma:phi-construction}
for a given $\alpha>0$.
Then, whenever $\mixdiff{F}{x_i}{x_j} \leq 0$,
$$ %\Big|
 \mixdiff{\tilde{F}}{x_i}{x_j}
%- (1-\phi(D(x)) \mixdiff{F}{x_i}{x_j} - \phi(D(x)) \mixdiff{G}{x_i}{x_j} \Big|
  \leq  512 M |X| \alpha. $$
If, in addition, $\partdiff{F}{x_i} \geq 0$, then
$$
%\Big|
\partdiff{\tilde{F}}{x_i}
% - (1-\phi(D(x)) \partdiff{F}{x_i} - \phi(D(x)) \partdiff{G}{x_i} \Big|
  \geq  -64 M |X| \alpha. $$
\end{lemma}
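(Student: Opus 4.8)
The plan is to work directly from the two differentiation formulas already recorded, equations (\ref{eq:partdiff}) and (\ref{eq:mixdiff}), writing $\tilde{F}(\bx) = F(\bx) - \phi(D(\bx))H(\bx)$ with $H = F - G$ and $D(\bx) = ||\bx - \bar\bx||^2$, and to split each formula into a ``main part'' with the correct sign plus error terms carrying a factor $\phi'(D)$ or $\phi''(D)$. For the main part, I would group the first two terms of (\ref{eq:mixdiff}) as
$$ \mixdiff{F}{x_i}{x_j} - \phi(D) \mixdiff{H}{x_i}{x_j} = (1-\phi(D))\mixdiff{F}{x_i}{x_j} + \phi(D)\mixdiff{G}{x_i}{x_j}, $$
a convex combination since $\phi(D) \in [0,1]$. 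It is $\le 0$ because $\mixdiff{F}{x_i}{x_j} \le 0$ and, via the chain-rule identity $\mixdiff{G}{x_i}{x_j} = \sum_{k,\ell} \mixdiff{F}{x_k}{x_\ell}\,\partdiff{\bar x_k}{x_i}\,\partdiff{\bar x_\ell}{x_j}$ computed in the proof of Lemma~\ref{lemma:H-bounds}, together with $\partdiff{\bar x_k}{x_i} \ge 0$, also $\mixdiff{G}{x_i}{x_j} \le 0$. Similarly, grouping the first two terms of (\ref{eq:partdiff}) gives $(1-\phi(D))\partdiff{F}{x_i} + \phi(D)\partdiff{G}{x_i}$, which is $\ge 0$ when $f$ is monotone, using $\partdiff{G}{x_i}\big|_{\bx} = \partdiff{F}{x_i}\big|_{\bar\bx} \ge 0$ from Lemma~\ref{lemma:grad-symmetry}. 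Thus the theorem reduces to bounding the $\phi'$ and $\phi''$ error terms in absolute value.

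The estimates on those error terms are exactly what Lemmas~\ref{lemma:H-bounds}, \ref{lemma:D-bounds} and \ref{lemma:phi-construction} were built for. The crucial point — and the reason Lemma~\ref{lemma:phi-construction} controls $t\phi'(t)$ and $t^2\phi''(t)$ rather than $\phi', \phi''$ themselves — is that near the symmetric region the blow-ups $|\phi'(D)| \le 4\alpha/D$ and $|\phi''(D)| \le 10\alpha/D^2$ always meet matching vanishing factors: $|H(\bx)| \le 8M|X|D$, $||\nabla H(\bx)|| \le 8M|X|\sqrt{D}$, $|\partdiff{D}{x_i}| \le 2\sqrt{D}$ (and also the cruder $|\partdiff{D}{x_i}| \le 2$), and $|\mixdiff{D}{x_i}{x_j}| \le 2$. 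Substituting these and choosing in each term whichever bound on $|\partdiff{D}{x_i}|$ makes the powers of $D$ cancel — $2\sqrt{D}$ in the $\phi''(D)\,\partdiff{D}{x_i}\partdiff{D}{x_j}H$ term and in the two $\phi'(D)\,\partdiff{D}{}\partdiff{H}{}$ terms, but $2$ in the $\phi'(D)\,\mixdiff{D}{x_i}{x_j}H$ term and in the lone error term $\phi'(D)\,\partdiff{D}{x_i}H$ of (\ref{eq:partdiff}) — each error term becomes a constant multiple of $M|X|\alpha$: about $320M|X|\alpha$ from the $\phi''$ term and $64M|X|\alpha$ from each of the three remaining $\phi'$ terms of (\ref{eq:mixdiff}), totalling $512M|X|\alpha$; and $64M|X|\alpha$ for the single error term of (\ref{eq:partdiff}). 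Combining with the nonpositive (resp.\ nonnegative) main part gives $\mixdiff{\tilde{F}}{x_i}{x_j} \le 512 M|X|\alpha$ and $\partdiff{\tilde{F}}{x_i} \ge -64 M|X|\alpha$.

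The proof is therefore essentially mechanical substitution and the triangle inequality, with all the analytic substance already packaged in the three preparatory lemmas; there is no real obstacle. The one place that needs care — the ``hard part'' only in the sense of bookkeeping — is matching the negative powers of $D$ coming from $\phi'$ or $\phi''$ against the positive powers coming from $H$ and the derivatives of $D$, i.e.\ picking the right estimate for $|\partdiff{D}{x_i}|$ in each term so that the result stays bounded as $\bx \to \bar\bx$. I would also flag at the start that although the statement reads ``whenever $\mixdiff{F}{x_i}{x_j} \le 0$'', the sign argument for $\mixdiff{G}{x_i}{x_j}$ uses nonpositivity of the mixed partials of $F$ for all index pairs at the symmetrized point $\bar\bx$; this holds in the intended application because $F = \E[f(\hat\bx)]$ is the multilinear extension of a submodular $f$, so all its mixed second partials are nonpositive everywhere.
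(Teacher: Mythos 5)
Your proposal is correct and follows essentially the same route as the paper's proof: the same decomposition into a convex-combination "main part" with the right sign plus $\phi'$- and $\phi''$-weighted error terms, the same bounds from Lemmas~\ref{lemma:H-bounds}, \ref{lemma:D-bounds} and \ref{lemma:phi-construction}, and the same arithmetic ($320 + 3\times 64 = 512$ and a single $64$). Your closing remark that the sign argument for $\mixdiff{G}{x_i}{x_j}$ needs nonpositivity of the mixed partials of $F$ for all index pairs is a fair reading of the hypothesis that the paper also adopts implicitly ("If $\mixdiff{F}{x_i}{x_j}\le 0$ for all $i,j$").
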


\begin{proof}
We have
$ \tilde{F}(\bx) = F(\bx) - \phi(D(\bx)) H(\bx) $
where $H(\bx) = F(\bx) - G(\bx)$. By differentiating once, we get
$$ \partdiff{\tilde{F}}{x_i} = \partdiff{F}{x_i}
 - \phi(D(\bx)) \partdiff{H}{x_i} - \phi'(D(\bx)) \partdiff{D}{x_i} H(\bx), $$
i.e.
$$ \Big| \partdiff{\tilde{F}}{x_i} -
 \left(\partdiff{F}{x_i} - \phi(D(\bx)) \partdiff{H}{x_i} \right) \Big|
 = \Big| \phi'(D(\bx)) \partdiff{D}{x_i} H(\bx) \Big|.$$
By Lemma~\ref{lemma:H-bounds} and \ref{lemma:D-bounds},
we have $|\partdiff{D}{x_i}| = 2 |x_i - \bar{x}_i| \leq 2$
and $|H(\bx)| \leq 8M|X| \cdot D(\bx)$. Therefore,
$$ \Big| \partdiff{\tilde{F}}{x_i} -
 \left(\partdiff{F}{x_i} - \phi(D(\bx)) \partdiff{H}{x_i} \right) \Big|
 \leq 16 M |X| D(\bx) \cdot \Big| \phi'(D(\bx)) \Big|.$$
By Lemma~\ref{lemma:phi-construction}, $|D(\bx) \phi'(D(\bx))| \leq 4 \alpha$, and hence
$$ \Big| \partdiff{\tilde{F}}{x_i} -
 \left(\partdiff{F}{x_i} - \phi(D(\bx)) \partdiff{H}{x_i} \right) \Big|
 \leq 64 M |X| \alpha.$$
Assuming that $\partdiff{F}{x_i} \geq 0$, we also have $\partdiff{G}{x_i} \geq 0$
(see Lemma~\ref{lemma:grad-symmetry}) and therefore,
$ \partdiff{F}{x_i} - \phi(D(\bx)) \partdiff{H}{x_i}
 = (1-\phi(D(\bx))) \partdiff{F}{x_i} + \phi(D(\bx)) \partdiff{G}{x_i} \geq 0$.
Consequently,
$$ \partdiff{\tilde{F}}{x_i} \geq - 64 M |X| \alpha.$$
By differentiating $\tilde{F}$ twice, we obtain
\begin{eqnarray*}
\mixdiff{\tilde{F}}{x_i}{x_j} & = &
 \mixdiff{F}{x_i}{x_j} - \phi(\DD(\bx)) \mixdiff{H}{x_i}{x_j}
 - \phi''(\DD(\bx)) \partdiff{\DD}{x_i} \partdiff{\DD}{x_j} H(\bx) \\
& &  - \phi'(\DD(\bx)) \left( \partdiff{\DD}{x_j} \partdiff{H}{x_i}
 + \mixdiff{\DD}{x_i}{x_j} H(\bx)
 + \partdiff{\DD}{x_i} \partdiff{H}{x_j} \right).
\end{eqnarray*}
Again, we use Lemma~\ref{lemma:H-bounds} and \ref{lemma:D-bounds}
to bound $|H(\bx)| \leq 8M|X| D(\bx)$, $|\partdiff{H}{x_i}| \leq 8M|X| \sqrt{D(\bx)}$,
 $|\mixdiff{H}{x_i}{x_j}| \leq 8M$, $|\partdiff{D}{x_i}| \leq 2 \sqrt{D(\bx)}$
and $|\mixdiff{D}{x_i}{x_j}| \leq 2$. We get
\begin{eqnarray*}
\Big| \mixdiff{\tilde{F}}{x_i}{x_j} - \mixdiff{F}{x_i}{x_j} +
 \phi(\DD(\bx)) \mixdiff{H}{x_i}{x_j} \Big|
 \leq 32 M|X| \Big| D^2(\bx) \phi''(\DD(\bx)) \Big|
  + 48 M|X| \Big| D(\bx) \phi'(\DD(\bx)) \Big|
\end{eqnarray*}
Observe that $\phi'(D(\bx))$ appears with $D(\bx)$ and
$\phi''(D(\bx))$ appears with $(D(\bx))^2$.
By Lemma~\ref{lemma:phi-construction}, $|D(\bx) \phi'(D(\bx))| \leq 4 \alpha$
and $|D^2(\bx) \phi''(D(\bx))| \leq 10 \alpha$. Therefore,
\begin{eqnarray*}
\Big| \mixdiff{\tilde{F}}{x_i}{x_j} - \left( \mixdiff{F}{x_i}{x_j}
 - \phi(\DD(\bx)) \mixdiff{H}{x_i}{x_j} \right) \Big|
 \leq  320 M|X|\alpha + 192 M|X| \alpha  = 512 M |X| \alpha.
\end{eqnarray*}
If $\mixdiff{F}{x_i}{x_j} \leq 0$ for all $i,j$, then also $\mixdiff{G}{x_i}{x_j} =
 \E_{\sigma,\tau \in \cG}[\mixdiff{F}{x_{\sigma^{-1}(i)}}{x_{\tau^{-1}(j)}}] \leq 0$
 (see the proof of Lemma~\ref{lemma:H-bounds}). Also, 
$\mixdiff{F}{x_i}{x_j} - \phi(\DD(\bx)) \mixdiff{H}{x_i}{x_j}
 = \phi(D(\bx)) \mixdiff{F}{x_i}{x_j} + (1-\phi(D(\bx))) \mixdiff{G}{x_i}{x_j} \leq 0$.
We obtain
\begin{eqnarray*}
\mixdiff{\tilde{F}}{x_i}{x_j} \leq 512 M |X| \alpha.
\end{eqnarray*}
\end{proof}

Finally, we can finish the proof of Lemma~\ref{lemma:final-fix}.

\begin{proof}[Proof of Lemma~\ref{lemma:final-fix}]
Let $\epsilon>0$ and $f:2^X \rightarrow [0,M]$.
We choose $\beta = \frac{\epsilon}{16 M|X|}$, so that $|H(\bx)| = |F(\bx) - G(\bx)| \leq \epsilon/2$
whenever $D(\bx) = ||\bx - \bar{\bx}||^2 \leq \beta$
(due to by Lemma~\ref{lemma:H-bounds}, which states that $|H(\bx)| \leq 8 M |X| D(\bx)$).
Also, let $\alpha = \frac{\epsilon}{2000 M |X|^3}$.
For these values of $\alpha,\beta>0$,
let $\delta > 0$ and $\phi:\RR_+ \rightarrow [0,1]$
be provided by Lemma~\ref{lemma:phi-construction}.
We define
$$ \tilde{F}(\bx) = (1-\phi(D(\bx))) F(\bx) + \phi(D(\bx)) G(\bx). $$
Lemma~\ref{lemma:F-bounds} provides bounds on the first and second
partial derivatives of $\tilde{F}(\bx)$.
Finally, we modify $\tilde{F}(\bx)$ so that it satisfies the required
conditions (submodularity and optionally monotonicity).
For that purpose, we add a suitable multiple of the following function:
$$ J(\bx) = |X|^2 + 3|X| \sum_{i \in X} x_i - \left(\sum_{i \in X} x_i \right)^2.$$
We have $0 \leq J(\bx) \leq 3|X|^2$,
$\partdiff{J}{x_i} = 3|X| - 2 \sum_{i \in X} x_i \geq |X|$.
Further, $\mixdiff{J}{x_i}{x_j} = -2$. Note also that $J(\bar{\bx}) = J(\bx)$,
since $J(\bx)$ depends only on the sum of all coordinates $\sum_{i \in X} x_i$.
To make $\tilde{F}(\bx)$ submodular and optionally monotone, we define:
$$ \hat{F}(\bx) = \tilde{F}(\bx) + 256 M |X| \alpha J(\bx),$$
$$ \hat{G}(\bx) = G(\bx) + 256 M |X| \alpha J(\bx). $$
We verify the properties of $\hat{F}(\bx)$ and $\hat{G}(\bx)$:
\begin{enumerate}
\item %Let $\overline{OPT} = \max \{G(x): x \in P({\cal F}) \}$.
For any $\bx \in P({\cal F})$, we have
\begin{eqnarray*}
\hat{G}(\bx) & = & G(\bx) + 256 M |X| \alpha J(\bx) \\
 & = & F(\bar{\bx}) + 256 M |X| \alpha J(\bar{\bx}) \\
 & = & \hat{F}(\bar{\bx}).
\end{eqnarray*}

\item When $D(\bx) = ||\bx - \bar{\bx}||^2 \geq \beta$,
Lemma~\ref{lemma:phi-construction} guarantees
that $0 \leq \phi(D(\bx)) < e^{-1/\alpha} \leq \alpha$ and
\begin{eqnarray*}
|\hat{F}(\bx) - F(\bx)| & \leq & \phi(D(\bx)) |G(\bx) - F(\bx)| + 256 M |X| \alpha J(\bx) \\
 & \leq & \alpha M + 768 M |X|^3 \alpha \\
 & \leq & \epsilon
\end{eqnarray*}
using $0 \leq F(\bx), G(\bx) \leq M$, $|X| \leq J(\bx) \leq 3|X|^2$ and $\alpha = \frac{\epsilon}{2000 M |X|^3}$.

When $D(\bx) = ||\bx - \bar{\bx}||^2 < \beta$, we chose the value of $\beta$ so that
$|G(\bx) - F(\bx)| < \epsilon/2$ and so by the above,
\begin{eqnarray*}
|\hat{F}(\bx) - F(\bx)| & \leq & \phi(D(\bx)) |G(\bx) - F(\bx)| + 256 M |X| \alpha J(\bx) \\
 & \leq & \epsilon/2 + 768 M |X|^3 \alpha \\
 & \leq & \epsilon.
\end{eqnarray*}

\item Due to Lemma~\ref{lemma:phi-construction}, $\phi(t) = 1$ for $t \in [0,\delta]$.
Hence, whenever $D(\bx) = ||\bx - \bar{\bx}||^2 \leq \delta$, we have
$ \tilde{F}(\bx) = G(\bx) = F(\bar{\bx})$, which depends only on $\bar{\bx}$.
Also, we have $\hat{F}(\bx) = \hat{G}(\bx) = F(\bar{\bx}) + 256 M |X| \alpha J(\bx)$
and again, $J(\bx)$ depends only on $\bar{\bx}$ (in fact, only on the average
of all coordinates of $\bx$). Therefore, $\hat{F}(\bx)$ and $\hat{G}(\bx)$
in this case depend only on $\bar{\bx}$.

\item The first partial derivatives of $\hat{F}$ are given by the formula
$$ \partdiff{\hat{F}}{x_i} = \partdiff{F}{x_i}
 - \phi(D(\bx)) \partdiff{H}{x_i} - \phi'(D(\bx)) \partdiff{D}{x_i} H(\bx) + 256 M|X|\alpha \partdiff{J}{x_i}. $$
The functions $F, H, D, J$ are infinitely differentiable, so the only possible issue is with $\phi$.
By inspecting our construction of $\phi$ (Lemma~\ref{lemma:phi-construction}), we can see that it is piecewise 
infinitely differentiable, and $\phi'$ is continuous at the breakpoints. Therefore, it is also
absolutely continuous. This implies that $\partdiff{\hat{F}}{x_i}$ is absolutely continuous.

The function $\hat{G}(\bx) = F(\bar{\bx}) + 256 M|X|\alpha J(\bx)$ is infinitely differentiable,
so its first partial derivatives are also absolutely continuous.

\item Assuming $\partdiff{F}{x_i} \geq 0$, we get $\partdiff{\tilde{F}}{x_i}
 \geq - 64 M |X| \alpha$ by Lemma~\ref{lemma:F-bounds}. Using $\partdiff{J}{x_i}
 \geq |X|$, we get $\partdiff{\hat{F}}{x_i} = \partdiff{\tilde{F}}{x_i}
 + 256 M|X| \alpha J(\bx) \geq 0$. The same holds for $\partdiff{\hat{G}}{x_i}$
since $\partdiff{G}{x_i} \geq 0$.

\item Assuming $\mixdiff{F}{x_i}{x_j} \leq 0$, we get $\mixdiff{\tilde{F}}{x_i}{x_j}
 \leq 512 M |X| \alpha$ by  Lemma~\ref{lemma:F-bounds}. Using $\mixdiff{J}{x_i}{x_j}
 = -2$, we get $\mixdiff{\hat{F}}{x_i}{x_j} = \mixdiff{\tilde{F}}{x_i}{x_j}
 + 256 M |X| \alpha \mixdiff{J}{x_i}{x_j} \leq 0$. The same holds for
$\mixdiff{\hat{G}}{x_i}{x_j}$ since $\mixdiff{G}{x_i}{x_j} \leq 0$.
\end{enumerate}
\end{proof}

This concludes the proofs of our main hardness results.

\section{Algorithms using the multilinear relaxation}
\label{section:algorithms}

Here we turn to our algorithmic results. First, we discuss
the problem of maximizing a submodular (but not necessarily
monotone) function subject to a matroid independence constraint.

\subsection{Matroid independence constraint}
\label{section:submod-independent}

Consider the problem $\max \{ f(S): S \in \cI \}$, where $\cI$ is the collection
of independent sets in a matroid $\cM$.
We design an algorithm based on the multilinear relaxation of the problem,
$\max \{ F(\bx): \bx \in P(\cM) \}$.
Our algorithm can be seen as "continuous local search" in the matroid polytope $P(\cM)$,
constrained in addition by the box $[0,t]^X$
for some fixed $t \in [0,1]$. The intuition is that this forces
our local search to use fractional solutions that are more fuzzy
than integral solutions and therefore less likely to get stuck in a local
optimum. On the other hand, restraining the search space too much would not
give us much freedom in searching for a good fractional point.
This leads to a tradeoff and an optimal choice of $t \in [0,1]$
which we leave for later.

The matroid polytope is defined as
$ P(\cM) = \mbox{conv} \{ \b1_I: I \in \cI \} $.
or equivalently \cite{E70} as
$ P(\cM) = \{ \bx \geq 0: \forall S; \sum_{i \in S} x_i \leq r_\cM(S) \}$,
where $r_\cM(S)$ is the rank function of $\cM$.
We define
$$ P_t(\cM) = P(\cM) \cap [0,t]^X = \{ \bx \in P(\cM): \forall i; x_i \leq t \}.$$
We consider the problem $\max \{F(\bx): \bx \in P_t(\cM)\}$.
We remind the reader that
$F(\bx) = \E[f(\hat{\bx})]$ denotes the multilinear extension.
Our algorithm works as follows.

\

\noindent{\bf Fractional local search in $P_t(\cM)$} \\
 (given $t = \frac{r}{q}$, $r \leq q$ integer)
\begin{enumerate}
\item Start with $\bx := (0,0,\ldots,0)$. Fix $\delta = 1/q$.
\item If there is  $i,j \in X$ and a direction ${\bf v} \in
 \{\be_j, -\be_i, \be_j - \be_i \}$ such that
$\bx + \delta {\bf v} \in P_t(\cM)$ and $F(\bx + \delta {\bf v}) > F(\bx)$,
set $\bx := \bx + \delta {\bf v}$ and repeat.
\item If there is no such direction ${\bf v}$, apply pipage rounding to $\bx$
and return the resulting solution.
\end{enumerate}

\noindent{\em Notes.}
The procedure as presented here would not run in polynomial
time. A modification which runs in polynomial time is that
we move to a new solution only if $F(\bx+\delta {\bf v}) >
 F(\bx) + \frac{\delta}{poly(n)} OPT$ (where we first get
a rough estimate of $OPT$ using previous methods).
For simplicity, we analyze the variant above
and finally discuss why we can modify it without
losing too much in the approximation factor. We also defer
the question of how to estimate the value of $F(\bx)$
to the end of this section.

For $t=1$, we have $\delta=1$ and the procedure
reduces to discrete local search. However, it is known that discrete
local search alone does not give any approximation guarantee. With
additional modifications, an algorithm based on discrete local search
achieves a $(\frac14-o(1))$-approximation \cite{LMNS09}.

Our version of fractional local search avoids this issue and leads
directly to a good fractional solution.  Throughout the algorithm,
we maintain $\bx$ as a linear combination of $q$ independent sets
such that no element appears in more than $r$ of them. A local step
corresponds to an add/remove/switch operation preserving this condition.

Finally, we use pipage rounding to convert a fractional solution into
an integral one. As we show in Lemma~\ref{lemma:pipage} (in the Appendix),
a modification of the technique from \cite{CCPV07}
can be used to find an integral solution without any loss in the objective
function.

\begin{theorem}
\label{thm:submod-matroid-approx}
The fractional local search algorithm for any fixed $t \in [0,\frac12 (3-\sqrt{5})]$
returns a solution of value at least $(t - \frac12 t^2) OPT$,
where $OPT = \max \{f(S): S \in \cI\}$.
\end{theorem}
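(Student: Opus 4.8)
The plan is to analyze the local optimum $\bx$ reached by the fractional local search and compare $F(\bx)$ with $OPT$ via a carefully chosen feasible direction toward the optimal solution. Let $C = \b1_{C^*}$ be the indicator of an optimal independent set $C^*$ with $f(C^*) = OPT$, and let $\bx$ be the fractional point returned (before pipage rounding; recall from Lemma~\ref{lemma:pipage} that rounding loses nothing). The first step is to establish the local-optimality inequalities: since every direction $\be_j$, $-\be_i$, $\be_j - \be_i$ that stays in $P_t(\cM)$ fails to increase $F$, a first-order (concavity-along-a-line) argument gives that for such directions the directional derivative $\nabla F(\bx) \cdot \bv \leq 0$ — here I would use smoothness of $F$ together with the discrete step size $\delta = 1/q$ and the standard fact that $F$ restricted to a line is concave in directions of the form $\be_j - \be_i$ and convex along $\be_j$ (and appropriate sign control via submodularity). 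The precise bookkeeping of the $\delta$-discretization versus the continuous derivative is a minor technical point; one treats it by noting the claimed approximation has an $o(1)$ slack absorbed into the choice of $q$ large.

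The heart of the argument is then a two-part estimate. First, I would bound $F(\bx)$ from below in terms of how far $\bx$ is from being ``full'': a standard lemma (see e.g.\ the monotone/nonmonotone local search analyses) shows that at a local optimum constrained to $P_t(\cM)$, one has $F(\bx \vee C) \leq 2 F(\bx)$ or a similar relation, using the add/switch moves toward $C$ together with submodularity, exploiting that coordinates of $\bx$ are capped at $t$. Second, I would bound $F(\bx \vee C)$ from below in terms of $OPT$: because every coordinate of $\bx$ is at most $t$, a probabilistic/convexity argument (conditioning on the independent rounding $\hat{\bx}$ and using submodularity of $f$, in the spirit of $\E[f(A \cup \hat{\bx})] \geq (1-t) f(A)$-type inequalities) yields $F(\bx \vee C) \geq (1-t)\, OPT$ or the relevant variant. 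Combining, $F(\bx) \geq \tfrac12 F(\bx \vee C) \geq \tfrac12(1-t)OPT$ — but this is not quite the claimed $(t - \tfrac12 t^2)OPT = \tfrac12 t(2-t)OPT$ bound, so I expect the actual argument to balance two complementary lower bounds on $F(\bx)$: one increasing in $t$ (coming from the fact that local search with a larger box can build up more value, roughly $F(\bx) \geq t \cdot OPT$ up to corrections) and one of the form above, then take the one that applies, with the crossover at $t = \tfrac12(3-\sqrt5)$ explaining the stated range of $t$.

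Concretely, I would aim to prove two inequalities at the local optimum: (i) $F(\bx) \geq t\,OPT - (\text{quadratic correction})$, obtained by using the direction $\be_j$ for $j \in C^* \setminus \mathrm{supp}(\bx)$ and the cap $x_i \le t$; and (ii) a bound controlling the ``overshoot'' via the switch directions $\be_j - \be_i$. Adding a suitable combination and using submodularity to control cross terms (the $-\tfrac12 t^2$ term arises precisely from a second-order/union-bound loss when several coordinates are simultaneously near $t$) should give $F(\bx) \geq (t - \tfrac12 t^2)OPT$. The main obstacle, and where I would spend the most care, is step (i)--(ii): making the switch-move analysis tight enough to get the clean quadratic $t - \tfrac12 t^2$ rather than a lossier constant, and verifying that the constraint $t \le \tfrac12(3-\sqrt5)$ is exactly what is needed for the relevant direction ($\be_j - \be_i$ or $-\be_i$) to remain feasible in $P_t(\cM)$ while the bound stays valid. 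Finally I would note the $o(1)$ terms: the discretization $\delta = 1/q$, the sampling error in estimating $F$, and the polynomial-time modification of the local search each contribute lower-order losses, so the returned solution has value at least $(t - \tfrac12 t^2 - o(1))OPT$, and optimizing $t - \tfrac12 t^2$ over the allowed range at $t = \tfrac12(3-\sqrt5)$ yields the $\tfrac14(-1+\sqrt5-o(1)) \simeq 0.309$ factor of Theorem~\ref{thm:matroid-indep}.
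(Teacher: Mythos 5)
Your high-level framework (local optimality plus matroid exchange plus submodularity) is the right one, but the proposal is missing the key ideas that actually produce the bound $(t-\tfrac12 t^2)OPT$, and you acknowledge as much when your own computation stalls at roughly $\tfrac12(1-t)OPT$. The central inequality in the paper is not $F(\bx \vee \b1_C) \leq 2F(\bx)$ but rather $2F(\bx) \geq F(\bx \vee \b1_{C'}) + F(\bx \wedge \b1_C)$, where $C' = \{i \in C : x_i < t\}$. Both features you drop are essential: (a) the $F(\bx\wedge\b1_C)$ term is what supplies the missing value --- it is bounded below by $t\,(f(C)-f(C'))$, which exactly cancels the $f(C')$ deficit coming from the other term; and (b) one must use $C'$ rather than $C$ in the join term, because local optimality gives no information about $\partial F/\partial x_j$ at coordinates already saturated at $x_j = t$, so the exchange argument (via the matroid exchange map $\pi$) can only be run for elements of $C$ with $x_j < t$.

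The second missing ingredient is the analytic machinery for lower-bounding $F(\bx\vee\b1_{C'})$ and $F(\bx\wedge\b1_C)$ in terms of $f(C)$ and $f(C')$: the paper uses the fact that the multilinear extension dominates the Lov\'asz extension ($F(\bx) \geq \E[f(T(\bx))]$ for a random threshold set), together with a ``split'' version applying independent threshold sets to $C$ and $\bar C$. This yields $F(\bx\vee\b1_{C'}) \geq t(1-t)f(C) + (1-t)^2 f(C')$ and $F(\bx\wedge\b1_C) \geq t(f(C)-f(C'))$. Finally, your guess about the origin of the restriction $t \leq \tfrac12(3-\sqrt5)$ is wrong: it has nothing to do with feasibility of exchange directions or a crossover between two competing bounds. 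It is precisely the algebraic condition $t \leq (1-t)^2$, which lets one replace the coefficient $(1-t)^2$ of $f(C')$ by $t$ so that the $f(C')$ terms cancel when the two estimates are summed, leaving $(2t-t^2)f(C)$. Without these three pieces --- the two-sided $\vee/\wedge$ inequality with $C'$, the threshold-set bounds, and the role of $t\leq(1-t)^2$ --- the proposal cannot be completed to a proof of the stated constant.
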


We remark that for $t=\frac12(3-\sqrt{5})$, we would obtain a $\frac14(-1+\sqrt{5})
 \simeq 0.309$-approximation, improving the factor of $\frac14$ \cite{LMNS09}.
This is not a rational value, but we can pick a rational $t$
arbitrarily close to $\frac12 (3-\sqrt{5})$.
For values $t > \frac12 (3-\sqrt{5})$, our analysis does not yield
a better approximation factor.

First, we discuss properties of the point found by the fractional local search algorithm.

\begin{lemma}
\label{lemma:local-opt}
The outcome of the fractional local search algorithm $x$ is
a ``fractional local optimum'' in the following sense.
(All the partial derivatives are evaluated at $x$.)
\begin{itemize}
\item For any $i$ such that $\bx - \delta \be_i \in P_t(\cM)$,
$\partdiff{F}{x_i} \geq 0.$
\item For any $j$ such that $\bx + \delta \be_j \in P_t(\cM)$,
$\partdiff{F}{x_j} \leq 0.$
\item For any $i,j$ such that $\bx + \delta (\be_j - \be_i) \in P_t(\cM)$,
$\partdiff{F}{x_j} - \partdiff{F}{x_i} \leq 0.$
\end{itemize}
\end{lemma}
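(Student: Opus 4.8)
The plan is to read off all three inequalities from a single fact about the terminal point $\bx$ of the fractional local search: no admissible local step of length $\delta$ strictly increases $F$. Combined with two structural properties of the multilinear extension — it is affine along each coordinate axis, and (by submodularity of $f$) it is convex along each exchange direction $\be_j-\be_i$ — this yields exactly the claimed sign conditions on the partial derivatives at $\bx$. Throughout I assume $F$ is evaluated exactly, deferring the sampling error as the section does.

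First I would argue termination, which is what lets us speak of a terminal point at all: every iterate lies on the finite lattice $\delta\ZZ^X \cap P_t(\cM) \subseteq [0,1]^X$, and $F(\bx)$ strictly increases at each step, so after finitely many steps we reach a point $\bx$ for which $F(\bx+\delta\bv) \leq F(\bx)$ for \emph{every} $\bv \in \{\be_j,-\be_i,\be_j-\be_i\}$ such that $\bx+\delta\bv \in P_t(\cM)$. For the first two bullets I would then use multilinearity: $\partdiff{F}{x_j}$ does not depend on $x_j$, so $s \mapsto F(\bx+s\be_j)$ is affine with slope $\partdiff{F}{x_j}\big|_\bx$, hence $F(\bx+\delta\be_j)-F(\bx) = \delta\,\partdiff{F}{x_j}\big|_\bx$, and the termination inequality forces $\partdiff{F}{x_j}\big|_\bx \leq 0$. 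Symmetrically $F(\bx)-F(\bx-\delta\be_i) = \delta\,\partdiff{F}{x_i}\big|_\bx$, so $\partdiff{F}{x_i}\big|_\bx \geq 0$.

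The exchange direction is the step I expect to be the main obstacle, since here $F$ is genuinely nonlinear along $\be_j-\be_i$. The plan is to set $g(s) = F(\bx+s(\be_j-\be_i))$ and compute $g''(s) = \secdiff{F}{x_j} - 2\,\mixdiff{F}{x_i}{x_j} + \secdiff{F}{x_i}$ along the segment. Multilinearity kills the diagonal terms $\secdiff{F}{x_j}$ and $\secdiff{F}{x_i}$, and submodularity of $f$ gives $\mixdiff{F}{x_i}{x_j} \leq 0$, so $g''(s) = -2\,\mixdiff{F}{x_i}{x_j} \geq 0$; thus $g$ is convex on $[0,\delta]$. Convexity gives $g'(0) \leq (g(\delta)-g(0))/\delta$, and the termination inequality $g(\delta) \leq g(0)$ makes the right-hand side nonpositive, so $g'(0) = \partdiff{F}{x_j}\big|_\bx - \partdiff{F}{x_i}\big|_\bx \leq 0$, which is the third bullet.

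Finally I would remark that for the polynomial-time variant of the algorithm — where a step is taken only when $F$ increases by at least $\frac{\delta}{\mathrm{poly}(n)}OPT$ — the same three arguments go through verbatim, producing the inequalities with an extra additive term of order $\frac{1}{\mathrm{poly}(n)}OPT$ on the right-hand side, which is absorbed into the lower-order terms of the final approximation guarantee.
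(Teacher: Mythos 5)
Your proposal is correct and takes essentially the same route as the paper: the paper likewise deduces all three bullets from the fact that no feasible $\delta$-step improves $F$, combined with convexity of $F$ along the directions $\pm\be_i$ and $\be_j-\be_i$ (cited from prior work rather than rederived from multilinearity and submodularity as you do). Your explicit treatment of termination and the affine/convex dichotomy between axis and exchange directions is a slightly more self-contained presentation of the same argument.
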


\begin{proof}
We use the property (see \cite{CCPV07}) that along any direction $\bv = \pm \be_i$
or $\bv = \be_i - \be_j$, the function $F(\bx + \lambda \bv)$ is a convex
function of $\lambda$. Also, observe that if it is possible to move from $\bx$
in the direction of $\bv$ by any nonzero amount, then it is possible to move
by $\delta \bv$, because all coordinates of $\bx$ are integer multiples of $\delta$
and all the constraints also have coefficients which are integer multiples of $\delta$.
Therefore, if $\frac{dF}{d\lambda} > 0$ and it is possible
to move in the direction of $\bv$, we would get
$F(\bx + \delta \bv) > F(\bx)$ and the fractional local search would continue.

If $\bv = -\be_i$ and it is possible to move along $-\be_i$,
we get $\frac{dF}{d\lambda} = -\partdiff{F}{x_i} \leq 0$. Similarly,
if $\bv = \be_j$ and it is possible to move along $\be_j$,
we get $\frac{dF}{d\lambda} = \partdiff{F}{x_j} \leq 0$. Finally,
if $\bv = \be_j - \be_i$ and it is possible to move along $\be_j - \be_i$,
we get $\frac{dF}{d\lambda} = \partdiff{F}{x_j} - \partdiff{F}{x_i} \leq 0$.
\end{proof}

We refer to the following exchange property for matroids
(which follows easily from \cite{Schrijver}, Corollary 39.12a; see also \cite{LMNS09}).

\begin{lemma}
\label{lemma:exchange}
If $I, C \in \cI$, then for any $j \in C \setminus I$,
there is $\pi(j) \subseteq I \setminus C$, $|\pi(j)| \leq 1$,
such that $I \setminus \pi(j) + j \in \cI$.
Moreover, the sets $\pi(j)$ are disjoint
(each $i \in I \setminus C$ appears at most once as $\pi(j) = \{i\}$).
\end{lemma}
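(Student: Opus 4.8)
The plan is to recognize Lemma~\ref{lemma:exchange} as a statement about the \emph{exchange bipartite graph} of the two independent sets $I$ and $C$, and to obtain the assignment $\pi$ via Hall's marriage theorem. For each $j \in C \setminus I$ with $I + j \in \cI$ we simply set $\pi(j) = \emptyset$, so it remains to handle the set $J = \{ j \in C \setminus I : I + j \notin \cI \}$. For $j \in J$, the set $I+j$ contains a unique circuit $C(I,j)$, which necessarily contains $j$ (as $I$ is independent), and the standard fact about fundamental circuits says that for $i \in I$ we have $I - i + j \in \cI$ if and only if $i \in C(I,j) \setminus \{j\}$. Build a bipartite graph $G$ with left class $J$, right class $I \setminus C$, and an edge $\{j,i\}$ whenever $i \in \big(C(I,j) \setminus \{j\}\big) \cap (I \setminus C)$. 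A matching of $G$ saturating $J$ yields $\pi$ (set $\pi(j)=\{i\}$ when $j$ is matched to $i$), with disjointness of the sets $\pi(j)$ being exactly the matching property, and $I \setminus \pi(j) + j \in \cI$ holding by the circuit fact.

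First I would observe that every vertex of $J$ has a neighbour: since $C(I,j)$ is a circuit it cannot be contained in the independent set $C$, and as $C(I,j) \setminus \{j\} \subseteq I$ with $j \in C$, some element of $C(I,j) \setminus \{j\}$ lies outside $C$. The main step is to verify Hall's condition. Fix $J' \subseteq J$ and put $U = \bigcup_{j \in J'} \big(C(I,j) \setminus \{j\}\big) \subseteq I$, so that the neighbourhood satisfies $N_G(J') = U \setminus C$. On one hand, $U$ is independent (a subset of $I$), so $r(U) = |U| = |U \cap C| + |N_G(J')|$, where $r$ is the rank function of $\cM$. On the other hand, each $j \in J'$ is spanned by $C(I,j)\setminus\{j\} \subseteq U$ (removing $j$ from the circuit leaves an independent set), so $J' \subseteq \mathrm{span}(U)$; since also $U \cap C \subseteq U \subseteq \mathrm{span}(U)$, and since $(U \cap C) \cup J'$ is a subset of the independent set $C$ (the two pieces are disjoint because $U \subseteq I$ while $J' \subseteq C \setminus I$), we get $|U \cap C| + |J'| = |(U\cap C)\cup J'| \le r(\mathrm{span}(U)) = r(U)$. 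Comparing this with the count above yields $|J'| \le |N_G(J')|$, which is Hall's condition.

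Having verified Hall's condition, Hall's theorem produces a matching of $G$ saturating $J$, and together with $\pi(j)=\emptyset$ for $j \in (C\setminus I)\setminus J$ this gives sets with $\pi(j) \subseteq I \setminus C$, $|\pi(j)| \le 1$, $I \setminus \pi(j) + j \in \cI$, and pairwise disjoint by construction. (Alternatively, one may simply invoke \cite{Schrijver}, Corollary~39.12a, from which the statement is immediate.) I expect the Hall-condition verification — in particular the span argument showing $J' \subseteq \mathrm{span}(U)$ and hence $r(U) \ge |U\cap C| + |J'|$ — to be the only non-routine point; the fundamental-circuit facts and the reduction to a bipartite matching problem are standard.
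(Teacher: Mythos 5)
Your proof is correct, but it takes a different route from the paper: the paper does not actually prove Lemma~\ref{lemma:exchange} at all, it simply asserts that the statement ``follows easily from \cite{Schrijver}, Corollary 39.12a'' (the perfect-matching exchange property for two independent sets of equal size, which still requires a small reduction since here $|I|$ and $|C|$ need not be equal). Your argument is a self-contained derivation of exactly the asymmetric statement needed: you reduce to the set $J$ of elements $j$ with $I+j\notin\cI$, build the fundamental-circuit bipartite graph between $J$ and $I\setminus C$, and verify Hall's condition by the rank computation $|U\cap C|+|J'|=|(U\cap C)\cup J'|\le r(\mathrm{span}(U))=r(U)=|U\cap C|+|N_G(J')|$, using that $(U\cap C)\cup J'$ is an independent subset of $C$ contained in $\mathrm{span}(U)$. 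All the steps check out: every $j\in J$ has a neighbour because the circuit $C(I,j)$ cannot sit inside the independent set $C$; the matched pairs give $I-i+j\in\cI$ by the fundamental-circuit fact; and disjointness of the sets $\pi(j)$ is precisely the matching property. What your approach buys is a complete, elementary proof from first principles (Hall's theorem plus basic circuit/rank facts) of the precise statement used in Lemma~\ref{lemma:fractional-search}, whereas the paper's citation is shorter but leaves the reduction from the equal-cardinality version implicit.
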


Using this, we prove a lemma about fractional local optima
which generalizes Lemma 2.2 in \cite{LMNS09}.

\begin{lemma}
\label{lemma:fractional-search}
Let $\bx$ be the outcome of fractional local search over $P_t(\cM)$.
Let $C \in \cI$ be any independent set.
Let $C' = \{i \in C: x_i < t\}$. Then
$$ 2 F(\bx) \geq F(\bx \vee \b1_{C'}) + F(\bx \wedge \b1_C).$$
\end{lemma}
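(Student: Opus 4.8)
The plan is to reduce the inequality to the ``fractional local optimality'' conditions of Lemma~\ref{lemma:local-opt}, exploiting that $F$ is multilinear (hence affine along every coordinate axis) and that, by submodularity, every partial derivative $\partdiff{F}{x_i}$ is nonincreasing in each coordinate. Throughout, partial derivatives are evaluated at the local optimum $\bx$ unless noted otherwise, and we use the representation $\bx = \frac1q\sum_{\ell=1}^q \b1_{I_\ell}$ maintained by the algorithm, where $I_\ell\in\cI$ and no element lies in more than $r$ of the $I_\ell$ (recall $t=r/q$, $\delta=1/q$). Write $D=\{i\notin C: x_i>0\}$ for the support of $\bx$ outside $C$; note $\bx\vee\b1_{C'}$ raises exactly the coordinates in $C'$ to $1$, while $\bx\wedge\b1_C$ zeroes exactly the coordinates in $D$.

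First I would bound the two ``halves'' of the change from $\bx$. Telescoping one coordinate at a time and using multilinearity, $F(\bx\vee\b1_{C'})-F(\bx)=\sum_{j\in C'}(1-x_j)\,\partdiff{F}{x_j}\big|_{\bv_j}$ for intermediate points $\bv_j\geq\bx$, and since the derivatives are nonincreasing this is at most $\sum_{j\in C'}(1-x_j)\partdiff{F}{x_j}$. Symmetrically, $F(\bx)-F(\bx\wedge\b1_C)=\sum_{i\in D}x_i\,\partdiff{F}{x_i}\big|_{\bw_i}$ for intermediate points $\bw_i\leq\bx$, which is at least $\sum_{i\in D}x_i\partdiff{F}{x_i}$ (and each term is $\geq0$, since $x_i>0$ makes $\bx-\delta\be_i\in P_t(\cM)$, so Lemma~\ref{lemma:local-opt} forces $\partdiff{F}{x_i}\geq0$ for $i\in D$). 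Hence it suffices to prove the ``derivative inequality''
$$\sum_{j\in C'}(1-x_j)\,\partdiff{F}{x_j}\Big|_\bx \ \leq\ \sum_{i\in D}x_i\,\partdiff{F}{x_i}\Big|_\bx .$$

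The core of the argument is to establish this derivative inequality from local optimality and the matroid exchange property. For each $\ell$, apply Lemma~\ref{lemma:exchange} to $I_\ell$ and $C$ to obtain the pairwise disjoint sets $\pi_\ell(j)\subseteq I_\ell\setminus C$ ($j\in C\setminus I_\ell$, $|\pi_\ell(j)|\leq1$). For $j\in C'$ with $j\notin I_\ell$, the swap move $\bx\to\bx+\delta(\be_j-\be_i)$ with $\{i\}=\pi_\ell(j)$ (or $\bx\to\bx+\delta\be_j$ when $\pi_\ell(j)=\emptyset$) stays in $P_t(\cM)$ --- here is where $j\in C'$ matters, guaranteeing that $j$ currently occurs in fewer than $r$ of the $I_\ell$, so its multiplicity bound is not violated --- so Lemma~\ref{lemma:local-opt} gives $\partdiff{F}{x_j}\leq\partdiff{F}{x_i}$ (resp.\ $\partdiff{F}{x_j}\leq0$). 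Summing over $j\in C'\setminus I_\ell$, using disjointness of the $\pi_\ell(j)$ together with $\partdiff{F}{x_i}\geq0$ for $i\in I_\ell\setminus C\subseteq D$, yields $\sum_{j\in C'\setminus I_\ell}\partdiff{F}{x_j}\leq\sum_{i\in I_\ell\setminus C}\partdiff{F}{x_i}$ for every $\ell$. Summing this over $\ell=1,\dots,q$ and counting multiplicities ($|\{\ell:j\notin I_\ell\}|=q(1-x_j)$ for $j\in C'$, and $|\{\ell:i\in I_\ell\}|=q x_i$ for $i\notin C$) gives exactly $q$ times the derivative inequality; dividing by $q$ finishes the proof.

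The main obstacle is the bookkeeping around which single-element swaps are genuinely feasible in $P_t(\cM)$: a naive version that fixes one representing independent set per element $j$ loses the disjointness of the $\pi(j)$'s across distinct $j$, which is essential for the last step. Doing the exchange argument separately for each $I_\ell$ and only then averaging over $\ell$ is what simultaneously preserves disjointness (it is invoked per $\ell$) and produces the correct coefficients $1-x_j$ and $x_i$; obtaining both features at once is the delicate point. A secondary issue to watch is that the derivative inequality holds only for $C'$, not all of $C$ --- elements of $C$ with $x_i=t$ are ``full'' and cannot be increased, which is exactly why the statement features $\bx\vee\b1_{C'}$ rather than $\bx\vee\b1_C$.
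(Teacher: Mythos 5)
Your proposal is correct and follows essentially the same route as the paper: bound $F(\bx\vee\b1_{C'})-F(\bx)$ and $F(\bx)-F(\bx\wedge\b1_C)$ by first-order terms via smooth submodularity, decompose $\bx$ into independent sets, and prove the per-set inequality $\sum_{j\in C'\setminus I}\partdiff{F}{x_j}\leq\sum_{i\in I\setminus C}\partdiff{F}{x_i}$ using the exchange map of Lemma~\ref{lemma:exchange} together with the local-optimality conditions of Lemma~\ref{lemma:local-opt}. The only cosmetic difference is that you use the explicit $\frac1q\sum_\ell\b1_{I_\ell}$ representation maintained by the algorithm where the paper uses an arbitrary convex decomposition $\bx=\sum_I x_I\b1_I$; the bookkeeping is identical.
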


Note that for $t = 1$, the lemma reduces to
$2 F(\bx) \geq F(\bx \vee \b1_C) + F(\bx \wedge \b1_C)$
(similar to Lemma 2.2 in \cite{LMNS09}).
For $t < 1$, however, it is necessary to replace $C$ by $C'$
in the first expression, which becomes apparent in the proof.
The reason is that we do not have any information on $\partdiff{F}{x_i}$
for coordinates where $x_i = t$.

\begin{proof}
Let $C \in \cI$ and assume $\bx \in P_t(\cM)$ is a local optimum.
Since $\bx \in P(\cM)$, we can decompose it into a convex linear
combination of vertices of $P(\cM)$, $\bx = \sum_{I \in \cI} x_I \b1_I$
where $\sum x_I = 1$..
By the smooth submodularity of $F(\bx)$ (see \cite{Vondrak08}),
\begin{eqnarray*}
F(\bx \vee \b1_{C'}) - F(\bx) \leq \sum_{j \in C'} (1 - x_j) \partdiff{F}{x_j} 
  =  \sum_{j \in C'} \sum_{I: j \notin I} x_I \partdiff{F}{x_j}
  = \sum_I x_I \sum_{j \in C' \setminus I} \partdiff{F}{x_j}.
\end{eqnarray*}
All partial derivatives here are evaluated at $\bx$.
On the other hand, also by submodularity,
\begin{eqnarray*}
 F(\bx) - F(\bx \wedge \b1_C) \geq \sum_{i \notin C} x_i \partdiff{F}{x_i}
 = \sum_{i \notin C} \sum_{I: i \in I} x_I \partdiff{F}{x_i}
 = \sum_I x_I \sum_{i \in I \setminus C} \partdiff{F}{x_i}.
\end{eqnarray*}
To prove the lemma, it remains to prove the following.

\

\noindent {\bf Claim.} Whenever $x_I > 0$,
$\sum_{j \in C' \setminus I} \partdiff{F}{x_j}
 \leq \sum_{i \in I \setminus C} \partdiff{F}{x_i}$.

\

\noindent {\em Proof:}
For any $I \in \cI$, we can apply
Lemma~\ref{lemma:exchange} to get a mapping $\pi$ such that
$I \setminus \pi(j) + j \in \cI$ for any $j \in C \setminus I$.
Now, consider $j \in C' \setminus I$, i.e. $j \in C \setminus I$ and $x_j < t$.

If $\pi(j) = \emptyset$, is possible to move from $\bx$ in the direction
of $\be_j$, because $I + j \in \cI$ and hence we can replace $I$ by $I+j$
(or at least we can do this for some nonzero fraction of its coefficient)
in the linear combination.
Because $x_j < t$, we can move by a nonzero amount inside $P_t(\cM)$.
By Lemma~\ref{lemma:local-opt}, $\partdiff{F}{x_j} \leq 0$.

Similarly, if $\pi(j) = \{i\}$, it is possible to move in the direction
of $\be_j - \be_i$, because $I$ can be replaced by $I \setminus \pi(j) + i$
for some nonzero fraction of its coefficient. By Lemma~\ref{lemma:local-opt},
in this case $\partdiff{F}{x_j} - \partdiff{F}{x_i} \leq 0$.

Finally, for any $i \in I$ we have $x_i > 0$ and therefore we can decrease
$x_i$ while staying inside $P_t(\cM)$. By Lemma~\ref{lemma:local-opt},
we have $\partdiff{F}{x_i} \geq 0$ for all $i \in I$.
This means
$$ \sum_{j \in C' \setminus I} \partdiff{F}{x_j}
 \leq  \sum_{j \in C' \setminus I: \pi(j) = \emptyset} \partdiff{F}{x_j} +
  \sum_{j \in C' \setminus I: \pi(j) = \{i\}} \partdiff{F}{x_i}
 \leq \sum_{i \in I \setminus C} \partdiff{F}{x_i} $$
using the inequalities we derived above, and the fact that
each $i \in I \setminus C$ appears at most once in $\pi(j)$.
This proves the Claim, and hence the Lemma.
\end{proof}

Now we are ready to prove Theorem~\ref{thm:submod-matroid-approx}.

\begin{proof}
Let $\bx$ be the outcome of the fractional local search over $P_t(\cM)$.
Define $A = \{ i: x_i = t \}$.
Let $C$ be the optimum solution and $C' = C \setminus A = \{ i \in C: x_i < t\}$.
By Lemma~\ref{lemma:fractional-search},
$$ 2 F(\bx) \geq F(\bx \vee \b1_{C'}) + F(\bx \wedge \b1_C).$$
First, let's analyze $F(\bx \wedge \b1_C)$. We apply Lemma~\ref{lemma:rnd-threshold} (in the Appendix),
which states that $F(\bx \wedge \b1_C) \geq \E[f(T(\bx \wedge \b1_C))]$. Here,
$T(\bx \wedge \b1_C)$ is a random threshold set corresponding to the
vector $\bx \wedge \b1_C$, i.e.
$$ T(\bx \wedge \b1_C) = \{ i: (\bx \wedge \b1_C)_i > \lambda \}
 = \{ i \in C: x_i > \lambda \} = T(\bx) \cap C $$
where $\lambda \in [0,1]$ is uniformly random.
Equivalently,
$$ F(\bx \wedge \b1_C) \geq \E[f(T(\bx) \cap C)].$$
Due to the definition of a threshold set, with probability $t$ we have
$\lambda < t$ and $T(\bx)$ contains $A = \{ i: x_i = t \} = C \setminus C'$.
Then, $f(T(\bx) \cap C) + f(C') \geq f(C)$ by submodularity.
We conclude that
\begin{equation}
\label{eq:1}
F(\bx \wedge \b1_C) \geq t (f(C) - f(C')).
\end{equation}
Next, let's analyze $F(\bx \vee \b1_{C'})$. We consider the ground set
partitioned into $X = C \cup \bar{C}$, and we apply Lemma~\ref{lemma:submod-split} (in the Appendix).
(Here, $\bar{C}$ denotes $X \setminus C$, the complement of $C$ inside $X$.)
We get
$$ F(\bx \vee \b1_{C'}) \geq \E[f((T_1(\bx \vee \b1_{C'}) \cap C)
 \cup (T_2(\bx \vee \b1_{C'}) \cap \bar{C}))].$$
The random threshold sets look as follows: $T_1(\bx \vee \b1_{C'}) \cap C
 = (T_1(\bx) \cup C') \cap C$ is equal to $C$ with probability $t$,
and equal to $C'$ otherwise. $T_2(\bx \vee \b1_{C'}) \cap \bar{C}
 = T_2(\bx) \cap \bar{C}$ is empty with probability $1-t$. 
(We ignore the contribution when $T_2(\bx) \cap \bar{C} \neq \emptyset$.)
Because $T_1$ and $T_2$ are independently sampled, we get
$$ F(\bx \vee \b1_{C'}) \geq t(1-t) f(C) + (1-t)^2 f(C').$$
Provided that $t \in [0,\frac12 (3 - \sqrt{5})]$, we have $t \leq (1-t)^2$.
Then, we can write
\begin{equation}
\label{eq:1'}
F(\bx \vee \b1_{C'}) \geq t(1-t) f(C) + t f(C').
\end{equation}
Combining equations (\ref{eq:1}) and (\ref{eq:1'}), we get
\begin{eqnarray*}
& & F(\bx \vee \b1_{C'})  +  F(\bx \wedge \b1_C)
 \geq t (f(C) - f(C')) + t(1-t) f(C) + t \, f(C')
 = (2t - t^2) f(C).
\end{eqnarray*}
Therefore,
$$ F(\bx) \geq \frac12 (F(\bx \vee \b1_{C'}) + F(\bx \wedge \b1_C))
 \geq (t - \frac12 t^2) f(C).$$
Finally, we apply the pipage rounding technique which does not
lose anything in terms of objective value (see Lemma~\ref{lemma:pipage}).
\end{proof}

\paragraph{Technical remarks}
In each step of the algorithm, we need to estimate values of $F(\bx)$
for given $\bx \in P_t(\cM)$. We accomplish this by using the expression
$F(\bx) = \E[f(R(\bx))]$ where $R(\bx)$ is a random set associated with $\bx$.
By standard bounds, if the values of $f(S)$ are in a range $[0,M]$,
we can achieve accuracy $M / poly(n)$ using a polynomial number of samples.
We use the fact that $OPT \geq \frac{1}{n} M$ (see
Lemma~\ref{lemma:solution-value} in the Appendix) and therefore
we can achieve $OPT / poly(n)$ additive error in polynomial time.

We also relax the local step condition: we move to the next solution
only if $F(\bx + \delta \bv) > F(\bx) + \frac{\delta}{poly(n)} OPT$
for a suitable polynomial in $n$. This way, we can only make a polynomial number of steps.
When we terminate, the local optimality conditions (Lemma~\ref{lemma:local-opt})
are satisfied within an additive error of $OPT / poly(n)$,
which yields a polynomially small error in the approximation bound.

\subsection{Matroid base constraint}
\label{section:submod-bases}

Let us move on to the problem $\max \{f(S): S \in \cB\}$ where
$\cB$ are the bases of a matroid.
For a fixed $t \in [0,1]$, let us consider an algorithm which can be seen
as local search inside the base polytope $B(\cM)$, further constrained
by the box $[0,t]^X$. The matroid base polytope is defined as
$ B(\cM) = \mbox{conv} \{ \b1_B: B \in \cB \} $
or equivalently \cite{E70} as
$ B(\cM) = \{ \bx \geq 0: \forall S \subseteq X; \sum_{i \in S} x_i \leq r_\cM(S),
 \sum_{i \in X} x_i = r_\cM(X) \}, $
where $r_\cM$ is the matroid rank function of $\cM$. Finally, we define
$$ B_t(\cM) = B(\cM) \cap [0,t]^X = \{ \bx \in B(\cM): \forall i \in X; x_i \leq t \}.$$
Observe that $B_t(\cM)$ is nonempty if and only if there is a convex linear
combination $\bx = \sum_{B \in \cB} \xi_B \b1_B$ such that $x_i \in [0,t]$ for all $i$.
This is equivalent to saying that there is a linear combination (a fractional base packing)
$\bx' = \sum_{B \in \cB} \xi'_B \b1_B$ such that $x_i \in [0,1]$
and $\sum \xi'_B \geq \frac{1}{t}$, in other words the fractional base packing
number is $\nu \geq \frac{1}{t}$.  Since the optimal fractional packing of bases
in a matroid can be found efficiently (see Corollary 42.7a in \cite{Schrijver}, Volume B),
we can find efficiently the minimum $t \in [\frac12,1]$ such that $B_t(\cM) \neq \emptyset$.
Then, our algorithm is the following.

\

\noindent{\bf Fractional local search in $B_t(\cM)$} \\
 (given $t = \frac{r}{q}$, $r \leq q$ integer)
\begin{enumerate}
\item Let $\delta = \frac{1}{q}$.
%Assume that $\bx \in B_t(\cM)$ and the coordinates of $\bx$ are integer multiples of $\delta$.
Assume that $\bx \in B_t(\cM)$; adjust $\bx$ (using pipage rounding)
so that each $x_i$ is an integer multiple of $\delta$.
In the following, this property will be maintained.
\item If there is a direction $\bv = \be_j - \be_i$ such that
$\bx + \delta \bv \in B_t(\cM)$ and $F(\bx + \delta \bv) > F(\bx)$,
then set $\bx := \bx + \delta \bv$ and repeat.
\item If there is no such direction $\bv$, apply pipage rounding to $\bx$
and return the resulting solution.
\end{enumerate}

\noindent{\em Notes.}
We remark that the starting point can be found as a convex linear combination
of $q$ bases, $x = \frac{1}{q} \sum_{i=1}^{q} \b1_{B_i}$, such that
no element appears in more than $r$ of them, using matroid union techniques (see Theorem 42.9 in \cite{Schrijver}).
In the algorithm, we maintain this representation.
The local search step corresponds to switching a pair of elements in one base,
under the condition that no element is used in more than $r$ bases at the same time.
For now, we ignore the issues of estimating $F(\bx)$ and stopping the local search
within polynomial time. We discuss this at the end of this section.

Finally, we use pipage rounding to convert the fractional solution $\bx$
into an integral one of value at least $F(\bx)$ (Lemma~\ref{lemma:base-pipage} in the Appendix).
Note that it is not necessarily true that any of the bases in
a convex linear combination $\bx = \sum \xi_B \b1_{B}$ achieves the value $F(\bx)$.

\begin{theorem}
\label{thm:submod-bases-approx}
If there is a fractional packing of $\nu \in [1,2]$ bases in $\cM$,
then the fractional local search algorithm with $t = \frac{1}{\nu}$
returns a solution of value at least $\frac12 (1-t) \ OPT.$
\end{theorem}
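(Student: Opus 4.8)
The plan is to follow the proof of Theorem~\ref{thm:submod-matroid-approx} for the matroid independence case, replacing the independence polytope by the base polytope and adjusting each ingredient to the fact that on $B_t(\cM)$ the only admissible local moves are swaps $\be_j-\be_i$. Let $\bx$ be the fractional local optimum returned by the algorithm, maintained throughout as $\bx=\sum_B \xi_B \b1_B$, a convex combination of $q$ bases with each element used in at most $r$ of them, so that every coordinate of $\bx$ and every relevant constraint coefficient is an integer multiple of $\delta=1/q$. Let $C\in\cB$ be an optimal base, let $A=\{i:x_i=t\}$ be the set of saturated coordinates, and let $C'=C\setminus A=\{i\in C:x_i<t\}$.

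First I would record the base-polytope analogue of Lemma~\ref{lemma:local-opt}: using convexity of $F$ along directions $\be_j-\be_i$ and the fact that every feasible move can be taken in steps of size $\delta$, local optimality gives $\partdiff{F}{x_j}\le\partdiff{F}{x_i}$ (evaluated at $\bx$) whenever $\bx+\delta(\be_j-\be_i)\in B_t(\cM)$. Next I would invoke a matroid base exchange property, the base analogue of Lemma~\ref{lemma:exchange}: for every base $B$ occurring in the decomposition of $\bx$ there is a bijection $\pi_B:C\setminus B\to B\setminus C$ with $B-\pi_B(j)+j\in\cB$ for all $j\in C\setminus B$. These two facts should yield the key structural lemma, the analogue of Lemma~\ref{lemma:fractional-search}: a bound of the form $2F(\bx)\ge F(\bx\vee\b1_{C'})+F(\bx\wedge\b1_C)$. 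As in the independence case, smooth submodularity gives $F(\bx\vee\b1_{C'})-F(\bx)\le\sum_B\xi_B\sum_{j\in C'\setminus B}\partdiff{F}{x_j}$ and $F(\bx)-F(\bx\wedge\b1_C)\ge\sum_B\xi_B\sum_{i\in B\setminus C}\partdiff{F}{x_i}$, so it suffices to show $\sum_{j\in C'\setminus B}\partdiff{F}{x_j}\le\sum_{i\in B\setminus C}\partdiff{F}{x_i}$ for each $B$ with $\xi_B>0$. For $j\in C'\setminus B$ we have $x_j<t$, so $\bx+\delta(\be_j-\be_{\pi_B(j)})$ lies in $B_t(\cM)$ (replace a $\delta$-fraction of $B$ by $B-\pi_B(j)+j$), whence $\partdiff{F}{x_j}\le\partdiff{F}{x_{\pi_B(j)}}$; summing over $j$ and then accounting for the coordinates of $B\setminus C$ that $\pi_B$ matches to saturated elements of $C$ completes the estimate --- and this last accounting is exactly where the base case must diverge from the independence case, since on the base polytope one cannot simply decrease a coordinate to conclude its partial derivative is nonnegative.

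Given the structural lemma, I would bound its two terms by the threshold-rounding arguments from the Appendix, exactly as in Theorem~\ref{thm:submod-matroid-approx}. Lemma~\ref{lemma:rnd-threshold} gives $F(\bx\wedge\b1_C)\ge\E[f(T(\bx)\cap C)]\ge t\,(f(C)-f(C'))$, using that a random threshold below $t$ leaves all of $C\cap A=C\setminus C'$ inside the threshold set and that $f(T(\bx)\cap C)+f(C')\ge f(C)$ by submodularity. Lemma~\ref{lemma:submod-split} applied to the partition $X=C\cup\bar C$ gives $F(\bx\vee\b1_{C'})\ge t(1-t)f(C)+(1-t)^2 f(C')$, using that every coordinate outside $C'$ is at most $t$. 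Combining these with the structural lemma and simplifying using the hypothesis $\nu\le 2$, i.e.\ $t=1/\nu\ge\frac12$, should yield $F(\bx)\ge\frac12(1-t)\,f(C)$. Finally I would convert $\bx$ to an integral base of value at least $F(\bx)$ by pipage rounding (Lemma~\ref{lemma:base-pipage}), and invoke the standard $OPT/\mathrm{poly}(n)$ sampling and relaxed-stopping-rule arguments discussed after the algorithm to make everything run in polynomial time with a polynomially small loss.

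The step I expect to be the main obstacle is the structural lemma, and within it the treatment of the saturated coordinates $A$: in the independence polytope every coordinate of a base $B$ in the decomposition can be decreased, forcing $\partdiff{F}{x_i}\ge0$ for all $i\in B$, but in the base polytope a decrease of one coordinate must be paid for by an increase of another, which need not be available when the other endpoint is saturated. Pinning down the correct form of the structural lemma, together with the correspondingly modified combining step (which is what singles out the range $\nu\in(1,2]$ and the loss factor $\frac12(1-\frac1\nu)$ rather than the independence-case bound), is the delicate part; the rest is a routine adaptation of the independence argument.
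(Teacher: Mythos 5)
Your overall architecture matches the paper's (swap-only local optimality, a base-exchange bijection, a structural inequality of the form $2F(\bx)\ge F(\cdot)+F(\cdot)$, then the two threshold-rounding bounds and pipage rounding), but the one step you flag as "the delicate part" is exactly the step you have not supplied, and it is the heart of the proof. The structural inequality in the form $2F(\bx)\ge F(\bx\vee\b1_{C'})+F(\bx\wedge\b1_C)$ is not what gets proved. The obstruction is precisely the one you name: the lower bound $F(\bx)-F(\bx\wedge\b1_C)\ge\sum_B\xi_B\sum_{i\in B\setminus C}\partdiff{F}{x_i}$ contains, for each base $B$ in the decomposition, terms $\partdiff{F}{x_i}$ for \emph{every} $i\in B\setminus C$, including those whose partner $\pi_B(i)\in C\setminus B$ is saturated ($x_{\pi_B(i)}=t$). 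For such $i$ no feasible local move certifies anything about $\partdiff{F}{x_i}$ (in the base polytope you cannot decrease $x_i$ alone, and the only available swap partner is blocked), so these terms can be arbitrarily negative and the claimed inequality $\sum_{j\in C'\setminus B}\partdiff{F}{x_j}\le\sum_{i\in B\setminus C}\partdiff{F}{x_i}$ cannot be derived. Acknowledging that "the correct form of the structural lemma" must be pinned down is not the same as pinning it down.

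The paper's resolution (Lemma~\ref{lemma:fractional-search2}) is to abandon $\b1_{C'}$ and $\b1_C$ and instead build a vector $\bc$ with $c_i=1$ on $C'$, $c_i=t$ on $C\setminus C'$, and, for $i\notin C$, $c_i=x_i-\sum_{B:\,i\in B,\ \pi_B(i)\in C'}\xi_B$; that is, each coordinate outside $C$ is lowered only by the mass of those bases in the decomposition whose exchange partner is unsaturated. Then $F(\bx)-F(\bx\wedge\bc)\ge\sum_B\sum_{i\in B\setminus C:\,\pi_B(i)\in C'}\xi_B\,\partdiff{F}{x_i}$ involves only terms where the swap $\be_{\pi_B(i)}-\be_i$ is feasible, each is bounded below by $\partdiff{F}{x_{\pi_B(i)}}$, and the bijectivity of $\pi_B$ makes the resulting sum exactly match the upper bound on $F(\bx\vee\bc)-F(\bx)$. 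This changes the downstream estimates too: since $\bx\wedge\bc$ is no longer supported on $C$, one must use the two-threshold Lemma~\ref{lemma:submod-split} (not Lemma~\ref{lemma:rnd-threshold} alone) to get $F(\bx\wedge\bc)\ge t(1-t)(f(C)-f(C'))$, which is then relaxed to $(1-t)^2(f(C)-f(C'))$ using $t\ge\frac12$ so that the $f(C')$ terms cancel against $F(\bx\vee\bc)\ge t(1-t)f(C)+(1-t)^2f(C')$ and yield $2F(\bx)\ge(1-t)f(C)$. Without the construction of $\bc$, the proof does not go through.
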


For example, assume that $\cM$ contains two disjoint bases $B_1, B_2$
(which is the case considered in \cite{LMNS09}). Then,
the algorithm can be used with $t = \frac{1}{2}$ and
and we obtain a $(\frac{1}{4}-o(1))$-approximation, improving the
$(\frac{1}{6}-o(1))$-approximation from \cite{LMNS09}.
If there is a fractional packing of more than 2 bases,
our analysis still gives only a $(\frac{1}{4}-o(1))$-approximation.
If the dual matroid $\cM^*$ admits a better fractional packing of bases,
we can consider the problem $\max \{f(\bar{S}): S \in \cB^* \}$ which
is equivalent. For a uniform matroid, $\cB = \{B: |B|=k\}$,
the fractional base packing number is either at least $2$ or the same
holds for the dual matroid, $\cB^* = \{B: |B|=n-k\}$ (as noted in \cite{LMNS09}).
Therefore, we get a $(\frac{1}{4}-o(1))$-approximation for any uniform matroid.
The value $t=1$ can be used for any matroid,
but it does not yield any approximation guarantee.

\paragraph{Analysis of the algorithm}
We turn to the properties of fractional local optima.
We will prove that the point $\bx$ found by the fractional local search algorithm
satisfies the following conditions that allow us to compare $F(\bx)$ to the actual
optimum.

\begin{lemma}
\label{lemma:local-opt2}
The outcome of the fractional local search algorithm $\bx$
is a ``fractional local optimum'' in the following sense.
\begin{itemize}
\item For any $i,j$ such that $\bx + \delta (\be_j - \be_i) \in B_t(\cM)$,
$$\partdiff{F}{x_j} - \partdiff{F}{x_i} \leq 0.$$
(The partial derivatives are evaluated at $\bx$.)
\end{itemize}
\end{lemma}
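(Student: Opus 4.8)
The plan is to transcribe the argument used for Lemma~\ref{lemma:local-opt}, now keeping only the swap directions $\bv = \be_j - \be_i$, since these are the only moves step~2 of the algorithm considers inside $B_t(\cM)$. The two ingredients are exactly the same: convexity of $F$ along such directions, and the discretization observation that a move of nonzero length can always be rounded up to a move of length $\delta$.

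First I would recall from \cite{CCPV07} (as used in the proof of Lemma~\ref{lemma:local-opt}) that along any direction $\bv = \be_j - \be_i$ the function $\lambda \mapsto F(\bx + \lambda \bv)$ is convex in $\lambda$. Its derivative at $\lambda = 0$ is $\partdiff{F}{x_j} - \partdiff{F}{x_i}$ evaluated at $\bx$; by convexity, if this derivative is strictly positive then $F(\bx + \lambda \bv) > F(\bx)$ for all sufficiently small $\lambda > 0$. Second, I would invoke the invariant maintained by the algorithm: every coordinate of $\bx$ is an integer multiple of $\delta = 1/q$ (this is set up in step~1 via pipage rounding and preserved by every swap step), and every constraint defining $B_t(\cM)$ — the rank inequalities $\sum_{i \in S} x_i \leq r_\cM(S)$, the equality $\sum_{i \in X} x_i = r_\cM(X)$, and the box constraints $x_i \leq t = r/q$ — has coefficients and right-hand sides that are integer multiples of $\delta$. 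Consequently, if $\bx + \lambda(\be_j - \be_i) \in B_t(\cM)$ for some $\lambda > 0$, then in fact $\bx + \delta(\be_j - \be_i) \in B_t(\cM)$: a swap direction leaves $\sum_i x_i$ unchanged, so the equality is automatically preserved, and each remaining (in)equality that had slack at $\bx$ had slack equal to a positive integer multiple of $\delta$, hence survives a full step of size $\delta$.

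Combining these: suppose $i,j$ are such that $\bx + \delta(\be_j - \be_i) \in B_t(\cM)$, and suppose for contradiction that $\partdiff{F}{x_j} - \partdiff{F}{x_i} > 0$ at $\bx$. Then by the convexity statement, $F(\bx + \delta(\be_j - \be_i)) > F(\bx)$, so step~2 of the fractional local search would have moved from $\bx$ to $\bx + \delta(\be_j - \be_i)$, contradicting the assumption that $\bx$ is the output of the algorithm (at which step~3 was reached). Hence $\partdiff{F}{x_j} - \partdiff{F}{x_i} \leq 0$, which is the claimed condition.

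There is no real obstacle here; the only point requiring a moment's care is the discretization step, specifically that the equality constraint $\sum_{i \in X} x_i = r_\cM(X)$ does not obstruct the rounding-up argument — but this is immediate because a swap direction $\be_j - \be_i$ preserves the coordinate sum, so only the rank inequalities and box constraints need to be checked, exactly as in the matroid-independence case of Lemma~\ref{lemma:local-opt}. In the polynomial-time variant one replaces the strict improvement condition by $F(\bx + \delta\bv) > F(\bx) + \frac{\delta}{\mathrm{poly}(n)} OPT$, and then the same reasoning yields $\partdiff{F}{x_j} - \partdiff{F}{x_i} \leq OPT/\mathrm{poly}(n)$, which is the form actually needed in the approximation analysis.
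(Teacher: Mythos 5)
Your proposal is correct and follows the paper's own proof essentially verbatim: convexity of $F$ along swap directions $\be_j-\be_i$ from \cite{CCPV07}, plus the observation that all coordinates and constraints are integer multiples of $\delta$ so a feasible nonzero move can be taken with step size $\delta$, yielding a contradiction with termination of the local search. The extra remarks about the equality constraint and the polynomial-time variant are accurate but not needed for the lemma as stated.
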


\begin{proof}
Similarly to Lemma~\ref{lemma:local-opt},
observe that the coordinates of $\bx$ are always integer multiples of $\delta$,
therefore if it is possible to move from $\bx$ in the direction of
$\bv = \be_j - \be_i$ by any nonzero amount, then it is possible to
move by $\delta \bv$.
We use the property that for any direction $\bv = \be_j - \be_i$,
the function $F(\bx + \lambda \bv)$ is a convex function of $\lambda$ \cite{CCPV07}.
Therefore, if $\frac{dF}{d\lambda} > 0$ and it is possible
to move in the direction of $\bv$, we would get
$F(\bx + \delta \bv) > F(\bx)$ and the fractional local search would continue.
For $\bv = \be_j - \be_i$, we get
$$ \frac{dF}{d\lambda} = \partdiff{F}{x_j} - \partdiff{F}{x_i} \leq 0.$$
\end{proof}

We refer to the following exchange property for matroid bases
(see \cite{Schrijver}, Corollary 39.21a).

\begin{lemma}
\label{lemma:exchange2}
For any $B_1, B_2 \in \cB$, there is a bijection $\pi:B_1 \setminus B_2
 \rightarrow B_2 \setminus B_1$ such that
$\forall i \in B_1 \setminus B_2$; $B_1-i+\pi(i) \in \cM$.
\end{lemma}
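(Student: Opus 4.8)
The statement is the symmetric (bijective) exchange property for matroid bases, and the plan is to derive it from Hall's marriage theorem via an ``exchange graph''. First I would record that all bases of a matroid have the same cardinality, so $|B_1 \setminus B_2| = |B_2 \setminus B_1|$; set $X = B_1 \setminus B_2$ and $Y = B_2 \setminus B_1$, and build the bipartite graph $H$ on colour classes $X, Y$ with an edge between $i \in X$ and $j \in Y$ exactly when $B_1 - i + j \in \cB$. A perfect matching of $H$ is precisely a bijection $\pi : X \to Y$ with $B_1 - i + \pi(i) \in \cB$ for all $i$, so it suffices to show that $H$ has a perfect matching.

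To do this I would verify Hall's condition $|N_H(A)| \ge |A|$ for every $A \subseteq X$. Fix such an $A$ and let $I = B_1 \setminus A$, which is independent; note that $B_1 \cap B_2 \subseteq I$ since $A \subseteq B_1 \setminus B_2$. Extend $I$ to a base using only elements of $B_2$: there is $J \subseteq B_2 \setminus I = B_2 \setminus B_1$ with $I \cup J \in \cB$ and $|J| = |B_1| - |I| = |A|$. The crux is then to show $J \subseteq N_H(A)$. For $j \in J$, the set $I + j$ is a subset of the base $I \cup J$, hence independent, so the fundamental circuit $C(j, B_1)$ of $j$ with respect to $B_1$ cannot be contained in $I + j = (B_1 \setminus A) + j$; therefore $C(j, B_1) \cap A \ne \emptyset$. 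Picking $i \in C(j, B_1) \cap A$, the standard fundamental-circuit fact yields $B_1 - i + j \in \cB$, so $ij \in E(H)$ with $i \in A$, i.e. $j \in N_H(A)$. Hence $|N_H(A)| \ge |J| = |A|$, Hall's condition holds, and the resulting perfect matching is the desired $\pi$.

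The only genuinely non-formulaic step is the inclusion $J \subseteq N_H(A)$; the equicardinality of bases, the independent-set extension property, and the fact that deleting any element of $C(j, B_1)$ from $B_1 + j$ leaves a base are all immediate from the matroid axioms. (Alternatively, one can simply invoke Brualdi's symmetric exchange theorem or \cite{Schrijver}, Corollary 39.21a, exactly as the lemma statement does.)
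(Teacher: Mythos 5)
Your proof is correct. Note that the paper does not actually prove this lemma at all: it simply cites it as a known exchange property (Schrijver, Corollary 39.21a; the result is Brualdi's symmetric basis exchange theorem), so there is no in-paper argument to compare against. Your self-contained derivation via Hall's theorem on the exchange graph is the standard proof of that classical fact, and every step checks out: the equicardinality $|B_1\setminus B_2|=|B_2\setminus B_1|$, the augmentation of $I=B_1\setminus A$ by a set $J\subseteq B_2\setminus B_1$ of size $|A|$, and the key observation that for $j\in J$ the fundamental circuit $C(j,B_1)$ must meet $A$ (else it would sit inside the independent set $I+j$), whence removing any $i\in C(j,B_1)\cap A$ from $B_1+j$ yields a base. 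This verifies Hall's condition and produces the desired bijection. Since the lemma is invoked as a black box in the paper, either your full argument or the bare citation would be acceptable here.
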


Using this, we prove a lemma about fractional local optima
analogous to Lemma 2.2 in \cite{LMNS09}.

\begin{lemma}
\label{lemma:fractional-search2}
Let $\bx$ be the outcome of fractional local search over $B_t(\cM)$.
Let $C \in \cB$ be any base. Then there is $\bc \in [0,1]^X$ satisfying
\begin{itemize}
\item $c_i = t$, if $i \in C$ and $x_i = t$
\item $c_i = 1$, if $i \in C$ and $x_i < t$
\item $0 \leq c_i \leq x_i$, if $i \notin C$
\end{itemize}
such that
$$ 2 F(\bx) \geq F(\bx \vee \bc) + F(\bx \wedge \bc).$$
\end{lemma}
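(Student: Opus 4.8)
The plan is to adapt the proof of Lemma~\ref{lemma:fractional-search}, replacing the add/remove/switch exchange argument by the pure switch exchange property for matroid bases (Lemma~\ref{lemma:exchange2}); the vector $\bc$ will be read off from the matching between $C$ and the bases appearing in a decomposition of $\bx$.

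First I would fix the representation $\bx = \sum_{B \in \cB} \xi_B \b1_B$ maintained by the algorithm, with $\sum_B \xi_B = 1$, each $\xi_B$ a positive integer multiple of $\delta = 1/q$, and $x_i \le t$ for all $i$ (no element lies in more than $r = tq$ of the bases). Set $C' = \{ i \in C : x_i < t \}$. For each $B$ with $\xi_B > 0$, Lemma~\ref{lemma:exchange2} gives a bijection $\pi_B : B \setminus C \to C \setminus B$ with $B - \pi_B^{-1}(j) + j \in \cB$ for all $j \in C \setminus B$. For $j \in C' \setminus B$ put $i = \pi_B^{-1}(j) \in B \setminus C$: then $x_i \ge \xi_B \ge \delta$, and since $x_j < t$ with $x_j,t$ multiples of $\delta$ also $x_j \le t - \delta$, so moving $\delta$ of the weight of $B$ onto $B - i + j$ gives $\bx + \delta(\be_j - \be_i) \in B_t(\cM)$. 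As $\bx$ is a fractional local optimum, Lemma~\ref{lemma:local-opt2} yields $\partdiff{F}{x_j} \le \partdiff{F}{x_i}$. Define $I_B = \pi_B^{-1}(C' \setminus B) \subseteq B \setminus C$.

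Next I would set $c_i = 1$ for $i \in C'$, $c_i = t$ for $i \in C \setminus C'$, and $c_i = x_i - \sum_{B : i \in I_B} \xi_B$ for $i \notin C$. Since $I_B \subseteq B$ we have $\{B : i \in I_B\} \subseteq \{B : i \in B\}$, hence $0 \le c_i \le x_i$ for $i \notin C$, and $\bc$ has exactly the three properties required. Then $\bx \vee \bc$ agrees with $\bx$ except that coordinates in $C'$ are raised to $1$, and $\bx \wedge \bc$ agrees with $\bx$ except that each coordinate $i \notin C$ is lowered to $c_i$. By smooth submodularity of $F$ (cf.~\cite{Vondrak08}), with all partial derivatives evaluated at $\bx$,
\begin{eqnarray*}
F(\bx \vee \bc) - F(\bx) & \le & \sum_{i \in C'} (1 - x_i) \partdiff{F}{x_i}, \\
F(\bx) - F(\bx \wedge \bc) & \ge & \sum_{i \notin C} (x_i - c_i) \partdiff{F}{x_i},
\end{eqnarray*}
so the lemma reduces to showing $\sum_{i \in C'} (1 - x_i) \partdiff{F}{x_i} \le \sum_{i \notin C} (x_i - c_i) \partdiff{F}{x_i}$.

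Finally, expanding $1 - x_j = \sum_{B : j \notin B} \xi_B$ gives $\sum_{j \in C'} (1 - x_j)\partdiff{F}{x_j} = \sum_B \xi_B \sum_{j \in C' \setminus B} \partdiff{F}{x_j}$; applying $\partdiff{F}{x_j} \le \partdiff{F}{x_{\pi_B^{-1}(j)}}$ termwise and reindexing the inner sum by $I_B$ bounds it by $\sum_B \xi_B \sum_{i \in I_B} \partdiff{F}{x_i} = \sum_{i \notin C} \bigl(\sum_{B : i \in I_B} \xi_B\bigr) \partdiff{F}{x_i} = \sum_{i \notin C} (x_i - c_i) \partdiff{F}{x_i}$, as needed. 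The one genuinely delicate point I expect is the legality of the switch $B \to B - i + j$ inside $B_t(\cM)$: it requires $x_j < t$, which is exactly why $C'$ rather than $C$ appears in the first term and why $\bc$ must take the asymmetric form in the statement; the rest is bookkeeping with the bijections $\pi_B$ and the identity $x_i - c_i = \sum_{B : i \in I_B}\xi_B$.
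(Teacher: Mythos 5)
Your proposal is correct and follows essentially the same route as the paper's proof: the same decomposition $\bx = \sum_B \xi_B \b1_B$, the same bijections $\pi_B$ from Lemma~\ref{lemma:exchange2}, the identical definition of $\bc$ (your $\sum_{B: i \in I_B}\xi_B$ is the paper's $\sum_{B: i \in B,\, \pi_B(i) \in C'}\xi_B$), and the same combination of smooth submodularity with the local-optimality inequality $\partial F/\partial x_j \le \partial F/\partial x_i$ followed by reindexing along $\pi_B$. Your explicit verification that the switch stays inside $B_t(\cM)$ (via $\xi_B \ge \delta$ and $x_j \le t - \delta$) and that $0 \le c_i \le x_i$ is slightly more careful than the paper's, but substantively identical.
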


Note that for $t = 1$, we can set $\bc = \b1_C$.
However, in general we need this more complicated formulation.
Intuitively, $\bc$ is obtained from $\bx$ by raising the variables $x_i, i \in C$
and decreasing $x_i$ for $i \notin C$.
However, we can only raise the variables $x_i, i \in C$, where $x_i$ is below
the threshold $t$, otherwise we do not have any information about $\partdiff{F}{x_i}$.
Also, we do not necessarily decrease all the variables outside of $C$ to zero.

\begin{proof}
Let $C \in \cB$ and assume $\bx \in B_t(\cM)$ is a fractional local optimum.
We can decompose $\bx$ into a convex linear combination of vertices
of $B(\cM)$, $\bx = \sum \xi_B \b1_B$.
By Lemma~\ref{lemma:exchange2}, for each base $B$ there is a bijection
$\pi_B:B \setminus C \rightarrow C \setminus B$ such that
$\forall i \in B \setminus C$; $B - i + \pi_B(i) \in \cB$.

We define $C' = \{i \in C: x_i < t \}$.
The reason we consider $C'$ is that
if $x_i = t$, there is no room for an exchange step increasing $x_i$,
and therefore Lemma~\ref{lemma:local-opt2} does not give any information
about $\partdiff{F}{x_i}$.
We construct the vector $\bc$ by starting from $\bx$, and for each $B$
swapping the elements in $B \setminus C$ for their image under $\pi_B$,
provided it is in $C'$, until we raise the coordinates on $C'$ to $c_i=1$.
Formally, we set $c_i = 1$ for $i \in C'$, $c_i = t$ for $i \in C \setminus C'$,
and for each $i \notin C$, we define
$$ c_i = x_i - \sum_{B: i \in B, \pi_B(i) \in C'} \xi_B.$$

In the following, all partial derivatives are evaluated at $\bx$.
By the smooth submodularity of $F(\bx)$ (see \cite{CCPV09}),
\begin{eqnarray}
\label{eq:2}
F(\bx \vee \bc) - F(\bx) & \leq & \sum_{j: c_j > x_j} (c_j - x_j) \partdiff{F}{x_j}
%\end{eqnarray*}
%\begin{eqnarray*}
 = \sum_{j \in C'} (1 - x_j) \partdiff{F}{x_j}
  = \sum_B \sum_{j \in C' \setminus B} \xi_B \partdiff{F}{x_j}
\end{eqnarray}
because $\sum_{B: j \notin B} \xi_B = 1 - x_j$ for any $j$.
On the other hand, also by smooth submodularity,
\begin{eqnarray*}
F(\bx) - F(\bx \wedge \bc) & \geq & \sum_{i: c_i < x_i} (x_i - c_i) \partdiff{F}{x_i}
%\end{eqnarray*}
%\begin{eqnarray*}
 = \sum_{i \notin C} (x_i - c_i) \partdiff{F}{x_i}
 = \sum_{i \notin C} \sum_{B: i \in B, \pi_B(i) \in C'} \xi_B \partdiff{F}{x_i}
\end{eqnarray*}
using our definition of $c_i$. In the last sum,
for any nonzero contribution, we have $\xi_B > 0$, $i \in B$ and $j = \pi_B(i) \in C'$,
i.e. $x_j < t$. Therefore it is possible to move in the direction $\be_j - \be_i$
(we can switch from $B$ to $B-i+j$).
By Lemma~\ref{lemma:local-opt2},
$$ \partdiff{F}{x_j} - \partdiff{F}{x_i} \leq 0.$$
Therefore, we get
\begin{eqnarray}
F(\bx) - F(\bx \wedge \bc) & \geq &
\sum_{i \notin C} \sum_{B: i \in B, j=\pi_B(i) \in C'} \xi_B \partdiff{F}{x_j}
%\end{eqnarray*}
%\begin{eqnarray}
\label{eq:3}
 = \sum_B \sum_{i \in B \setminus C: j=\pi_B(i) \in C'} \xi_B \partdiff{F}{x_j}.
\end{eqnarray}
By the bijective property of $\pi_B$, this is equal to
$\sum_B \sum_{j \in C' \setminus B} \xi_B \partdiff{F}{x_j}$.
Putting (\ref{eq:2}) and (\ref{eq:3}) together, we get
$F(\bx \vee \bc) - F(\bx) \leq F(\bx) - F(\bx \wedge \bc)$.
\end{proof}

Now we are ready to prove Theorem~\ref{thm:submod-bases-approx}.

\begin{proof}
Assuming that $B_t(\cM) \neq \emptyset$, we can find a starting point $\bx_0 \in B_t(\cM)$.
From this point, we reach a fractional local optimum $\bx \in B_t(\cM)$
(see Lemma~\ref{lemma:local-opt2}).
We want to compare $F(\bx)$ to the actual optimum; assume that $OPT = f(C)$.

As before, we define $C' = \{i \in C: x_i < t\}$.
By Lemma~\ref{lemma:fractional-search2}, we know that
the fractional local optimum satisfies:
\begin{equation}
\label{eq:4}
2 F(\bx) \geq F(\bx \vee \bc) + F(\bx \wedge \bc)
\end{equation}
for some vector $\bc$ such that $c_i = t$ for all $i \in C \setminus C'$,
$c_i = 1$ for $i \in C'$ and $0 \leq c_i \leq x_i$ for $i \notin C$.

First, let's analyze $F(\bx \vee \bc)$. We have
\begin{itemize}
\item $(\bx \vee \bc)_i = 1$ for all $i \in C'$.
\item $(\bx \vee \bc)_i = t$ for all $i \in C \setminus C'$.
\item $(\bx \vee \bc)_i \leq t$ for all $i \notin C$.
\end{itemize}
We apply Lemma~\ref{lemma:submod-split} to the partition $X = C \cup \bar{C}$.
We get
$$ F(\bx \vee \bc) \geq \E[f((T_1(\bx \vee \bc) \cap C) \cup (T_2(\bx \vee \bc) \cap \bar{C}))] $$
where $T_1(\bx)$ and $T_2(\bx)$ are independent threshold sets.
Based on the information above, $T_1(\bx \vee \bc) \cap C = C$ with probability $t$
and $T_1(\bx \vee \bc) \cap C = C'$ otherwise. On the other hand,
$T_2(\bx \vee \bc) \cap \bar{C} = \emptyset$ with probability at least $1-t$.
These two events are independent. We conclude that on the right-hand side,
we get $f(C)$ with probability at least $t(1-t)$, or $f(C')$ with probability
at least $(1-t)^2$:
\begin{equation}
\label{eq:5}
F(\bx \vee \bc) \geq t(1-t) f(C) + (1-t)^2 f(C').
\end{equation}
Turning to $F(\bx \wedge \bc)$, we see that
\begin{itemize}
\item $(\bx \wedge \bc)_i = x_i$ for all $i \in C'$.
\item $(\bx \wedge \bc)_i = t$ for all $i \in C \setminus C'$.
\item $(\bx \wedge \bc)_i \leq t$ for all $i \notin C$.
\end{itemize}
We apply Lemma~\ref{lemma:submod-split} to $X = C \cup \bar{C}$.
$$ F(\bx \wedge \bc) \geq \E[f((T_1(\bx \wedge \bc) \cap C) \cup (T_2(\bx \wedge \bc) \cap \bar{C}))].$$
With probability $t$,  $T_1(\bx \wedge \bc) \cap C$ contains $C \setminus C'$
(and maybe some elements of $C'$). In this case, $f(T_1(\bx \wedge \bc) \cap C)
 \geq f(C) - f(C')$ by submodularity.
Also, $T_2(\bx \wedge \bc) \cap \bar{C}$ is empty with probability at least $1-t$.
Again, these two events are independent.
Therefore,
$ F(\bx \wedge \bc) \geq t(1-t) (f(C) - f(C')).$
If $f(C') > f(C)$, this bound is vacuous; otherwise, we can replace $t(1-t)$ by
$(1-t)^2$, because $t \geq 1/2$. In any case,
\begin{equation}
\label{eq:6}
F(\bx \wedge \bc) \geq (1-t)^2 (f(C) - f(C')).
\end{equation}
Combining (\ref{eq:4}), (\ref{eq:5}) and (\ref{eq:6}),
$$ F(\bx) \geq \frac12 (F(\bx \vee \bc) + F(\bx \wedge \bc))
 \geq \frac12 (t(1-t) f(C) + (1-t)^2 f(C)) = \frac12 (1-t) f(C).$$
\end{proof}

\paragraph{Technical remarks}
Again, we have to deal with the issues of estimating $F(\bx)$ and
stopping the local search in polynomial time. We do this exactly
as we did at the end of Section~\ref{section:submod-independent}.
One issue to be careful about here is that if $f:2^X \rightarrow [0,M]$,
our estimates of $F(\bx)$ are within an additive error of $M / poly(n)$.
If the optimum value $OPT = \max \{f(S): S \in \cB\}$ is very small
compared to $M$, the error might be large compared to $OPT$ which would be
a problem. The optimum could in fact be very small in general.
But it holds that if $\cM$ contains no loops and co-loops
(which can be eliminated easily), then $OPT \geq \frac{1}{n^2} M$
(see Appendix~\ref{app:base-value}). Then, our sampling errors are
on the order of $OPT / poly(n)$ which yields a $1/poly(n)$ error
in the approximation bound.

\section{Approximation for symmetric instances}
\label{section:symmetric}

We can achieve a better approximation assuming that the instance exhibits
a certain symmetry. This is the same kind of symmetry that we use in our hardness
construction (Section~\ref{section:hardness-proof}) and the hard instances exhibit
the same symmetry as well. It turns out that our approximation in this case
matches the hardness threshold up to lower order terms.
% Hence, we can say that the case of symmetric instances is now completely understood.

Similar to our hardness result, the symmetries that we consider
here are permutations of the ground set $X$, corresponding
to permutations of coordinates in $\RR^X$. We start with some basic
properties which are helpful in analyzing symmetric instances.

\begin{lemma}
\label{lemma:grad-sym}
Assume that $f:2^X \rightarrow \RR$ is invariant with respect to
a group of permutations $\cG$ and $F(\bx) = \E[f(\hat{\bx})]$.
Then for any symmetrized vector $\bar{\bc} = \E_{\sigma \in {\cG}}[\sigma(\bc)]$,
$\nabla F |_{\bar{\bc}}$ is also symmetric w.r.t. $\cG$. I.e., for any $\tau \in \cG$,
$$ \tau(\nabla F |_{\bx=\bar{\bc}}) = \nabla F |_{\bx=\bar{\bc}}. $$
\end{lemma}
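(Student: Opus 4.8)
The plan is to reduce this statement to the coordinate-wise identity already established inside the proof of Lemma~\ref{lemma:grad-symmetry}. First I would observe that $F$ itself is invariant under $\cG$: from the explicit formula $F(\bx) = \sum_{S \subseteq X} f(S) \prod_{i \in S} x_i \prod_{j \notin S}(1-x_j)$ and the assumption $f(\sigma(S)) = f(S)$ for all $\sigma \in \cG$, applying a permutation $\sigma$ to the coordinates of $\bx$ merely reindexes the sum by $S \mapsto \sigma(S)$, so $F(\sigma(\bx)) = F(\bx)$. Thus $F$ is a differentiable $\cG$-invariant function, which is exactly the hypothesis of Lemma~\ref{lemma:grad-symmetry}.

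Next I would use the chain-rule computation from that lemma. Differentiating the identity $F(\bx) = F(\sigma(\bx))$ and evaluating at a point that is fixed by $\sigma$ yields, as in Eq.~(\ref{eq:F-inv}), that
$$ \partdiff{F}{x_i} \Big|_{\bx=\bar{\bu}} = \partdiff{F}{x_{\sigma^{-1}(i)}} \Big|_{\bx=\bar{\bu}} $$
for every coordinate $i$ and every $\sigma \in \cG$, using that $\sigma(\bar{\bu}) = \bar{\bu}$ because the symmetrization $\bar{\bu} = \E_{\sigma \in \cG}[\sigma(\bu)]$ is a fixed point of the $\cG$-action (the group average is $\cG$-invariant, and symmetrization is idempotent). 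Applying this with $\bar{\bu} = \bar{\bc}$ gives the desired orbit-wise equality of the partial derivatives of $F$ at $\bar{\bc}$.

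Finally I would translate this into the claimed form. With the convention $(\tau(\bv))_j = v_{\tau(j)}$ used throughout for permuting coordinates of a vector, the statement $\tau(\nabla F|_{\bx=\bar{\bc}}) = \nabla F|_{\bx=\bar{\bc}}$ is precisely $\partdiff{F}{x_{\tau(j)}}|_{\bx=\bar{\bc}} = \partdiff{F}{x_j}|_{\bx=\bar{\bc}}$ for all $j$, which follows from the displayed identity by taking $i = \tau(j)$ and $\sigma = \tau$. Since $\tau \in \cG$ was arbitrary, $\nabla F|_{\bx=\bar{\bc}}$ is symmetric with respect to $\cG$. I do not expect any real obstacle here; the only thing to be careful about is keeping the three parallel conventions straight (permutations acting on subsets of $X$, on coordinate vectors, and on the gradient vector), since the whole argument is essentially a corollary of Lemma~\ref{lemma:grad-symmetry} once $\cG$-invariance of $F$ is noted.
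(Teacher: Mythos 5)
Your proposal is correct and follows essentially the same route as the paper: both establish $\cG$-invariance of $F$ from that of $f$, differentiate the identity $F(\bx)=F(\tau(\bx))$ via the chain rule, and evaluate at the symmetrized point $\bar{\bc}$ using $\tau(\bar{\bc})=\bar{\bc}$ (the paper simply redoes the chain-rule computation inline rather than citing Eq.~(\ref{eq:F-inv}) from Lemma~\ref{lemma:grad-symmetry}). The only cosmetic remark is that the relevant fact about $\bar{\bc}$ is that it is a fixed point of the $\cG$-action (because $\sigma\circ\tau$ ranges over $\cG$ as $\sigma$ does), not idempotence of symmetrization per se.
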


\begin{proof}
Since $f(S)$ is invariant under $\cG$, so is $F(\bx)$,
i.e. $F(\bx) = F(\tau(\bx))$ for any $\tau \in {\cal G}$.
Differentiating both sides at $\bx=\bc$, we get by the chain rule:
$$ \partdiff{F}{x_i} \Big|_{\bx=\bc} =
 \sum_j \partdiff{F}{x_j} \Big|_{\bx=\tau(\bc)} \partdiff{}{x_i} (\tau(x))_j
 = \sum_j \partdiff{F}{x_j} \Big|_{\bx=\tau(\bc)} \partdiff{x_{\tau(j)}}{x_i}. $$
Here, $\partdiff{x_{\tau(j)}}{x_i} = 1$ if $\tau(j) = i$, and $0$ otherwise.
Therefore,
$$ \partdiff{F}{x_i} \Big|_{\bx=\bc} =
 \partdiff{F}{x_{\tau^{-1}(i)}} \Big|_{\bx=\tau(\bc)}. $$
Note that $\tau(\bar{\bc}) = \E_{\sigma \in \cG}[\tau(\sigma(\bc))]
 =  \E_{\sigma \in \cG}[\sigma(\bc)] = \bar{\bc}$
since the distribution of $\tau \circ \sigma$ is equal to the distribution
of $\sigma$. Therefore,
$$ \partdiff{F}{x_i} \Big|_{\bx=\bar{\bc}} = \partdiff{F}{x_{\tau^{-1}(i)}}
  \Big|_{\bx=\bar{\bc}} $$
for any $\tau \in \cG$.
\end{proof}

Next, we prove that the ``symmetric optimum''
$\max \{F(\bar{\bx}): \bx \in P(\cF)\}$ gives a solution which is
a local optimum for the original instance $\max \{F(\bx): \bx \in P(\cF)\}$.
(As we proved in Section~\ref{section:hardness-proof}, in general
we cannot hope to find a better solution than the symmetric optimum.)

\begin{lemma}
\label{lemma:sym-local-opt}
Let $f:2^X \rightarrow \RR$ and ${\cal F} \subset 2^X$
be invariant with respect to a group of permutations $\cal G$.
Let $\overline{OPT} = \max \{ F(\bar{\bx}): \bx \in P(\cF) \}$
where $\bar{\bx} = \E_{\sigma \in {\cal G}}[\sigma(\bx)]$,
and let $\bx_0$ be the symmetric point
where $\overline{OPT}$ is attained  ($\bar{\bx}_0 = \bx_0$).
Then $\bx_0$ is a local optimum for the problem
$\max \{F(\bx): \bx \in P(\cF) \}$, in the sense that
$(\bx-\bx_0) \cdot \nabla F |_{\bx_0} \leq 0$ for any $\bx \in P(\cF)$.
\end{lemma}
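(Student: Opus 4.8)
The goal is to show that for the symmetric optimizer $\bx_0$ (with $\bar{\bx}_0 = \bx_0$) of the problem $\max\{F(\bar{\bx}): \bx \in P(\cF)\}$, we have $(\bx - \bx_0)\cdot \nabla F|_{\bx_0} \leq 0$ for every $\bx \in P(\cF)$. The key tool is Lemma~\ref{lemma:grad-sym}, which tells us that $\nabla F|_{\bx_0}$ is itself symmetric under $\cG$. I would proceed as follows. Take an arbitrary $\bx \in P(\cF)$. Since $P(\cF)$ is the convex hull of characteristic vectors of sets in $\cF$, and $\cF$ is $\cG$-invariant, the polytope $P(\cF)$ is closed under the action of every $\sigma \in \cG$; moreover it is convex, so the symmetrization $\bar{\bx} = \E_{\sigma \in \cG}[\sigma(\bx)]$, being an average of points $\sigma(\bx) \in P(\cF)$, also lies in $P(\cF)$.

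The main step is to relate the directional derivative along $\bx - \bx_0$ to the directional derivative along $\bar{\bx} - \bx_0$. Because $\nabla F|_{\bx_0}$ is symmetric under $\cG$, for any $\sigma \in \cG$ we have
$$
(\sigma(\bx) - \bx_0) \cdot \nabla F|_{\bx_0} = (\sigma(\bx) - \sigma(\bx_0)) \cdot \nabla F|_{\bx_0} = (\bx - \bx_0) \cdot \sigma^{-1}(\nabla F|_{\bx_0}) = (\bx - \bx_0) \cdot \nabla F|_{\bx_0},
$$
using $\sigma(\bx_0) = \bx_0$ and the invariance of the inner product under permuting coordinates of both vectors. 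Averaging this identity over $\sigma \in \cG$ gives $(\bar{\bx} - \bx_0)\cdot \nabla F|_{\bx_0} = (\bx - \bx_0)\cdot \nabla F|_{\bx_0}$. So it suffices to prove the inequality for symmetrized points $\bar{\bx}$.

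Now restrict attention to the segment from $\bx_0$ to $\bar{\bx}$: consider $g(\lambda) = F(\bx_0 + \lambda(\bar{\bx} - \bx_0))$ for $\lambda \in [0,1]$. Every point $\bx_0 + \lambda(\bar{\bx}-\bx_0)$ is a convex combination of two symmetric points, hence is itself symmetric, i.e. equals its own symmetrization; therefore $F$ restricted to this segment coincides with the function $\bx \mapsto F(\bar{\bx})$ restricted to the segment, and $g(\lambda) \leq \overline{OPT} = g(0) = F(\bx_0)$ for all $\lambda \in [0,1]$ by the optimality of $\bx_0$. Since $g$ is differentiable with $g(\lambda) \leq g(0)$ on $[0,1]$, we get $g'(0) \leq 0$; but $g'(0) = (\bar{\bx} - \bx_0)\cdot \nabla F|_{\bx_0}$. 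Combining with the previous paragraph yields $(\bx - \bx_0)\cdot \nabla F|_{\bx_0} \leq 0$, as desired.

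The main obstacle is the middle step — establishing that the directional derivative along $\bx - \bx_0$ equals that along its symmetrization $\bar{\bx} - \bx_0$ — which is exactly where the symmetry of the gradient (Lemma~\ref{lemma:grad-sym}) is essential and where one must be careful that $\bx_0$ is genuinely fixed by the group. Everything else (convexity/invariance of $P(\cF)$, closure of symmetric points under convex combinations, the one-variable argument that $g(0)$ being a maximum forces $g'(0)\le 0$) is routine.
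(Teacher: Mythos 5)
Your proof is correct and follows essentially the same route as the paper's: the identity $(\bar{\bx}-\bx_0)\cdot\nabla F|_{\bx_0} = (\bx-\bx_0)\cdot\nabla F|_{\bx_0}$ via Lemma~\ref{lemma:grad-sym}, followed by the observation that the segment from $\bx_0$ to $\bar{\bx}$ consists of symmetric feasible points so optimality of $\bx_0$ for the symmetrized problem forces the directional derivative to be nonpositive. The only difference is presentational (direct argument versus the paper's proof by contradiction).
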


\begin{proof}
Assume for the sake of contradiction that $(\bx-\bx_0) \cdot \nabla F|_{\bx_0} > 0$
for some $\bx \in P(\cF)$. We use the symmetric properties of $f$ and $\cF$
to show that $(\bar{\bx} - \bx_0) \cdot \nabla F|_{\bx_0} > 0$ as well.
Recall that $\bx_0 = \bar{\bx}_0$. We have
$$ (\bar{\bx} - \bx_0) \cdot \nabla F|_{\bx_0}
 = \E_{\sigma \in \cG}[\sigma(\bx-\bx_0) \cdot \nabla F|_{\bx_0}] 
 = \E_{\sigma \in \cG}[(\bx-\bx_0) \cdot \sigma^{-1}(\nabla F|_{\bx_0})]
 = (\bx-\bx_0) \cdot \nabla F|_{\bx_0} > 0 $$
using Lemma~\ref{lemma:grad-sym}.  Hence, there would be a direction
$\bar{\bx} - \bx_0$ along which an improvement can be obtained.
But then, consider a small $\delta > 0$ such that
$\bx_1 = \bx_0 + \delta (\bar{\bx} - \bx_0) \in P(\cF)$ and also
$F(\bx_1) > F(\bx_0)$. The point $\bx_1$ is symmetric ($\bar{\bx}_1 = \bx_1$)
and hence it would contradict the assumption that $F(\bx_0) = \overline{OPT}$.
\end{proof}

\subsection{Submodular maximization over independent sets in a matroid}

Let us derive an optimal approximation result for the problem
$\max \{f(S): S \in \cI\}$ under the assumption that the instance is "element-transitive". 

\begin{definition}
For a group $\cG$ of permutations on $X$, the orbit of an element $i \in X$
is the set $\{ \sigma(i): \sigma \in \cG \}$.
$\cG$ is called element-transitive, if the orbit of any element is
the entire ground set $X$.
\end{definition}

In this case, we show that it is easy to achieve an optimal $(\frac12-o(1))$-approximation
for a matroid independence constraint.

\begin{theorem}
\label{thm:sym-submod-independent}
Let $\max \{f(S): S \in \cI\}$ be an instance symmetric with respect
to an element-transitive group of permutations $\cG$. Let
$\overline{OPT} = \max \{F(\bar{\bx}): \bx \in P(\cM)\}$
 where $\bar{\bx} = \E_{\sigma \in \cG}[\sigma(\bx)]$.
Then $\overline{OPT} \geq \frac12 OPT$.
\end{theorem}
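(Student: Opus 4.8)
The plan is to run the local-optimality argument of Section~\ref{section:submod-independent}, but at the symmetric optimum itself. By Lemma~\ref{lemma:sym-local-opt}, the point $\bx_0$ with $\bar\bx_0=\bx_0$ and $F(\bx_0)=\overline{OPT}$ lies in $P(\cM)$ and satisfies $(\bx-\bx_0)\cdot\nabla F|_{\bx_0}\le 0$ for every $\bx\in P(\cM)$. First I would observe that element-transitivity forces $\bx_0$ to be uniform: $\bx_0=\bar\bx_0=\E_{\sigma\in\cG}[\sigma(\bx_0)]$ is fixed by every $\tau\in\cG$ (since $\tau\circ\sigma$ is again uniform on $\cG$), so $(\bx_0)_i$ depends only on the $\cG$-orbit of $i$; as $\cG$ is element-transitive the only orbit is $X$, hence $\bx_0=p_0\b1$ for some $p_0\in[0,1]$.

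Now let $C\in\cI$ attain $f(C)=OPT$, so that $\b1_C\in P(\cM)$. Applying the local-optimality inequality with $\bx=\b1_C$ gives $\sum_{j\in C}\partdiff{F}{x_j}\big|_{\bx_0}\le p_0\sum_{i\in X}\partdiff{F}{x_i}\big|_{\bx_0}$, which, after splitting the right-hand sum over $C$ and $X\setminus C$, is equivalent to
\[ \sum_{j\in C}(1-p_0)\,\partdiff{F}{x_j}\Big|_{\bx_0}\ \le\ \sum_{i\notin C}p_0\,\partdiff{F}{x_i}\Big|_{\bx_0}. \]
On the other hand, the first-order inequalities implied by smooth submodularity of $F$ (cf.\ the proof of Lemma~\ref{lemma:fractional-search}) give, with all derivatives evaluated at $\bx_0$, $F(\bx_0\vee\b1_C)-F(\bx_0)\le\sum_{j\in C}(1-p_0)\partdiff{F}{x_j}$ and $F(\bx_0)-F(\bx_0\wedge\b1_C)\ge\sum_{i\notin C}p_0\,\partdiff{F}{x_i}$. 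Chaining these with the displayed inequality yields $F(\bx_0\vee\b1_C)-F(\bx_0)\le F(\bx_0)-F(\bx_0\wedge\b1_C)$, that is, $2\,\overline{OPT}=2F(\bx_0)\ge F(\bx_0\vee\b1_C)+F(\bx_0\wedge\b1_C)$.

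It remains to bound the two terms from below, and here I would use that $F$ is concave along every nonnegative direction $\bv\ge 0$ (because $\frac{d^2}{d\lambda^2}F(\bz+\lambda\bv)=\sum_{i\ne j}v_iv_j\mixdiff{F}{x_i}{x_j}\le 0$). Since $\bx_0$ is uniform, $\bx_0\wedge\b1_C=p_0\b1_C$ lies on the segment from the origin to $\b1_C$, so $F(\bx_0\wedge\b1_C)\ge(1-p_0)f(\emptyset)+p_0 f(C)\ge p_0 f(C)$; and $\bx_0\vee\b1_C=\b1_C+p_0\b1_{X\setminus C}$ lies on the segment from $\b1_C$ to $\b1$, so $F(\bx_0\vee\b1_C)\ge(1-p_0)f(C)+p_0 f(X)\ge(1-p_0)f(C)$, using $f\ge 0$ in both places. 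Substituting, $2\,\overline{OPT}\ge p_0 f(C)+(1-p_0)f(C)=f(C)=OPT$, which is the claim.

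The genuinely substantive step is the observation that testing local optimality of $\bx_0$ against $\b1_C$ produces \emph{precisely} the inequality needed to convert the first-order comparison of $F$ at $\bx_0$, $\bx_0\vee\b1_C$ and $\bx_0\wedge\b1_C$ into $2F(\bx_0)\ge F(\bx_0\vee\b1_C)+F(\bx_0\wedge\b1_C)$; everything else is routine (the two concavity estimates and the standard smooth-submodularity inequalities). The role of element-transitivity is to make $\bx_0$ uniform, so that $\bx_0\wedge\b1_C$ and $\bx_0\vee\b1_C$ sit on straight segments along which $F$ is concave; this is what lets us avoid the threshold-rounding arguments (and the attendant loss) a non-uniform local optimum would require, and hence why the factor improves to the optimal $\frac12$ under this symmetry assumption.
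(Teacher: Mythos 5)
Your proof is correct, and while it follows the same overall skeleton as the paper's (take the symmetric optimum $\bx_0$, show $2F(\bx_0)\ge F(\bx_0\vee\b1_C)+F(\bx_0\wedge\b1_C)$, then lower-bound the two terms using uniformity of $\bx_0$), the way you establish the central inequality is genuinely different. The paper routes through Lemma~\ref{lemma:local-opt} (with $t=1$) and Lemma~\ref{lemma:fractional-search}, whose proof decomposes $\bx_0$ into a convex combination of independent sets and invokes the matroid exchange property (Lemma~\ref{lemma:exchange}) to pair up elements of $C\setminus I$ with elements of $I\setminus C$. You instead exploit the fact that Lemma~\ref{lemma:sym-local-opt} already gives the \emph{full} first-order condition $(\bx-\bx_0)\cdot\nabla F|_{\bx_0}\le 0$ for all $\bx\in P(\cM)$, and simply test it against $\bx=\b1_C$; combined with the two standard smooth-submodularity estimates, this yields $2F(\bx_0)\ge F(\bx_0\vee\b1_C)+F(\bx_0\wedge\b1_C)$ with no exchange argument at all. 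This is cleaner in the symmetric setting, though it does not generalize to the genuine fractional local search of Section~\ref{section:submod-independent}, where only the elementary swap directions are certified and the exchange lemma is indispensable. For the final lower bounds, the paper uses the Lov\'asz-extension/threshold-set bound (Lemma~\ref{lemma:rnd-threshold}), whereas you use concavity of $F$ along nonnegative directions; since $\bx_0\wedge\b1_C$ and $\bx_0\vee\b1_C$ are two-level vectors, the threshold set takes only two values and the two estimates coincide exactly, so nothing is gained or lost there.
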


\begin{proof}
Let $OPT = f(C)$.
By Lemma~\ref{lemma:sym-local-opt}, $\overline{OPT} = F(\bx_0)$
where $\bx_0$ is a local optimum for the problem $\max \{F(\bx): \bx \in P(\cM)$.
This means it is also a local optimum in the sense of Lemma~\ref{lemma:local-opt},
with $t=1$.
By Lemma~\ref{lemma:fractional-search},
$$ 2 F(\bx_0) \geq F(\bx_0 \vee \b1_C) + F(\bx_0 \wedge \b1_C).$$
Also, $\bx_0 = \bar{\bx}_0$. As we are dealing with an element-transitive
group of symmetries, this means all the coordinates of $\bx_0$ are equal,
$\bx_0 = (\xi,\xi,\ldots,\xi)$. Therefore, $\bx_0 \vee \b1_C$ is equal
to $1$ on $C$ and $\xi$ outside of $C$. By Lemma~\ref{lemma:rnd-threshold} (in the Appendix),
$$ F(\bx_0 \vee \b1_C) \geq (1-\xi) f(C).$$
Similarly, $\bx_0 \wedge \b1_C$ is equal to $\xi$ on $C$ and $0$ outside of $C$.
By Lemma~\ref{lemma:rnd-threshold},
$$ F(\bx_0 \wedge \b1_C) \geq \xi f(C).$$
Combining the two bounds,
$$ 2 F(\bx_0) \geq F(\bx_0 \vee \b1_C) + F(\bx_0 \wedge \b1_C) 
 \geq (1-\xi) f(C) + \xi f(C) = f(C) = OPT.$$
\end{proof}

Since all symmetric solutions $\bx = (\xi,\xi,\ldots,\xi)$ form
a 1-parameter family, and $F(\xi,\xi,\ldots,\xi)$ is a concave function,
we can search for the best symmetric solution (within any desired accuracy)
by binary search. By standard techniques, we get the following.

\begin{corollary}
There is a $(\frac12 - o(1))$-approximation ("brute force" search
over symmetric solutions) for the problem $\max \{f(S): S \in \cI\}$
for instances symmetric under an element-transitive group of permutations.
\end{corollary}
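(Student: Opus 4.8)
\noindent\emph{Proof idea.}
The plan is to reduce the problem to a one–dimensional search over the symmetric points of the matroid polytope, invoke Theorem~\ref{thm:sym-submod-independent} for the quality of the best such point, and then round losslessly. Since $\cG$ is element-transitive, for any $\bx$ the action of a uniformly random $\sigma \in \cG$ sends $i$ to a uniformly random element of $X$, so $\bar{\bx} = \E_{\sigma \in \cG}[\sigma(\bx)]$ has all coordinates equal to $\mu(\bx) = \frac{1}{n}\sum_j x_j$. As $\bx$ ranges over the convex (hence connected) set $P(\cM)$, the linear functional $\mu$ takes all values in $[0,\xi^*]$ with $\xi^* = r_\cM(X)/n$ (the value $0$ at $\bx=\b1_\emptyset$ and $\xi^*$ at $\b1_B$ for a base $B$); moreover, for each such $\mu$ the point $(\mu,\ldots,\mu)$ is itself in $P(\cM)$, being a convex combination $\E_\sigma[\sigma(\bx)]$ of points of $P(\cM)$ for any $\bx$ with $\mu(\bx)=\mu$. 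Hence $\overline{OPT} = \max\{F(\bar{\bx}): \bx \in P(\cM)\} = \max_{0 \le \xi \le \xi^*} F(\xi,\ldots,\xi)$, and by Theorem~\ref{thm:sym-submod-independent} this quantity is at least $\frac12 OPT$. It therefore suffices to find $\xi \in [0,\xi^*]$ with $F(\xi,\ldots,\xi)$ close to this maximum and then apply pipage rounding.

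First I would observe that $g(\xi) := F(\xi,\ldots,\xi)$ is concave on $[0,\xi^*]$. Indeed, $F$ is multilinear, so $\secdiff{F}{x_i} = 0$, and $f$ submodular gives $\mixdiff{F}{x_i}{x_j} \le 0$; consequently $g''(\xi) = \sum_{i,j} \mixdiff{F}{x_i}{x_j} \le 0$. A concave function on an interval can be maximized to within any prescribed additive accuracy by ternary (golden-section) search, using only polynomially many evaluations of $g$ at points $\xi$ for which $(\xi,\ldots,\xi)\in P(\cM)$ is guaranteed by the above.

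The remaining points are routine: evaluating $F$ only approximately, and rounding. To estimate $g(\xi) = \E[f(\hat{\bx})]$ for $\bx = (\xi,\ldots,\xi)$, I would use random sampling of $\hat{\bx}$; if $f$ takes values in $[0,M]$ then polynomially many samples achieve additive accuracy $M/\mathrm{poly}(n)$, and since $OPT \ge M/n$ (Lemma~\ref{lemma:solution-value}) this is an $OPT/\mathrm{poly}(n)$ error, which perturbs the final ratio only by an $o(1)$ term. Propagating these errors through the ternary search produces a symmetric point $\bx = (\xi,\ldots,\xi) \in P(\cM)$ with $F(\bx) \ge \overline{OPT} - o(OPT) \ge (\tfrac12 - o(1))\, OPT$. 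Applying the pipage rounding procedure of Lemma~\ref{lemma:pipage}, which does not lose anything in the objective value, yields an independent set $S \in \cI$ with $f(S) \ge F(\bx) \ge (\tfrac12 - o(1))\, OPT$, which is the claimed guarantee.

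The main obstacle here is not conceptual but a matter of bookkeeping: one must verify that the combined effect of the sampling error, the granularity of the search, and the (lossless) pipage rounding degrades the $\frac12$ guarantee by only a vanishing additive amount, which hinges on the normalization $OPT \ge M/n$ to calibrate the sampling precision. Given Theorem~\ref{thm:sym-submod-independent}, the substantive observations --- that the symmetric optimum is parametrized by a single scalar $\xi$ and that $g(\xi)$ is concave, so that $\xi$ can be found by a simple search --- are essentially immediate.
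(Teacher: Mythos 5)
Your proposal is correct and follows exactly the route the paper intends: the paper's own ``proof'' is just the one-sentence remark preceding the corollary (symmetric solutions form a one-parameter family, $F(\xi,\ldots,\xi)$ is concave, so search plus ``standard techniques'' suffices), and you have simply filled in the details --- concavity via $g''=\sum_{i,j}\mixdiff{F}{x_i}{x_j}\le 0$, membership of $(\xi,\ldots,\xi)$ in $P(\cM)$ via symmetrization, sampling error calibrated against $OPT\ge M/n$, and lossless pipage rounding. No gaps.
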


The hard instances for submodular maximization subject to a matroid
independence constraint correspond to refinements of the
Max Cut instance for the graph $K_2$ (Section~\ref{section:hardness-applications}).
It is easy to see that such instances are element-transitive,
and it follows from Section~\ref{section:hardness-proof} that a $(\frac12+\epsilon)$-approximation
for such instances would require exponentially many value queries.
Therefore, our approximation for element-transitive instances is optimal.

\subsection{Submodular maximization over bases}

Let us come back to the problem of submodular maximization over the bases
of matroid. The property that $\overline{OPT}$ is a local optimum
with respect to the original problem $\max \{F(\bx): \bx \in P(\cF)\}$
is very useful in arguing about the value of $\overline{OPT}$.
We already have tools to deal with local optima from
Section~\ref{section:submod-bases}. Here we prove the following.

\begin{lemma}
\label{lemma:base-local-opt}
Let $B(\cM)$ be the matroid base polytope of $\cM$ and
$\bx_0 \in B(\cM)$ a local maximum for the submodular maximization problem
$\max \{F(\bx): \bx \in B(\cM)\}$, in the sense that $(\bx-\bx_0) \cdot \nabla F|_{\bx_0} \leq 0$
for any $\bx \in B(\cM)$. Assume in addition that $\bx_0 \in [s,t]^X$. Then
$$ F(\bx_0) \geq \frac12 (1-t+s) \cdot OPT.$$
\end{lemma}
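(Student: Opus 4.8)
The plan is to mimic the structure of the proof of Theorem~\ref{thm:submod-bases-approx}, but to exploit the stronger hypothesis here --- local optimality against the \emph{entire} polytope $B(\cM)$ rather than just single exchange steps --- which lets us avoid the delicate vector $\bc$ from Lemma~\ref{lemma:fractional-search2} and work directly with $\b1_C$. Let $C \in \cB$ be an optimal base, so that $OPT = f(C)$ and $\b1_C$ is a vertex of $B(\cM)$.

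First I would establish the \emph{two-point inequality} $2 F(\bx_0) \geq F(\bx_0 \vee \b1_C) + F(\bx_0 \wedge \b1_C)$. Using the smooth submodularity of the multilinear extension $F$ (i.e. $\mixdiff{F}{x_i}{x_j} \leq 0$, as in \cite{Vondrak08}), $F$ is concave along the segment from $\bx_0$ up to $\bx_0 \vee \b1_C$, so the first-order estimate at $\bx_0$ is an overestimate: $F(\bx_0 \vee \b1_C) - F(\bx_0) \leq \sum_{j \in C} (1 - (\bx_0)_j)\, \partdiff{F}{x_j}\big|_{\bx_0}$. Symmetrically, going down from $\bx_0$ to $\bx_0 \wedge \b1_C$ the gradient at the upper endpoint is an underestimate: $F(\bx_0) - F(\bx_0 \wedge \b1_C) \geq \sum_{i \notin C} (\bx_0)_i\, \partdiff{F}{x_i}\big|_{\bx_0}$. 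Subtracting, the right-hand side telescopes to exactly $(\b1_C - \bx_0) \cdot \nabla F|_{\bx_0}$, since $(\b1_C - \bx_0)_j = 1-(\bx_0)_j$ for $j \in C$ and $-(\bx_0)_i$ for $i \notin C$. By the local-optimality assumption applied to the vertex $\b1_C \in B(\cM)$, this quantity is $\leq 0$, giving the two-point inequality.

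Next I would lower-bound each of the two terms via random-threshold rounding, using the box constraint $\bx_0 \in [s,t]^X$. For $F(\bx_0 \wedge \b1_C)$: the coordinates of $\bx_0 \wedge \b1_C$ equal $(\bx_0)_i \in [s,t]$ on $C$ and $0$ off $C$, so by Lemma~\ref{lemma:rnd-threshold} a uniform threshold $\lambda \in [0,1]$ yields the set $C$ whenever $\lambda < s$ (probability $s$) and, since $f \geq 0$ on every other threshold set, $F(\bx_0 \wedge \b1_C) \geq s\, f(C)$. For $F(\bx_0 \vee \b1_C)$: the coordinates equal $1$ on $C$ and at most $t$ off $C$, so a uniform threshold yields a set that always contains $C$ and equals $C$ whenever $\lambda > t$ (probability $\geq 1-t$); again using $f \geq 0$, $F(\bx_0 \vee \b1_C) \geq (1-t)\, f(C)$ (one could equally apply Lemma~\ref{lemma:submod-split} to the partition $X = C \cup \bar{C}$, as in the proof of Theorem~\ref{thm:submod-bases-approx}). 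Plugging both bounds into the two-point inequality gives $2 F(\bx_0) \geq (1-t)f(C) + s\, f(C) = (1-t+s)\, OPT$, which is the claim.

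I do not expect a serious obstacle. The only point needing care is getting the direction of the smooth-submodularity inequalities right --- overestimate when moving up from the point of evaluation, underestimate when that point lies above --- and then checking that the two error terms combine into exactly $(\b1_C - \bx_0) \cdot \nabla F|_{\bx_0}$ with the sign that makes the local-optimality hypothesis directly applicable. The remainder is a routine use of the threshold-rounding lemmas already available in the Appendix.
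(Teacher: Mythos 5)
Your proof is correct, and its overall architecture is the same as the paper's: establish the two-point inequality $2F(\bx_0) \geq F(\bx_0 \vee \b1_C) + F(\bx_0 \wedge \b1_C)$ for an optimal base $C$, then lower-bound each term by threshold rounding using the box constraint $\bx_0 \in [s,t]^X$, exactly as you do. The one place you diverge is in how the two-point inequality is obtained: the paper observes that a local optimum against all of $B(\cM)$ is in particular a local optimum for the fractional local search with box parameter $1$, and then cites Lemma~\ref{lemma:fractional-search2} (with $\bc = \b1_C$), whereas you rederive the inequality from scratch --- the two smooth-submodularity estimates $F(\bx_0 \vee \b1_C) - F(\bx_0) \leq \sum_{j \in C}(1-(\bx_0)_j)\partdiff{F}{x_j}$ and $F(\bx_0) - F(\bx_0 \wedge \b1_C) \geq \sum_{i \notin C}(\bx_0)_i\partdiff{F}{x_i}$, whose difference is exactly $(\b1_C - \bx_0)\cdot\nabla F|_{\bx_0} \leq 0$ by the hypothesis applied to the vertex $\b1_C$. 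Your route is more self-contained and arguably cleaner here, since the stronger hypothesis (local optimality against the whole polytope, not just single exchanges) makes the exchange bijection $\pi_B$ and the auxiliary vector $\bc$ of Lemma~\ref{lemma:fractional-search2} unnecessary; the paper's route reuses existing machinery at the cost of a slightly indirect reduction. The threshold-rounding bounds $F(\bx_0 \vee \b1_C) \geq (1-t)f(C)$ and $F(\bx_0 \wedge \b1_C) \geq s\,f(C)$ match the paper's verbatim.
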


\begin{proof}
Let $OPT = \max \{ f(B): B \in \cB \} = f(C)$.
We assume that $\bx_0 \in B(\cM)$ is a local optimum with respect
to any direction $\bx-\bx_0$, $\bx \in B(\cM)$, so it is also a local optimum with respect
to the fractional local search in the sense of Lemma~\ref{lemma:fractional-search2},
with $t=1$. The lemma implies that
$$ 2 F(\bx_0) \geq F(\bx \vee \b1_C) + F(\bx \wedge \b1_C).$$
By assumption, the coordinates of $\bx \vee \b1_C$ are equal to $1$ on $C$
and at most $t$ outside of $C$. With probability $1-t$, a random threshold in $[0,1]$
falls between $t$ and $1$, and Lemma~\ref{lemma:rnd-threshold} (in the Appendix) implies that
$$ F(\bx \vee \b1_C) \geq (1-t) \cdot f(C).$$
Similarly, the coordinates of $\bx \wedge \b1_C$ are $0$ outside of $C$,
and at least $s$ on $C$. A random threshold falls between $0$ and $s$
with probability $s$, and Lemma~\ref{lemma:rnd-threshold} implies that
$$ F(\bx \wedge \b1_C) \geq s \cdot f(C).$$
Putting these inequalities together, we get
$$2 F(\bx_0) \geq F(\bx \vee \b1_C) + F(\bx \wedge \b1_C) \geq (1-t+s) \cdot f(C).$$
\end{proof}

\

\noindent
{\bf Totally symmetric instances.}
The application we have in mind here is a special case of submodular
maximization over the bases of a matroid, which we call {\em totally symmetric}.

\begin{definition}
\label{def:totally-symmetric}
We call an instance $\max \{f(S): S \in \cF \}$ totally symmetric with respect
to a group of permutations $\cG$, if both $f(S)$ and $\cF$ are invariant
under $\cG$ and moreover, there is a point $\bc \in P(\cF)$ such that
$\bc = \bar{\bx} = \E_{\sigma \in \cG}[\sigma(\bx)]$ for every $\bx \in P(\cF)$.
We call $\bc$ the center of the instance.
\end{definition}

Note that this is indeed stronger than just being invariant under $\cG$.
For example, an instance on a ground set $X = X_1 \cup X_2$ could be symmetric
with respect to any permutation of $X_1$ and any permutation of $X_2$.
For any $\bx \in P(\cF)$, the symmetric vector $\bar{\bx}$ is constant
on $X_1$ and constant on $X_2$. However, in a totally symmetric instance,
there should be a unique symmetric point.

\paragraph{Bases of partition matroids}
A canonical example of a totally symmetric instance is as follows.
Let $X = X_1 \cup X_2 \cup \ldots \cup X_m$ and let integers $k_1,\ldots,k_m$
be given. This defines a partition matroid $\cM = (X,\cB)$, whose bases are
$$ \cB = \{B: \forall j; |B \cap X_j| = k_j \}.$$
The associated matroid base polytope is
$$ B(\cM) = \{\bx \geq 0: \forall j; \sum_{i \in X_j} x_i = k_j \}.$$
Let $\cG$ be a group of permutations such that the orbit of each element
$i \in X_j$ is the entire set $X_j$.
This implies that for any $\bx \in B(\cM)$,
$\bar{\bx}$ is the same vector $\bc$, with coordinates $k_j / |X_j|$ on $X_j$.
%The center point can be found
%by taking any feasible solution and symmetrizing it w.r.t.~$\cG$.
If $f(S)$ is also invariant under $\cG$, we have a totally symmetric
instance $\max \{f(S): S \in \cB \}$. 

\paragraph{Example: welfare maximization}
To present a more concrete example, consider $X_j = \{ a_{j1}, \ldots, a_{jk} \}$
for each $j \in [m]$, a set of bases $\cB = \{B: \forall j; |B \cap X_j| = 1\}$,
and an objective function in the form $f(S) = \sum_{i=1}^{k} v(\{j: a_{ji} \in S\})$,
where $v:2^{[m]} \rightarrow \RR_+$ is a submodular function.
This is a totally symmetric instance, which captures the welfare maximization
problem for combinatorial auctions where each player has the same valuation function.
(Including element $a_{ji}$ in the solution corresponds to allocating item $j$ to player $i$;
see \cite{CCPV09} for more details.) We remark that here we consider possibly
nonmonotone submodular functions, which is not common for combinatorial auctions;
nevertheless the problem still makes sense.

\

We show that for such instances, the center point achieves an improved approximation.

\begin{theorem}
\label{thm:sym-solution}
Let $\max \{f(S): S \in \cB\}$ be a totally symmetric instance.
Let the fractional packing number of bases be $\nu$ and the fractional
packing number of dual bases $\nu^*$. Then the center point $\bc$
satisfies
$$ F(\bc) \geq \left(1 - \frac{1}{2\nu} - \frac{1}{2\nu^*} \right) OPT.$$
\end{theorem}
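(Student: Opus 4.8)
The plan is to assemble three ingredients already available: (i) the center $\bc$ lies in a box $[s,t]^X$ determined by the two packing numbers; (ii) $\bc$ is a first‑order local optimum of $\max\{F(\bx):\bx\in B(\cM)\}$; (iii) Lemma~\ref{lemma:base-local-opt} converts these two facts into the stated bound.

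First I would locate the box containing $\bc$. By Definition~\ref{def:fractional-base} there are coefficients $\alpha_B\ge 0$ with $\sum_{B\in\cB}\alpha_B=\nu$ and $\sum_{B\ni j}\alpha_B\le 1$ for every $j$; then $\bx=\frac{1}{\nu}\sum_B\alpha_B\b1_B$ is a convex combination of bases, hence $\bx\in B(\cM)$, and $x_j=\frac1\nu\sum_{B\ni j}\alpha_B\le\frac1\nu$ for all $j$. Since the instance is totally symmetric, $\bar\bx=\bc$, and because symmetrization replaces each coordinate by an average of coordinates in its orbit, $c_j=\bar x_j\le\frac1\nu$ for all $j$. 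For the lower bound I pass to the dual matroid $\cM^*$, whose bases are exactly the complements of the bases of $\cM$ and whose base polytope is $B(\cM^*)=\{\b1-\bz:\bz\in B(\cM)\}$. A fractional packing of $\nu^*$ dual bases gives, as above, a point $\by\in B(\cM^*)$ with $y_j\le\frac1{\nu^*}$ for all $j$; then $\b1-\by\in B(\cM)=P(\cF)$ has all coordinates at least $1-\frac1{\nu^*}$, and its symmetrization is again the center $\bc$ (by total symmetry), so $c_j\ge 1-\frac1{\nu^*}$. Hence $\bc\in[s,t]^X$ with $s=1-\frac1{\nu^*}$ and $t=\frac1\nu$.

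Next I would check that $\bc$ is a local maximum of $\max\{F(\bx):\bx\in B(\cM)\}$ in the sense that $(\bx-\bc)\cdot\nabla F|_{\bc}\le 0$ for all $\bx\in B(\cM)$. This is immediate from Lemma~\ref{lemma:sym-local-opt}: for a totally symmetric instance the only symmetric point of $P(\cF)=B(\cM)$ is $\bc$ (Definition~\ref{def:totally-symmetric}), so $\overline{OPT}=\max\{F(\bar\bx):\bx\in P(\cF)\}=F(\bc)$ is attained at the symmetric point $\bx_0=\bc$, and Lemma~\ref{lemma:sym-local-opt} states precisely that such $\bx_0$ satisfies the first‑order local optimality condition.

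Finally I would apply Lemma~\ref{lemma:base-local-opt} with $\bx_0=\bc$, $s=1-\frac1{\nu^*}$, $t=\frac1\nu$, which gives
$$F(\bc)\ \ge\ \tfrac12(1-t+s)\,OPT\ =\ \tfrac12\Bigl(2-\tfrac1\nu-\tfrac1{\nu^*}\Bigr)OPT\ =\ \Bigl(1-\tfrac{1}{2\nu}-\tfrac{1}{2\nu^*}\Bigr)OPT,$$
as claimed. The only step requiring genuine care is the dual side of part (i): one must correctly identify $B(\cM^*)$ with $\{\b1-\bz:\bz\in B(\cM)\}$ and verify that the symmetrization of $\b1-\by$ is still the center $\bc$ — which holds because $\b1-\by$ lies in $P(\cF)$ and the instance is totally symmetric. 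Everything else is a direct combination of results proved earlier, so the write‑up can stay short.
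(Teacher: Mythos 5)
Your proposal is correct and follows essentially the same route as the paper's proof: identify $\bc$ as the symmetric optimum and hence a local optimum via Lemma~\ref{lemma:sym-local-opt}, bound its coordinates in $[1-\frac{1}{\nu^*},\frac{1}{\nu}]$ using the two packing numbers, and conclude with Lemma~\ref{lemma:base-local-opt}. Your write-up merely spells out the coordinate bounds (via explicit fractional packings and symmetrization) in more detail than the paper does.
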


Recall that in the general case, we get a $\frac12 (1-1/\nu-o(1))$-approximation
(Theorem~\ref{thm:submod-bases-approx}). By passing to the dual matroid,
we can also obtain a $\frac12 (1-1/\nu^*-o(1))$-approximation,
so in general, we know how to achieve a $\frac12 (1-1/\max\{\nu,\nu^*\}-o(1))$-approximation.
For totally symmetric instances where $\nu=\nu^*$, we improve this
to the optimal factor of $1-1/\nu$.

\begin{proof}
Since there is a unique center $\bc = \bar{\bx}$ for any $\bx \in B(\cM)$,
this means this is also the symmetric optimum $F(\bc) = \max \{ F(\bar{\bx}): \bx \in B(\cM)\}$.
Due to Lemma~\ref{lemma:sym-local-opt}, $\bc$ is a local optimum
for the problem $\max \{F(\bx): \bx \in B(\cM)\}$.

Because the fractional packing number of bases is $\nu$, we have $c_i \leq 1/\nu$
for all $i$. Similarly, because the fractional packing number of dual bases
(complements of bases) is $\nu^*$,
we have $1-c_i \leq 1/\nu^*$. This means that $\bc \in [1-1/\nu^*,1/\nu]$.
Lemma~\ref{lemma:base-local-opt} implies that
$$ 2 F(\bc) \geq \left( 1-\frac{1}{\nu} + 1-\frac{1}{\nu^*} \right) OPT.$$
\end{proof}

\begin{corollary}
Let $\max \{f(S): S \in \cB\}$ be an instance
on a partition matroid where every base takes at least
an $\alpha$-fraction of each part, at most a $(1-\alpha)$-fraction of each part,
and the submodular function $f(S)$ is invariant under a group $\cG$
where the orbit of each $i \in X_j$ is $X_j$.
Then, the center point $\bc = \E_{\sigma \in \cG}[\sigma(\b1_B)]$
(equal for any $B \in \cB$) satisfies $F(\bc) \geq \alpha \cdot OPT$.
\end{corollary}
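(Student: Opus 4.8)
The plan is to obtain this corollary as a direct specialization of the results already proved, so essentially no new work is required beyond unwinding the definitions. First I would check that the instance is totally symmetric in the sense of Definition~\ref{def:totally-symmetric}, with the stated center. Since the orbit of every $i \in X_j$ under $\cG$ is all of $X_j$, each $\sigma \in \cG$ preserves every part $X_j$ setwise; hence $\cB$ is $\cG$-invariant, and for any $\bx \in B(\cM)$ the symmetrization $\bar\bx$ is constant on each $X_j$ with value $\frac{1}{|X_j|}\sum_{i \in X_j} x_i = \frac{k_j}{|X_j|}$, a number independent of $\bx$. Thus there is a unique symmetric point $\bc \in B(\cM)$, which is exactly the center, with coordinate $k_j/|X_j|$ on $X_j$. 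The hypothesis that every base occupies between an $\alpha$-fraction and a $(1-\alpha)$-fraction of each part says precisely $\alpha \le k_j/|X_j| \le 1-\alpha$, so $\bc \in [\alpha, 1-\alpha]^X$.

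Next I would apply Lemma~\ref{lemma:sym-local-opt} with $\cF = \cB$: since $\bc$ is the only symmetric point of $B(\cM)$ it attains $\overline{OPT} = \max\{F(\bar\bx): \bx \in B(\cM)\} = F(\bc)$, and the lemma yields that $\bc$ is a local maximum of $\max\{F(\bx): \bx \in B(\cM)\}$, i.e.\ $(\bx - \bc)\cdot\nabla F|_{\bc} \le 0$ for all $\bx \in B(\cM)$. Then I would invoke Lemma~\ref{lemma:base-local-opt} with $\bx_0 = \bc$, $s = \alpha$, $t = 1-\alpha$ (using $\bc \in [\alpha,1-\alpha]^X$ from the first step), which gives
$$ F(\bc) \ge \tfrac12\,(1 - t + s)\,OPT = \tfrac12\bigl(1 - (1-\alpha) + \alpha\bigr)\,OPT = \alpha \cdot OPT, $$
as claimed.

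An equivalent route, suggested by the placement right after Theorem~\ref{thm:sym-solution}, is to observe that $B_u(\cM)\neq\emptyset$ if and only if $u \ge \max_j k_j/|X_j|$, so the fractional base packing number is $\nu = 1/\max_j (k_j/|X_j|) \ge 1/(1-\alpha)$; and since the dual of a partition matroid is the partition matroid on the same parts with $k_j$ replaced by $|X_j|-k_j$, the dual fractional base packing number is $\nu^* = 1/(1 - \min_j k_j/|X_j|) \ge 1/(1-\alpha)$. Plugging $1/\nu \le 1-\alpha$ and $1/\nu^* \le 1-\alpha$ into Theorem~\ref{thm:sym-solution} gives $F(\bc) \ge (1 - \frac{1}{2\nu} - \frac{1}{2\nu^*})\,OPT \ge \alpha\cdot OPT$.

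There is no genuine obstacle here; the corollary is a packaging of Theorem~\ref{thm:sym-solution} (respectively of Lemmas~\ref{lemma:sym-local-opt} and~\ref{lemma:base-local-opt}) for partition matroids. The only points to verify are the bookkeeping facts stated above: that every $\sigma \in \cG$ preserves each part (so that $\bc$ is a well-defined unique center with coordinates $k_j/|X_j|$), that the base-size hypotheses translate to $\bc \in [\alpha,1-\alpha]^X$, and — for the second route — that the dual matroid is again a partition matroid with complementary part sizes. All of these are routine.
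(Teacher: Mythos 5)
Your proposal is correct and takes essentially the same approach as the paper, which likewise identifies the center as $k_j/|X_j|$ on each part, observes that $\nu,\nu^* \geq 1/(1-\alpha)$, and applies Theorem~\ref{thm:sym-solution}. Your first route is not genuinely different either --- it simply inlines the proof of that theorem via Lemmas~\ref{lemma:sym-local-opt} and~\ref{lemma:base-local-opt} with $s=\alpha$, $t=1-\alpha$.
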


\begin{proof}
If the orbit of any element $i \in X_j$ is the entire set $X_j$,
it also means that $\sigma(i)$ for a random $\sigma \in \cG$
is uniformly distributed over $X_j$ (by the transitive property of $\cG$).
Therefore, symmetrizing any fractional vector $\bx \in B(\cM)$ gives
the same vector $\bar{\bx} = \bc$, where $c_i = k_j / |X_j|$ for $i \in X_j$.
Also, our assumptions mean that the fractional packing number of bases
is $1/(1-\alpha)$, and the fractional packing number of dual bases is
also $1/(1-\alpha)$.
Due to Lemma~\ref{thm:sym-solution}, the center $\bc$ satisfies
$F(\bc) \geq \alpha \cdot OPT$.
\end{proof}

The hard instances for submodular maximization over matroid bases
that we describe in Section~\ref{section:hardness-applications} are exactly of this form
(see the last paragraph of Section~\ref{section:hardness-applications}, with $\alpha=1/k$).
There is a unique symmetric solution,
$x = (\alpha,\alpha,\ldots,\alpha, 1-\alpha,1-\alpha,\ldots,1-\alpha)$.
The fractional base packing number for these matroids is
$\nu = 1/(1-\alpha)$ and Theorem~\ref{thm:general-hardness} implies that
any $(\alpha+\epsilon) = (1-1/\nu+\epsilon)$-approximation for such matroids would
require exponentially many value queries. Therefore, our approximation
in this special case is optimal.

\section*{Acknowledgment}
The author would like to thank Jon Lee and Maxim Sviridenko for helpful discussions.

\appendix

\section{Submodular functions and their extensions}

In this appendix, we present a few basic facts concerning
submodular functions $f(S)$ and their continuous extensions.
By $f_A(S)$, we denote the marginal value of $S$ w.r.t. $A$,
$f_A(S) = f(A \cup S) - f(A)$. We also use $f_A(i) = f(A+i) - f(A)$.
The notation $A+i$ is shorthand for $A \cup \{i\}$.
Similarly, we write $A-i$ to denote $A \setminus \{i\}$.

\begin{definition}
\label{def:multilinear}
The multilinear extension of a function $f:2^X \rightarrow \RR$
is a function $F:[0,1]^X \rightarrow \RR$ where
$$ F(\bx) = \sum_{S \subseteq X} f(S) \prod_{i \in S} x_i \prod_{j \notin S} (1-x_j).$$
\end{definition}

\begin{definition}
\label{def:Lovasz}
The Lov\'asz extension of a function $f:2^X \rightarrow \RR$ is
a function $\tilde{F}:[0,1]^X \rightarrow \RR$ such that
$$ \tilde{F}(\bx) = \sum_{i=0}^{n} (x_{\pi(i)} - x_{\pi(i+1)})
 f(\{\pi(j): 1 \leq j \leq i\}) $$
where $\pi:[n] \rightarrow X$ is a bijection such that
$x_{\pi(1)} \geq x_{\pi(2)} \geq \ldots \geq x_{\pi(n)}$;
we interpret $x_{\pi(0)}$, $x_{\pi(n+1)}$ as $x_{\pi(0)} = 1, x_{\pi(n+1)} = 0$.
\end{definition}

A useful way to view the multilinear extension is that we sample a random
set $R(\bx)$, where each element $i$ appears independently with probability $x_i$,
and we take $F(\bx) = \E[f(R(\bx))]$. The Lov\'asz extension can be viewed
similarly, by sampling a random set in a correlated fashion.

\begin{definition}
\label{def:threshold}
For a vector $\bx \in [0,1]^X$, we define the ``random threshold set''
$T(\bx)$ by taking a uniformly random $\lambda \in [0,1]$, and setting
$$ T(\bx) = \{i \in X: x_i > \lambda \}.$$
\end{definition}

Assuming that $x_1 \geq x_2 \geq \ldots \geq x_n$, $x_0=1$, $x_n=0$,
it is easy to see that the Lov\'asz extension of $f(S)$ is equal to
$$ \tilde{F}(\bx) = \sum_{i=0}^{n} (x_i-x_{i+1}) f([i]) = \E[f(T(\bx))].$$
It is known that the Lov\'asz extension of a submodular function
is always convex \cite{L83}, which is not true for the multilinear extension.
We prove that the Lov\'asz extension is always upper-bounded by the
multilinear extension; this lemma appears quite basic but it has not
been published to our knowledge.

\begin{lemma}
\label{lemma:rnd-threshold}
Let $F(\bx)$ denote the multilinear extension and let
$\tilde{F}(\bx)$ denote the Lov\'asz extension of a submodular function
$f:2^X \rightarrow \RR$.
Then
$$ F(\bx) \geq \tilde{F}(\bx).$$
\end{lemma}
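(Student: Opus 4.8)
The plan is to deduce the inequality from two standard facts about the Lov\'asz extension $\tilde{F}$ of a submodular function $f$: that $\tilde{F}$ \emph{agrees with $f$ on $\{0,1\}$-vectors}, i.e. $\tilde{F}(\b1_S) = f(S)$ for every $S \subseteq X$, and that $\tilde{F}$ is \emph{convex} on $[0,1]^X$ (the classical result of \cite{L83} already quoted above). Given these, the inequality $F(\bx) \ge \tilde F(\bx)$ is just Jensen's inequality applied to the random set underlying the multilinear extension.

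First I would record that $\tilde{F}(\b1_S) = f(S)$. This is immediate from the threshold description of $\tilde F$: for $\bx = \b1_S$ and $\lambda$ uniform in $(0,1)$, the threshold set $T(\b1_S) = \{i : (\b1_S)_i > \lambda\}$ equals $S$ almost surely, so $\tilde{F}(\b1_S) = \E[f(T(\b1_S))] = f(S)$. (Equivalently, one can just read this off Definition~\ref{def:Lovasz} directly, since all the telescoping differences vanish except the one corresponding to the set $S$.) Second, I would note that the multilinear extension is an average of $f$ over an explicit distribution of sets whose mean is $\bx$: writing $R(\bx)$ for the random set in which each $i$ is included independently with probability $x_i$, we have $F(\bx) = \E[f(R(\bx))]$, and $\E[\b1_{R(\bx)}] = \bx$ because $\Pr[i \in R(\bx)] = x_i$ for each coordinate $i$.

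Combining the two ingredients: since $f(R(\bx)) = \tilde{F}(\b1_{R(\bx)})$ pointwise and $\tilde{F}$ is convex, Jensen's inequality gives
$$ F(\bx) = \E[f(R(\bx))] = \E\big[\tilde{F}(\b1_{R(\bx)})\big] \ \geq\ \tilde{F}\big(\E[\b1_{R(\bx)}]\big) = \tilde{F}(\bx), $$
which is exactly the claim.

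I do not expect a genuine obstacle here; the proof is short, and the only points to handle carefully are that $R(\bx)$ really has mean $\bx$ (which uses that the coordinates are rounded \emph{independently}) and that the two stated properties of $\tilde F$ are invoked correctly. If one preferred a self-contained argument not quoting convexity of the Lov\'asz extension, an alternative would be to couple the single uniform threshold $\lambda$ defining $T(\bx)$ with independent thresholds $\lambda_i$ defining $R(\bx)$ and pass from $T(\bx)$ to $R(\bx)$ by a sequence of exchange steps, bounding each step via submodularity; but the Jensen argument above is cleaner given \cite{L83}.
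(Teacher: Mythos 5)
Your proof is correct, but it takes a genuinely different route from the paper's. You derive the inequality as an instance of Jensen's inequality: since $\tilde{F}$ is convex (the classical fact from \cite{L83}, which the paper states but does not use in its own proof) and agrees with $f$ on integral points, and since the independent-rounding set $R(\bx)$ satisfies $\E[\b1_{R(\bx)}] = \bx$, you get $F(\bx) = \E[\tilde{F}(\b1_{R(\bx)})] \geq \tilde{F}(\bx)$ immediately. The paper instead gives a self-contained elementary argument: ordering the elements so that $x_1 \geq \cdots \geq x_n$, it telescopes $\E[f(R(\bx))] = f(\emptyset) + \sum_{k} \E[f_{R(\bx)\cap[k-1]}(R(\bx)\cap\{k\})]$, lower-bounds each marginal term by $f_{[k-1]}(R(\bx)\cap\{k\})$ via submodularity, and then an Abel summation recovers exactly the Lov\'asz extension. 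Your argument is shorter and conceptually cleaner, at the cost of importing convexity of $\tilde{F}$ as a black box; the paper's argument is longer but uses only the definition of submodularity, and in fact the roles are symmetric in the sense that the telescoping computation is essentially the standard proof that $\tilde{F}$ is convex anyway. Your sketch of the alternative "exchange/coupling" argument at the end is in spirit what the paper actually does. Both your key checks (that $T(\b1_S) = S$ almost surely, and that independence of the roundings gives $\E[\b1_{R(\bx)}] = \bx$) are handled correctly.
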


\begin{proof}
Let $R(\bx)$ be a random set where elements are sampled independently with probabilities
$x_i$, and let $T(\bx)$ be the random threshold set.
I.e., we want to prove $\E[f(R(\bx))] \geq \E[f(T(\bx))]$.
We can assume WLOG that the elements are ordered so that
 $x_1 \geq x_2 \geq \ldots \geq x_n$.
We also let $x_0 = 1$ and $x_{n+1} = 0$. Then,
\begin{eqnarray*}
\E[f(R(\bx))] & = & f(\emptyset) + \sum_{k=1}^{n} \E[f(R(\bx) \cap [k]) - f(R(\bx) \cap [k-1])] \\
  & = & f(\emptyset) + \sum_{k=1}^{n} \E[f_{R(\bx) \cap [k-1]}(R(\bx) \cap \{k\})] 
\end{eqnarray*}
and by submodularity,
\begin{eqnarray*}
\E[f(R(\bx))] & \geq & f(\emptyset) + \sum_{k=1}^{n} \E[f_{[k-1]}(R(\bx) \cap \{k\})] \\
 & = & f(\emptyset) + \sum_{k=1}^{n} x_k f_{[k-1]}(k) \\
 & = & f(\emptyset) + \sum_{k=1}^{n} x_k (f([k]) - f([k-1])) \\
 & = & \sum_{k=0}^{n} (x_k - x_{k+1}) f([k])  \\
 & = & \E[f(T(\bx))].
\end{eqnarray*}
\end{proof}

A refinement of this lemma says that we can also consider a partition
of the ground set and apply an independent threshold set on each part.
This gives a certain hybrid between the multilinear and Lov\'asz extensions.

\begin{lemma}
\label{lemma:submod-split}
For any partition $X = X_1 \cup X_2$,
$$ F(\bx) \geq \E[f((T_1(\bx) \cap X_1) \cup (T_2(\bx) \cap X_2))] $$
where $T_1(\bx)$ and $T_2(\bx)$ are two independently random threshold sets for $\bx$.
\end{lemma}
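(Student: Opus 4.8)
The plan is to prove the inequality by a two-step hybrid argument, peeling off one part of the partition at a time and invoking Lemma~\ref{lemma:rnd-threshold} at each step. Recall that $F(\bx) = \E[f(R(\bx))]$, where $R(\bx)$ is the random set containing each element $i$ independently with probability $x_i$. Write $R(\bx) = R_1 \cup R_2$ with $R_j = R(\bx) \cap X_j$; since the coordinates are sampled independently, $R_1$ and $R_2$ are independent. Also observe that $T_1(\bx) \cap X_1 = \{ i \in X_1 : x_i > \lambda_1 \}$ (with $\lambda_1$ uniform) is exactly the random threshold set of the restricted vector $\bx|_{X_1}$, and similarly $T_2(\bx) \cap X_2$ is the random threshold set of $\bx|_{X_2}$, and these are independent.

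\emph{Step 1 (replace $R_1$ by a threshold set on $X_1$).} Fix a set $B \subseteq X_2$ and define $g_B : 2^{X_1} \rightarrow \RR$ by $g_B(S) = f(S \cup B)$. Since $(S \cup B) \cap (T \cup B) = (S \cap T) \cup B$ and $(S \cup B) \cup (T \cup B) = (S \cup T) \cup B$, submodularity of $f$ implies $g_B$ is submodular on the ground set $X_1$. Applying Lemma~\ref{lemma:rnd-threshold} to $g_B$ with the vector $\bx|_{X_1}$ gives $\E[g_B(R_1)] \geq \E[g_B(T_1(\bx) \cap X_1)]$, i.e.
$$ \E_{R_1}[f(R_1 \cup B)] \geq \E_{T_1}[f((T_1(\bx) \cap X_1) \cup B)]. $$
Taking the expectation of both sides over $B = R_2$, which is independent of both $R_1$ and $T_1$, yields
$$ F(\bx) = \E[f(R_1 \cup R_2)] \geq \E[f((T_1(\bx) \cap X_1) \cup R_2)], $$
where on the right-hand side $T_1(\bx)$ and $R_2$ are independent.

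\emph{Step 2 (replace $R_2$ by a threshold set on $X_2$).} Symmetrically, fix $A \subseteq X_1$ and define $h_A : 2^{X_2} \rightarrow \RR$ by $h_A(S) = f(A \cup S)$, which is submodular on $X_2$ for the same reason. Lemma~\ref{lemma:rnd-threshold} applied to $h_A$ with $\bx|_{X_2}$ gives $\E_{R_2}[f(A \cup R_2)] \geq \E_{T_2}[f(A \cup (T_2(\bx) \cap X_2))]$. Taking the expectation over $A = T_1(\bx) \cap X_1$ and using that $T_1(\bx)$ and $T_2(\bx)$ are independent,
$$ \E[f((T_1(\bx) \cap X_1) \cup R_2)] \geq \E[f((T_1(\bx) \cap X_1) \cup (T_2(\bx) \cap X_2))]. $$
Chaining this with the inequality from Step 1 proves the lemma.

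The argument is essentially routine; the only points needing care are the bookkeeping of independence (which source of randomness is frozen and which is averaged at each step) and the observation that Lemma~\ref{lemma:rnd-threshold}, though stated for a function on a generic ground set, applies verbatim to the contracted functions $g_B$ and $h_A$ on $X_1$ and $X_2$. I expect no real obstacle here: a one-shot proof mimicking the telescoping computation in the proof of Lemma~\ref{lemma:rnd-threshold}, ordering the elements of $X_1$ and $X_2$ separately, would also work but is more cumbersome than the conditioning argument above.
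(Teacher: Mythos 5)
Your proof is correct and follows essentially the same route as the paper's: a two-step hybrid argument that conditions on one side of the partition, applies Lemma~\ref{lemma:rnd-threshold} to the induced submodular function on the other side, and then repeats symmetrically. The only cosmetic difference is that the paper works with the marginal function $f_{R_2}(S \cap X_1)$ rather than your $g_B(S) = f(S \cup B)$, which differ by an additive constant and are interchangeable here.
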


Intuitively, sampling from $Y$ independently with probability $p$ is not worse
than taking all of $Y$ with probability $p$ or none of it with probability $1-p$.
This follows from statements proved in \cite{FMV07}, but we give a proof here
for completeness.

\begin{proof}
$F(\bx) = \E[f(R(\bx))]$ where $R(\bx)$ is sampled independently with probabilities $x_i$.
Let's condition on $R(\bx) \cap X_2 = R_2$. This restricts the remaining randomness
to $X_1$, and we get $f(R(\bx)) = f(R_2) + g(R(\bx))$, where $g(S) = f_{R_2}(S \cap X_1)$
is again submodular. By Lemma~\ref{lemma:rnd-threshold},
$\E[g(R(\bx))] \geq \E[g(T_1(\bx))]$ where $T_1(\bx)$ is a random threshold set.
Hence,
$$ \E[f(R(\bx)) \mid R(\bx) \cap X_2 = R_2] = f(R_2) + \E[g(R(\bx))] $$
$$ \geq f(R_2) + \E[g(T_1(\bx))] = \E[f(R_2 \cup (T_1(\bx) \cap X_1))].$$
By randomizing $R_2 = R(\bx) \cap X_2$, we get
$$ \E[f(R(\bx))] \geq \E[f((R(\bx) \cap X_2) \cup (T_1(\bx) \cap X_1))].$$
Repeating the same process, conditioning on $T_1(\bx)$ and applying
Lemma~\ref{lemma:rnd-threshold} to $R(\bx) \cap X_2$, we get
$$ \E[f(R(\bx))] \geq \E[f((T_2(\bx) \cap X_2) \cup (T_1(\bx) \cap X_1))].$$
\end{proof}

\section{Pipage rounding for nonmonotone submodular functions}
\label{app:pipage}

The pipage rounding technique \cite{AS04,CCPV07,CCPV09}
starts with a point in the base
polytope $\by \in B(\cM)$ and produces an integral solution $S \in \cI$
(in fact, a base) of expected value $\E[f(S)] \geq F(\by)$.
We recall the procedure here, in its randomized form \cite{CCPV09}.

\vspace{-10pt}
\begin{tabbing}
\ \ \ \= \ \ \= \ \ \= \ \ \= \ \ \= \ \ \  \\
{\em Subroutine} {\bf HitConstraint}($\by$, $i$, $j$): \\
\> Denote ${\cal A} = \{A \subseteq X: i \in A, j \notin A \}$; \\
\> Find $\delta = \min_{A \in {\cal A}} (r_\cM(A)-y(A))$ \\
\> \ \ \ \ \ and an optimal $A \in {\cal A}$; \\
\> If $y_j < \delta$ then ~$\{ \delta \leftarrow y_j,
  \ A \leftarrow \{j\} \}$;\\
\> $y_i \leftarrow y_i + \delta, \ y_j \leftarrow y_j - \delta$; \\
\> Return $(\by,A)$. \\
\> \\
{\em Algorithm} {\bf PipageRound}($\cM,\by$): \\
\> While ($\by$ is not integral) do \\
\> \> $T \leftarrow X$;  \\
\> \> While ($T$ contains fractional variables) do \\
\> \> \> Pick $i,j \in T$ fractional; \\
\> \> \> $(\by^+, A^+) \leftarrow {\bf HitConstraint}(\by,i,j)$; \\
\> \> \> $(\by^-, A^-) \leftarrow {\bf HitConstraint}(\by,j,i)$; \\
\> \> \> $p \leftarrow ||\by^+ - \by|| / ||\by^+ - \by^-||$; \\
\> \> \> With probability $p$, ~$\{ \by \leftarrow \by^-$,  $T \leftarrow T \cap A^- \}$; \\
\> \> \> ~~~~~~~~~~~~~~~~~~~Else ~~$\{ \by \leftarrow \by^+$, $T \leftarrow T \cap A^+ \}$;\\
\> \> EndWhile \\
\> EndWhile \\
\> Output $\by$.
\end{tabbing}

The application
in \cite{CCPV07} was to monotone submodular functions, but as the authors
mention, monotonicity is not used anywhere in the analysis. The technique
as described in \cite{CCPV09} yields the following.

\begin{lemma}
\label{lemma:base-pipage}
The pipage rounding technique,
given a membership oracle for a matroid $\cM = (X,\cI)$, a value oracle
for a submodular function $f:2^X \rightarrow \RR_+$, and $\by$ in the base
polytope $B(\cM)$, returns a random base $B$ of value $\E[f(B)] \geq F(\by)$.
\end{lemma}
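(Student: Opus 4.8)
\emph{Proof plan.} The plan is to view \textbf{PipageRound} as a random walk on points of the base polytope $B(\cM)$ and to show that the multilinear extension $F$, evaluated along this walk, is a submartingale; its expectation then only increases from the deterministic start $\by$ to the integral endpoint, which will be the incidence vector of a base. Let $\by=\by_0,\by_1,\by_2,\dots$ be the (random) sequence of values taken by the variable $\by$ in the algorithm, where $\by_{k+1}$ results from one pass through the body of the inner loop: fractional coordinates $i,j$ of the current tight set $T$ are picked, the points $\by^+=\by_k+\delta^+(\be_i-\be_j)$ and $\by^-=\by_k-\delta^-(\be_i-\be_j)$ are produced by the two calls to \textbf{HitConstraint} (where $\delta^+,\delta^-\ge 0$ are the largest steps that keep us in $B(\cM)$), and then $\by_{k+1}=\by^-$ with probability $p=\|\by^+-\by_k\|/\|\by^+-\by^-\|$ and $\by_{k+1}=\by^+$ otherwise. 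First I would record the routine structural facts: \textbf{HitConstraint} stops exactly when some rank inequality $y(A)\le r_\cM(A)$ or some bound $y_j\ge 0$ first becomes tight, while the move direction $\be_i-\be_j$ leaves $y(X)$ unchanged, so $\by^{\pm}\in B(\cM)$ and hence every $\by_k\in B(\cM)$; and since $\by^+,\by_k,\by^-$ are collinear with $\by_k$ on the segment joining the endpoints, $p\in[0,1]$ and $\by_k=(1-p)\by^++p\,\by^-$.

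The crux is the one-step inequality $\E[F(\by_{k+1})\mid\by_k]\ge F(\by_k)$. Restricted to the line $\lambda\mapsto\by_k+\lambda(\be_i-\be_j)$, the function $F$ is convex: $F$ is affine in each coordinate, so $\secdiff{F}{x_i}=\secdiff{F}{x_j}=0$, and the second derivative in $\lambda$ equals $-2\,\mixdiff{F}{x_i}{x_j}\ge 0$ by submodularity of $f$ (``smooth submodularity'' of $F$; see \cite{CCPV07,Vondrak08}). As $\by_k=(1-p)\by^++p\,\by^-$ with $\by^{\pm}$ on this line, convexity yields $F(\by_k)\le(1-p)F(\by^+)+p\,F(\by^-)=\E[F(\by_{k+1})\mid\by_k]$, which is exactly the claim. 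The point worth stressing is that the rounding probability $p=\|\by^+-\by_k\|/\|\by^+-\by^-\|$ is chosen precisely so that $\by_k$ is the conditional mean of $\by_{k+1}$; this is what lets coordinate-pair convexity be upgraded to the submartingale property, and it is really the only nontrivial step. (Nonnegativity of $f$ plays no role here; it is carried along only to match the conventions of the rest of the paper.)

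Next I would dispatch feasibility and termination with a brief appeal to the known analysis of pipage rounding \cite{CCPV07,CCPV09}: the inner loop keeps $T$ a tight set of $\by$, and each pass replaces $T$ by $T\cap A^{\pm}$, which is a proper subset ($j\in T\setminus A^+$, $i\in T\setminus A^-$) and is again tight (an intersection of tight sets is tight, by submodularity of $r_\cM$ and modularity of $y$), so the inner loop stops within $|X|$ passes, and the whole procedure stops after polynomially many passes with $\by$ integral. An integral point of $B(\cM)$ is a vector $\b1_B$ with $|B\cap A|\le r_\cM(A)$ for all $A$ and $|B|=r_\cM(X)$, i.e.\ $B$ is an independent set of full rank, hence a base.

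Finally I would assemble the pieces. Padding the walk by $\by_{k+1}=\by_k$ after it has terminated, I obtain a submartingale $F(\by_0),F(\by_1),\dots,F(\by_N)$ of fixed polynomial length $N$, so $\E[F(\by_N)]\ge F(\by_0)=F(\by)$ because $\by_0=\by$ is deterministic. Since $\by_N=\b1_B$ for the returned base $B$, and $F(\b1_B)=f(B)$ by definition of the multilinear extension, this reads $\E[f(B)]\ge F(\by)$, which is the assertion of the lemma. The main obstacle is precisely the first two steps above --- recognizing that each randomized swap is an unbiased ``splitting'' of $\by_k$ along a direction $\be_i-\be_j$ on which $F$ is convex, so that $F(\by_k)$ can only rise in expectation; the loop bookkeeping for feasibility and termination is routine and already carried out in \cite{CCPV07,CCPV09}.
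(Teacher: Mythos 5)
Your proof is correct and is exactly the standard submartingale argument for randomized pipage rounding that the paper itself relies on by citation to \cite{CCPV07,CCPV09} (adding only the observation that monotonicity is never used): the key step --- choosing $p$ so that $\by_k$ is the conditional mean of $\by_{k+1}$, combined with convexity of $F$ along $\be_i-\be_j$ from $\mixdiff{F}{x_i}{x_j}\le 0$ --- is right. One trivial slip in the bookkeeping you correctly flag as routine: when \textbf{HitConstraint} returns $A=\{j\}$ (the case $y_j<\delta$), one has $i\in T\setminus A^+$ rather than $j\in T\setminus A^+$, but $T\cap A^{\pm}$ is a proper subset of $T$ in either case, so termination is unaffected.
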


Monotonicity is used in \cite{CCPV07} only to argue that a fractional solution
can be assumed to lie in the base polytope without loss of generality.
Therefore, if we are working with the base polytope (as in
 Section~\ref{section:submod-bases}), we can use the pipage rounding technique
without any modification.

If we are working with nonmonotone submodular functions and the matroid polytope,
we have to do some additional adjustments to make sure that we do not lose
anything when rounding a fractional solution. We proceed as follows.
The following procedure takes a point $\bx \in P(\cM)$ and while
there is a fractional coordinate, it either pushes it to its maximum possible value,
or makes it zero and removes it from the problem.

\vspace{-10pt}
\begin{tabbing}
\ \ \ \= \ \ \= \ \ \= \ \ \= \ \ \= \ \ \  \\
{\em Algorithm} {\bf Adjust}($\cM,\bx$): \\
\> While ($\bx \notin B(\cM)$) do \\
\> \> If (there is $i$ and $\delta>0$ such that $\bx + \delta \be_i \in P(\cM)$) do \\
\> \> \> Let $x_{max} = x_i + \max \{ \delta: \bx+\delta \be_i \in P(\cM) \}$; \\
\> \> \> Let $p = x_i / x_{max}$; \\
\> \> \> With probability $p$, ~$\{ x_i \leftarrow x_{max} \}$;\\
\> \> \> ~~~~~~~~~~~~~~~~~~~Else ~~$\{ x_i \leftarrow 0 \}$;\\
\> \> EndIf \\
\> \> If (there is $i$ such that $x_i = 0$) do \\
\> \> \> Delete $i$ from $\cM$ and remove the $i$-coordinate from $x$. \\
\> EndWhile \\
\> Output $(\cM,\bx)$.
\end{tabbing}

\begin{lemma}
\label{lemma:adjust}
Given $\bx \in P(\cM)$ and a submodular function $f(S)$ with its extension $F(\bx)$,
the procedure {\bf Adjust}($\cM,\bx$) yields a restricted matroid $\cM'$ and a point
$\by \in B(\cM')$ such that $\E[F(\by)] \geq F(\bx)$.
\end{lemma}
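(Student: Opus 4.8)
The plan is to analyze one iteration of the \textbf{Adjust} procedure and show it does not decrease $\E[F(\bx)]$, then conclude by induction on the number of fractional coordinates. Fix a fractional coordinate $i$ with $\bx + \delta \be_i \in P(\cM)$ for some $\delta > 0$. Let $x_{max}$ be the largest value to which $x_i$ can be raised while staying in $P(\cM)$ (with the other coordinates frozen), and let $p = x_i / x_{max}$. The procedure replaces $\bx$ by $\by^+$ (with $i$-coordinate $x_{max}$) with probability $p$, and by $\by^-$ (with $i$-coordinate $0$) with probability $1-p$. Note $x_i = p \cdot x_{max} + (1-p) \cdot 0$, so $\bx = p \by^+ + (1-p) \by^-$ as vectors. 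The key point is that $F$ restricted to a line in direction $\be_i$ is \emph{affine} (since $F$ is multilinear, it is linear in each coordinate separately). Therefore $F(\bx) = p\, F(\by^+) + (1-p)\, F(\by^-) = \E[F(\text{new point})]$, so the expected value is exactly preserved in this step. (If instead no coordinate can be increased, then already $\bx \in B(\cM)$ up to removal of zero coordinates, and the loop is essentially done.)

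Next I would handle the bookkeeping step: when some $x_i = 0$, we delete $i$ from $\cM$ and drop the $i$-coordinate. If $\cM' = \cM \setminus i$ and $\bx'$ is $\bx$ with the $i$-coordinate (which equals $0$) removed, then $F'(\bx') = F(\bx)$ where $F'$ is the multilinear extension of $f$ restricted to $2^{X \setminus i}$: indeed, $R(\bx)$ never contains $i$, so the distributions of the sampled sets agree, and $f$ restricted to subsets not containing $i$ is still submodular. Also $P(\cM')$ is the projection of $\{ \bz \in P(\cM) : z_i = 0 \}$ onto the remaining coordinates, so feasibility is preserved. Thus this step changes neither the ambient polytope structure (for the purpose of the argument) nor the value of $F$.

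To finish, observe that each iteration of the outer \texttt{While} loop strictly decreases either the number of fractional coordinates or the dimension of the ground set (the ``push to $x_{max}$'' case either makes a new tight constraint, eventually forcing the point into $B(\cM)$, or zeros a coordinate; the ``set to $0$'' case zeros a coordinate). Hence the procedure terminates, and on termination $\bx \notin B(\cM)$ fails, i.e.\ $\by \in B(\cM')$ for the final restricted matroid $\cM'$. Composing the equalities (and one-step expectation preservation) over all iterations via the tower property of conditional expectation gives $\E[F'(\by)] = F(\bx)$, which in particular yields $\E[F'(\by)] \geq F(\bx)$ as claimed. (One should remark that $F'$ here is the multilinear extension on the restricted ground set; with slight abuse of notation we write $F$ for it, consistent with the statement.)

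The main obstacle, and the only place requiring care, is the claim that $F$ is affine along a coordinate direction \emph{within} $P(\cM)$ even though the feasible segment $[\by^-, \by^+]$ may have $\bx$ in its interior: this needs that both endpoints are genuinely in $P(\cM)$, which holds by construction of $x_{max}$ (the maximum is attained since $P(\cM)$ is compact) and because $\bx + \delta \be_i \in P(\cM)$ for the starting $\delta > 0$ together with $\bx \in P(\cM)$ gives $\by^- \in P(\cM)$ by convexity. A secondary subtlety is ensuring that the termination argument is airtight — that repeatedly pushing coordinates to their maxima, interleaved with deletions, cannot cycle; this follows because the push step, when it does not zero a coordinate, makes at least one previously-slack rank inequality tight and such tightness is preserved thereafter, bounding the number of pushes by the number of constraints. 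Everything else is routine multilinearity and matroid-polytope facts.
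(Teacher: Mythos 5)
Your proposal is correct and follows essentially the same route as the paper: the key step in both is that the multilinear extension is linear in each coordinate, so the random choice between $\by^+$ and $\by^-$ with probability $p = x_i/x_{max}$ makes the process a martingale preserving $\E[F]$, combined with an $O(n^2)$ termination count (at most $n$ pushes between consecutive zeroings/deletions). The paper additionally cites the fact that any $\bx \in P(\cM)\setminus B(\cM)$ is dominated by a point of $B(\cM)$ (Schrijver, Cor.~40.2h) to justify that an increasable coordinate always exists, which is the one place your argument is slightly more informal, but the substance is identical.
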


\begin{proof}
If $\bx \in P(\cM)$, there is always a point $\bz \in B(\cM)$ dominating $\bx$
in every coordinate (see Corollary 40.2h in \cite{Schrijver}).
Therefore, if $\bx \notin B(\cM)$, there is a coordinate which can be increased
while staying in $P(\cM)$. In each step, when we choose randomly between increasing
and decreasing $x_i$, the objective function is linear and the expectation
of $F(\bx)$ is preserved. Hence, the process forms a martingale. At the end,
$\E[F(\by)]$ is equal to the initial value $F(\bx)$.

As long as we increase variables, each variable is increased to its
maximum value and cannot be increased twice. Therefore, after at most $n$ steps
we either reach a point in $B(\cM)$ or make some variable $x_i$ equal to $0$.
A variable equal to $0$ is removed, which can be repeated at most $n$ times.
Hence, the process terminates in $O(n^2)$ time.
\end{proof}

For a given $\bx \in P(\cM)$, we run $(\cM',\by):=\mbox{\bf Adjust}(\cM,\bx)$,
followed by {\bf PipageRound} ($\cM',\by$). The outcome is a base in the restricted
matroid where some elements have been deleted, i.e. an independent set
in the original matroid. We call this the {\em extended pipage rounding} procedure.
Lemma~\ref{lemma:adjust} together with Lemma~\ref{lemma:base-pipage} gives
the following.

\begin{lemma}
\label{lemma:pipage}
The extended pipage rounding procedure,
given a membership oracle for a matroid $\cM = (X,\cI)$, a value oracle
for a submodular function $f:2^X \rightarrow \RR_+$, and $\bx$ in the matroid
polytope $P(\cM)$, yields a random independent set $S \in \cI$ of value
$\E[f(S)] \geq F(\bx)$.
\end{lemma}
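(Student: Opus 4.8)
The statement is a straightforward composition of the two preceding lemmas, so the proof I would write simply glues them together and checks the bookkeeping. Given $\bx \in P(\cM)$, the extended pipage rounding procedure first runs $(\cM',\by) := \mbox{\bf Adjust}(\cM,\bx)$ and then runs $\mbox{\bf PipageRound}(\cM',\by)$. By Lemma~\ref{lemma:adjust}, the first stage produces a (random) restricted matroid $\cM'$ obtained from $\cM$ by deleting some elements, together with a point $\by \in B(\cM')$, in such a way that $\E[F(\by)] \geq F(\bx)$, where the expectation is over the internal coin flips of {\bf Adjust}. Here $F$ on the restricted ground set is the multilinear extension of the restriction of $f$; since the deleted coordinates are exactly those set to $0$, this is consistent with the value of $F$ on the original ground set, so no value is created or lost by the restriction.

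Next I would condition on the output $(\cM',\by)$ of the first stage and apply Lemma~\ref{lemma:base-pipage} to $\cM'$ and $\by \in B(\cM')$: the second stage returns a random base $B$ of $\cM'$ with $\E[f(B) \mid \cM',\by] \geq F(\by)$. Taking expectations over the randomness of {\bf Adjust} and using the tower property gives
$$ \E[f(B)] = \E\big[\,\E[f(B) \mid \cM',\by]\,\big] \geq \E[F(\by)] \geq F(\bx). $$
Finally, a base of $\cM'$ is a maximal independent set of a matroid obtained from $\cM$ by deleting elements, hence it is an independent set of $\cM$; so $S := B \in \cI$, and $\E[f(S)] \geq F(\bx)$, as claimed. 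I would also remark (or it follows from the bounds in Lemma~\ref{lemma:adjust} and the description of {\bf PipageRound}) that both stages terminate in polynomially many steps, so the overall procedure is efficient given the two oracles.

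\textbf{Main obstacle.} There is no real mathematical difficulty; the only thing that needs care is the ground-set bookkeeping in passing between $\cM$ and the restricted matroid $\cM'$ — making sure that "the multilinear extension $F$" refers to the correct function at each stage and that evaluating $F$ after deleting a zero coordinate agrees with evaluating it before, and that a base of $\cM'$ really is an independent set of $\cM$. Once that is pinned down, the inequality is just the chaining of $\E[f(B)\mid\by] \geq F(\by)$ with $\E[F(\by)] \geq F(\bx)$ via linearity of expectation.
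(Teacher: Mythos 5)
Your proof is correct and follows exactly the paper's argument: the paper likewise obtains Lemma~\ref{lemma:pipage} by composing Lemma~\ref{lemma:adjust} with Lemma~\ref{lemma:base-pipage}, noting that a base of the restricted matroid is an independent set of $\cM$. Your write-up merely makes explicit the conditioning/tower-property step that the paper leaves implicit.
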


\section{Solutions of nontrivial value}
\label{app:base-value}

In this section, we prove that solutions of nontrivial value always
exist for the problems of maximizing a nonmonotone submodular function
subject to a matroid independence or matroid base constraint.
We can assume that our matroid does not contain any loops
(which can be removed beforehand), and in the case of a matroid base
constraint no co-loops either (since co-loops participate in every solution
and can be contracted). The bounds that we prove here are needed to ensure
that our sampling errors can be made negligible compared to the value
of the optimum.

\begin{lemma}
\label{lemma:solution-value}
Let $\cM = (X, \cI)$ be a matroid without loops
(elements which are never in an independent set), $|X|=n$,
and let $f:2^X \rightarrow \RR_+$ be a (nonmonotone) submodular function
such that $\max_{S \subseteq X} f(S) = M$. Let $OPT = \max \{f(I): I \in \cI\}$.
Then
$$ OPT \geq \frac{1}{n} M.$$
\end{lemma}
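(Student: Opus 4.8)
The plan is to reduce the statement to an elementary ``subadditivity over singletons'' bound for submodular functions, using the matroid structure only through the absence of loops.

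First I would fix a set $S^* \subseteq X$ achieving $f(S^*) = M$. If $S^* = \emptyset$ there is nothing to prove, since $\emptyset \in \cI$ gives $OPT \geq f(\emptyset) = M \geq \frac{1}{n} M$; so from now on assume $m := |S^*| \geq 1$. Next, I would apply submodularity in its telescoping form: writing $S^* = \{s_1,\ldots,s_m\}$ and $S_i = \{s_1,\ldots,s_i\}$,
$$ f(S^*) - f(\emptyset) \;=\; \sum_{i=1}^m f_{S_{i-1}}(s_i) \;\leq\; \sum_{i=1}^m f_{\emptyset}(s_i) \;=\; \sum_{j\in S^*}\bigl(f(\{j\}) - f(\emptyset)\bigr), $$
where the middle inequality is the diminishing-returns property applied to $\emptyset \subseteq S_{i-1}$. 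Rearranging gives $M = f(S^*) \leq (1-m)\,f(\emptyset) + \sum_{j\in S^*} f(\{j\})$. Since $f \geq 0$ and $m \geq 1$, the term $(1-m)\,f(\emptyset)$ is nonpositive, hence $M \leq \sum_{j\in S^*} f(\{j\}) \leq m\cdot\max_{j\in X} f(\{j\})$.

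Finally, I would invoke the no-loop hypothesis: every singleton $\{j\}$ is independent, so $f(\{j\}) \leq OPT$ for all $j$. Together with $m \leq n$ this yields $M \leq n\cdot OPT$, i.e. $OPT \geq \frac{1}{n} M$, as desired. The only ``obstacle'' is a matter of bookkeeping — one must carry the $f(\emptyset)$ term through the telescoping estimate and discard it in the favorable direction using $f(\emptyset)\geq 0$ and $m\geq 1$, rather than dropping it prematurely; there is no genuine difficulty, since the matroid enters only through the triviality that loopless matroids have every singleton feasible.
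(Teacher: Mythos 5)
Your proof is correct and follows essentially the same route as the paper: bound $M=f(S^*)$ by the sum of singleton values via submodularity (the paper states $f(S)\leq\sum_{j\in S}f(\{j\})$ directly; you derive it by telescoping and carefully discarding the $(1-m)f(\emptyset)$ term), then use looplessness to conclude each singleton is independent and hence worth at most $OPT$. The only cosmetic difference is that you handle $S^*=\emptyset$ as a separate case while the paper notes $f(\emptyset)\leq f(I^*)$ alongside the nonempty bound.
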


\begin{proof}
We consider only solutions of size $|I| \leq 1$, which are independent
because our matroid does not contain any loops.
Let $f(I^*) = \max \{f(I): |I| \leq 1 \}$. By submodularity and nonnegativity
of $f$, the value of any nonempty set can be estimated as
$$ f(S) \leq \sum_{j \in S} f(\{j\}) \leq n \, f(I^*).$$
By definition, we also have $f(\emptyset) \leq f(I^*)$.
Therefore, $M = \max f(S) \leq n \, f(I^*)$.
\end{proof}

\begin{lemma}
\label{lemma:base-value}
Let $\cM = (X, \cI)$ be a matroid without loops (elements which are never in a base)
and coloops (elements which are in every base), $|X|=n$ and let $\cB$
denote the bases of $\cM$. Let $f:2^X \rightarrow \RR_+$ be a (nonmonotone)
submodular function such that $\max_{S \subseteq X} f(S) = M$. Let $OPT = \max \{f(B): B \in \cB\}$.
Then
$$ OPT \geq \frac{1}{n^2} M.$$
\end{lemma}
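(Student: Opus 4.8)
The plan is to reduce the claim to a lower bound on the multilinear extension over the base polytope $B(\cM)$, and then extract an honest base by pipage rounding. Precisely, I would (i) exhibit a point $\bw \in B(\cM)$ all of whose coordinates lie in $[\tfrac1n, 1-\tfrac1n]$, (ii) show $F(\bw) \ge \tfrac1{n^2} M$ by a threshold‑set argument, and (iii) apply Lemma~\ref{lemma:base-pipage} to turn $\bw$ into a base of value at least $F(\bw)$.

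For step (i), I would use coloop‑freeness through fractional base packing. A coloop‑free matroid satisfies $r(X) - r(X\setminus T) < |T|$ for every nonempty $T$: removing the elements of $T$ one at a time, the total rank drop equals $r(X)-r(X\setminus T)$ and each single removal drops the rank by $0$ or $1$, so if the total equalled $|T|$ then every step would drop by exactly $1$, and in particular the first removed element would be a coloop. By the base‑packing theorem (Schrijver, Corollary 42.7a) this is exactly the condition for the fractional base‑packing number to be at least $\tfrac{n}{n-1}$; rescaling an optimal fractional packing to total weight $1$ yields $\bw \in B(\cM)$ with $w_e \le \tfrac{n-1}{n}$ for all $e$. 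Applying the same fact to the dual matroid $\cM^*$ (again loop‑ and coloop‑free) and complementing yields a point of $B(\cM)$ with every coordinate at least $\tfrac1n$. Merging these two one‑sided constructions into a single point with $w_e \in [\tfrac1n, 1-\tfrac1n]$ is the delicate part; a crude average of the two has margin only $\tfrac1{2n}$, which would still give $OPT = \Omega(M/n^2)$ — enough to make sampling errors negligible — but getting the stated constant requires a little more care.

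For step (ii), let $S^*$ be a set with $f(S^*) = M$, and apply Lemma~\ref{lemma:submod-split} to the partition $X = S^* \cup \overline{S^*}$, with two independent threshold sets $T_1,T_2$:
\[ F(\bw) \ge \E\big[\, f\big( (T_1(\bw) \cap S^*) \cup (T_2(\bw) \cap \overline{S^*}) \big) \,\big]. \]
With probability at least $\min_{e \in S^*} w_e \ge \tfrac1n$ the threshold for $T_1$ falls below every coordinate of $\bw$ on $S^*$, so $T_1(\bw) \cap S^* = S^*$; independently, with probability at least $1 - \max_{e \in \overline{S^*}} w_e \ge \tfrac1n$ the threshold for $T_2$ exceeds every coordinate of $\bw$ on $\overline{S^*}$, so $T_2(\bw) \cap \overline{S^*} = \emptyset$. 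On this joint event, of probability at least $\tfrac1{n^2}$, the sampled set is exactly $S^*$, contributing $\tfrac1{n^2} f(S^*) = \tfrac1{n^2} M$, while all other outcomes contribute nonnegatively since $f \ge 0$; hence $F(\bw) \ge \tfrac1{n^2} M$. The degenerate cases $S^* = \emptyset$ and $S^* = X$ are handled identically using a single threshold set.

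For step (iii), Lemma~\ref{lemma:base-pipage} applied to $\bw \in B(\cM)$ returns a random base $B$ with $\E[f(B)] \ge F(\bw) \ge \tfrac1{n^2} M$, so some base attains at least this value and $OPT \ge \tfrac1{n^2} M$. The main obstacle is the construction in step (i) of a single sufficiently balanced point in the base polytope; once that is in hand, steps (ii) and (iii) are a routine assembly of the extension inequalities and rounding already proved in the appendix.
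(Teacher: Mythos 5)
Your proposal takes a genuinely different route from the paper. The paper's proof is entirely elementary: it builds a base greedily by maximum marginal value, notes $f(\{b_1\}) \geq \frac{1}{n}M$, and then uses coloop-freeness only once, to exhibit an alternative last element $b'_k$ whose addition would drive the value below $0$ if $f(B) < \frac{1}{n^2}M$ --- contradicting nonnegativity. No polytopes, no extensions, no rounding. Your argument instead reduces the lemma to exhibiting a well-spread point of $B(\cM)$ and then invokes Lemma~\ref{lemma:submod-split} and Lemma~\ref{lemma:base-pipage}; steps (ii) and (iii) are correct as written, and conceptually this is a nice illustration of the fractional machinery the paper develops. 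But it is heavier than necessary, and it has one real gap.

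The gap is exactly the one you flag in step (i): you never actually produce a single point ${\bf w} \in B(\cM)$ with ${\bf w} \in [\frac1n, 1-\frac1n]^X$. The two one-sided constructions (via fractional base packing in $\cM$ and in $\cM^*$) are each correct, but averaging them only gives margins $\frac{1}{2n}$ on each side, hence only $OPT \geq \frac{1}{4n^2}M$, which is weaker than the stated bound. As written, the lemma is therefore not proved. The gap is closable: the nonemptiness criterion for $B(\cM) \cap [\bl,\bu]$ (for every $A \subseteq X$, $l(A) \leq r_\cM(A)$ and $r_\cM(X) - r_\cM(A) \leq u(X \setminus A)$) is satisfied for $\bl = \frac1n \b1$, $\bu = (1-\frac1n)\b1$, since loop-freeness gives $r_\cM(A) \geq 1 \geq |A|/n$ for nonempty $A$, and your own observation $r_\cM(X) - r_\cM(X \setminus T) \leq |T|-1 \leq (1-\frac1n)|T|$ handles the upper bounds. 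If you supply that citation (or a direct proof), your argument delivers the stated constant; otherwise you should either prove the box-intersection fact or accept the weaker $\Omega(M/n^2)$ bound, which does suffice for the paper's application (controlling sampling error) but is not the lemma as stated. The paper's greedy-exchange proof avoids all of this.
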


\begin{proof}
Let us find $B = \{b_1,\ldots,b_k \}$ greedily, by including in each step
an element which has maximum possible marginal value
 $f_{\{b_1,\ldots,b_\ell\}}(b_{\ell+1})$
among all elements that preserve independence.
We claim that $f(B) \geq \frac{1}{n^2} M$.

The first element $b_1$ is the element of maximum value $\max_{i \in X} f(\{i\})$
(because there are no loops, all elements are eligible).
Let $M = f(S^*)$. By submodularity,
$$ f(S^*) \leq \sum_{i \in S^*} f(\{i\}) \leq n f(\{b_1\}).$$
Therefore, $f(\{b_1\}) \geq \frac{1}{n} M$. However, additional elements
can decrease this value.
Consider the last element $b_k$ that we added to $B$ and assume that
$f(B) < \frac{1}{n^2} M$. Due to our greedy procedure and submodularity,
the marginal values of successive elements keep decreasing,
and therefore
$$f_{B\setminus \{b_k\}}(b_k) \leq \frac{1}{n-1} (f(B)-f(\{b_1\}))
 < -\frac{1}{n^2} M.$$
Since there are no coloops in $\cM$, adding $b_k$ was not our only choice
 - otherwise, we would have $\mbox{rank}(X \setminus \{b_k\}) < \mbox{rank}(X)$
which means exactly that $b_k$ is a coloop. So there is another element
$b'_k$ that would form a base $\{b_1,\ldots,b_{k-1},b'_k\}$.
The reason we did not select $b'_k$
was that $f_{B \setminus \{b_k\}}(b'_k) \leq f_{B \setminus \{b_k\}}(b_k)
 < -\frac{1}{n^2} M$.
By submodularity, if we add both elements, we obtain a set
$\tilde{B} = \{b_1,\ldots,b_k,b'_k\}$ such that
$$ f(\tilde{B}) = f(B) + f_B(b'_k)  \leq f(B) + f_{B \setminus \{b_k\}}(b'_k)
 < \frac{1}{n^2} M - \frac{1}{n^2} M = 0 $$
which is a contradiction with the nonnegativity of $f$.
\end{proof}

We remark that this bound is tight up to a constant factor.
Consider a complete bipartite directed graph
on $X = X_1 \cup X_2$, $|X_1| = |X_2| = n/2$. The submodular function $f(S)$ is
the directed cut function, expressing the number of arcs from $S$ to $\bar{S}$.
The matroid $\cM$ has bases that contain exactly one element from $X_1$ and
$n-1$ elements from $X_2$. It is easy to see that any base $B$ has value
$f(B) = 1$, while $f(X_1) = n^2/4$.

\end{document}